\DeclareMathOperator{\dir}{dir}
\newcommand{\opt}{\mathrm{opt}}
\newcommand{\Ecw}{E^\mathit{cw}}
\newcommand{\Eccw}{E^\mathit{ccw}}
\DeclareMathOperator{\sect}{sec}
\DeclareMathOperator{\dist}{dist}
\newtheorem{lemma}{Lemma}
\newcommand{\billy}[1]{\multirow{-5}{8pt}{\rotatebox[origin=c]{90}{#1}}}
\newcommand{\centhead}[1]{\multicolumn{1}{c}{#1}}
\newcommand{\centheadbar}[1]{\multicolumn{1}{c|}{#1}}
\newcommand{\centheadbartwo}[1]{\multicolumn{1}{c!{\vrule width 1pt}}{#1}}
\newcommand{\bendname}{bends}
\newcommand{\sectdevname}{sector deviation}
\newcommand{\subper}{\rotatebox[origin=c]{180}{$\Lsh$} per edge}
\newcommand{\distpername}{distortion per edge} 
\newcommand{\runtimename}{time in seconds}
\newcommand{\ChangeRT}[1]{\noalign{\hrule height #1}}
\definecolor{kitgreen}{rgb}{0, 0.588, 0.509}
\begin{document}
\title{Towards Data-Driven Multilinear Metro Maps}
\author{Soeren Nickel}
\author{Martin N\"ollenburg}
\affil{Algorithms and Complexity Group, TU Wien, Vienna, Austria}
\date{}
\maketitle              %
\begin{abstract}
Traditionally, most schematic metro maps in practice as well as metro map layout algorithms adhere to an \emph{octolinear} layout style with all paths composed of horizontal, vertical, and $45^\circ$-diagonal edges.
Despite growing interest in more general \emph{multilinear} metro maps, generic algorithms to draw metro maps based on a system of $k \ge 2$ not necessarily equidistant slopes have not been investigated thoroughly.
In this paper we present and implement an adaptation of the octolinear mixed-integer linear programming approach of N\"ollenburg and Wolff (2011) that can draw metro maps schematized to any set $\mathcal{C}$ of arbitrary orientations. 
We further present a data-driven approach to determine a suitable set $\mathcal{C}$ by either detecting the best rotation of an equidistant orientation system or by clustering the input edge orientations using a $k$-means algorithm.
We demonstrate the new possibilities of our method using several real-world examples. 
\end{abstract}
\section{Introduction}

Metro maps are ubiquitous schematic network diagrams that aid public transit passengers in orientation and route planning in almost all types of urban public transit systems worldwide.
Since Henry Beck's classic schematic London Tube Map of 1933, metro maps have developed a common visual language and adopted similar design principles~\cite{wntrn-stlfdmhp-20}.
{Designing professional metro maps is still mostly a manual task today, even if cartographers and graphic designers are supported by digital drawing tools.}
Algorithms for automated layout of metro maps have received substantial interest in the graph drawing and network visualization communities as well as in cartography and geovisualization over the last 20 years~\cite{n-samlm-14,wntrn-stlfdmhp-20,w-dsms-07}. 
The vast majority of metro map layout algorithms focus on so-called \emph{octolinear} (sometimes also called \emph{octilinear}) metro maps, which are limited to Henry Beck's classical and since then widely adopted 45$^\circ$-angular grid of line orientations~\cite{g-mbum-94}.
However, not all metro maps found in practice are strictly octolinear.
There is empirical evidence from usability studies that the best set of line orientations for drawing a metro map depends on different aspects of the respective transit network, and it may not always be an octolinear one~\cite{rnlhh-ovsmpmuitovas-13,rgl-pvpidbomsrusmmitd-16}.

In this paper we present an algorithmic approach using global optimization for computing (unlabeled) metro maps in the more flexible \emph{$k$-linearity} setting, where each edge in the drawing must be parallel to one of $k\ge 2$ equidistant orientations whose pairwise angles are multiples of $360^\circ / 2k$. 
In this sense, a $k$-linear map for $k=4$ corresponds to the traditional octolinear setting.
{In fact, most octolinear maps use a horizontally aligned orientation system, i.e., a system that includes a horizontal orientation.}
It is possible though, for some transit networks and city geometries, that a rotation of the orientation system by an angular offset yields a more topographically accurate metro map layout. 
Hence we also consider such \emph{rotated} $k$-linear maps.
In addition to equiangular $k$-linear orientation systems, we further study irregular \emph{multilinear} (or \emph{$\mathcal C$-oriented}) maps~\cite{rgl-pvpidbomsrusmmitd-16}, in which the edges are parallel to any given, not necessarily equiangular set $\mathcal C$ of orientations.
There exist a number of metro map layout algorithms (see~\cite{n-samlm-14,wntn-scsnmtci-19,wntrn-stlfdmhp-20} for  comprehensive surveys) that would technically permit an adaptation to a different underlying angular grid, yet most previous papers optimize layouts in the well-known octolinear setting only and do not discuss extensions to different linearities.
A few algorithms for generic multilinear or $k$-linear layouts exist~\cite{n-lswro-99, mg-psmml-07,dgnpr-dsep-13,bmrs-assps-16}, but they are aimed at paths or polygons rather than entire metro maps.
In the field of graph drawing many algorithms for planar orthogonal network layouts with $k=2$ as well as for polyline drawings with completely unrestricted slopes are known~\cite{dg-popda-13}, but they do not generalize to $k$-linearity and multilinearity.

\paragraph{Contributions.} 
We first present two efficient approaches for deriving suitable, data-dependent linearity systems (rotated $k$-linear and irregular multilinear) by minimizing the angular distortion of the input edge slopes (Section~\ref{sec:orientation}).
We then adapt the octolinear mixed-integer linear programming (MIP) model of N\"ollenburg and Wolff~\cite{nw-dlhqm-11} by generalizing their mathematical layout constraints to $k$-linearity and multilinearity (Section~\ref{sec:ilp}).
The main benefit of this model in comparison to other approaches is that it defines sets of hard and soft constraints and guarantees that the computed layout satisfies all the hard constraints, while the soft constraints are globally optimized.
The trade-off for providing such quality guarantees from a global optimization technique is that computation time is typically higher compared to other methods~\cite{wntn-scsnmtci-19}.
By modeling fundamental metro map properties such as strict adherence to the given linearity system and topological correctness as hard constraints, we obtain layouts that satisfy these layout requirements strictly. 
The soft constraints optimize for line straightness, compactness, and topographicity~\cite{r-wytesd-14}, i.e., low topographical distortion.
Our modifications yield a flexible MIP model, whose complexity measured by the number of variables and constraints grows linearly with the number of orientations $k$. %
We finally demonstrate the effect of horizontally aligned and rotated $k$-linear and multilinear orientation systems by providing sample layouts of four metro networks and evaluating the resulting number of bends and angular distortions for typical small values of $k=3,4,5$ (Section~\ref{sec:examples}).

\section{Preliminaries}\label{sec:prelims}

We reuse the notation of N\"ollenburg and Wolff~\cite{nw-dlhqm-11}.
The input is represented as an embedded planar\footnote{For non-planar metro graphs we temporarily introduce a dummy vertex for each edge crossing, which preserves the crossing in the output layout.} metro graph $G=(V, E)$ with $n$ vertices and $m$ edges. Each vertex $v \in V$ represents a metro station with x- and y-coordinates and each edge $e=(u,v) \in E$ is a segment linking vertices $u$ and $v$ that represents a physical rail connection between them.
Let $\mathcal{L}$ be a \emph{line cover} of $G$, i.e., a set of paths in $G$ such that each edge $e \in E$ belongs to at least one path  $L \in \mathcal{L}$. An element $L \in \mathcal{L}$ is
called a \emph{line} and corresponds to a metro line in the underlying transit network.
Note that multiple lines can pass through the same edge.
Finally, $k\ge 2$ is an input parameter that defines the number of available edge orientations in the orientation system $\mathcal{C}$. 
The set $\mathcal C$ and the parameter $k$ can be part of the input or they can be derived automatically from the input geometry, see Section~\ref{sec:orientation}. 
Figure~\ref{fig:fig_otherdimcases} shows some examples of orientation systems.
Since every orientation can be used in two directions this yields $2k$ available drawing directions. 
Let $\mathcal{K}$ be this set of $2k$ directions. 
We note that every edge is assigned exclusively to an outgoing direction of its incident vertices, which implies that the maximum degree of $G$ can be at most $2k$.
Thus the maximum degree in $G$ provides a lower bound on the required number of orientations.

\begin{figure}[tbp]
	\centering
	\hfill
	\subfloat[$k=2$, irregular\label{fig:2irr}]{\includegraphics[width=0.24\textwidth]{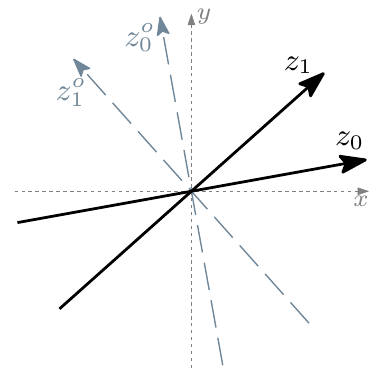}}
	\hfill
	\subfloat[$k=3$, regular aligned\label{fig:3alig}]{\includegraphics[width=0.24\textwidth]{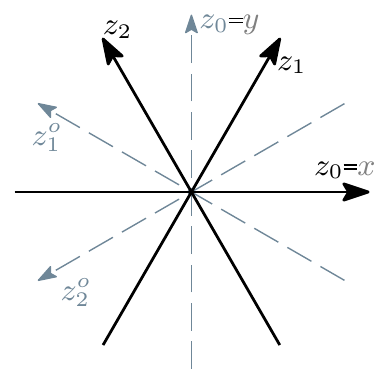}}
	\hfill
	\subfloat[$k=4$, regular rotated\label{fig:4reg}]{\includegraphics[width=0.24\textwidth]{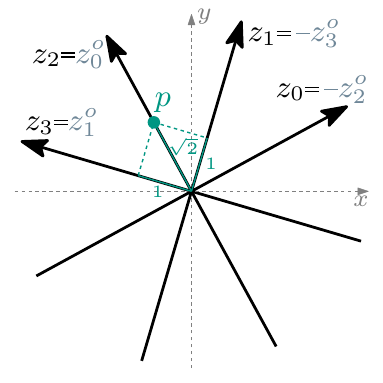}}
	\hfill
	\subfloat[$k=5$, regular rotated\label{fig:5reg}]{\includegraphics[width=0.24\textwidth]{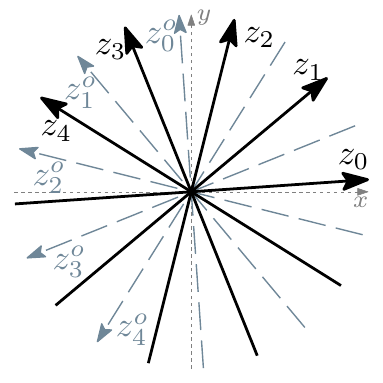}}
	\hfill\null
	\caption{Coordinate axes for different orientation systems. (c) illustrates a point $p$ with the redundant coordinates \textcolor{kitgreen}{$p = (0, 1, \sqrt2, 1)$}.}
	\label{fig:fig_otherdimcases}
\end{figure}

The general algorithmic metro map layout problem studied in this paper is to find a \emph{$\mathcal C$-oriented schematic layout} of $(G,\mathcal L)$, i.e., a graph layout that preserves the input topology, uses only edge directions parallel to an orientation from $\mathcal C$, and optimizes a weighted layout quality function (here composed of line straightness, topographicity, and compactness). 
If $\mathcal C$ corresponds to a $k$-linear orientation system, we also call the layout $k$-linear instead of $\mathcal C$-oriented; otherwise it can alternatively be called multilinear.

\section{Orientation systems}
\label{sec:orientation}
A set of edge orientations (or an \emph{orientation system}) $\mathcal{C} = \{c_1, \dots, c_k\}$ is a set of $k$ angles (expressed in radian), %
where  $0 \le c_1 < \dots < c_k < \pi$. 
We distinguish three different kinds of possible edge orientation sets. 
An edge orientation set $\mathcal{C}$ is called \emph{regular} (or \emph{equiangular}) if the angles $\{c_1, \dots c_k\}$ divide the range $[c_1, c_1 + \pi)$ into $k$ parts of equal size $\pi/k$, i.e., $c_i - c_{i-1} = {\pi}/{k}$ for all $i \in \{2, \dots, k\}$. 
Otherwise we call $\mathcal{C}$ \emph{irregular}. 
The special case of a regular orientation system $\mathcal{C}$, in which $c_1 = 0$ is called \emph{aligned}. 
Note that a classical octolinear layout is based on the aligned orientation system $\mathcal C_o = \{0, \pi/4, \pi/2, 3\pi/4\}$.

Opposed to an aligned orientation system, which is fully specified by defining the number $k$ of orientations, we have more degree of freedom in a regular (non-aligned) and an irregular system. 
The next two sections describe how to derive a suitable system $\mathcal C$ from the geometric properties of the input data. 
The idea behind this approach is to better minimize the topographic distortion of the schematized edges compared to their input direction. 

We measure the distortion $\dist_G(\mathcal{C})$ of a system $\mathcal{C}$ with respect to a metro graph $G$ by summing up the difference in slope between each edge $e \in E$ (with slope $\gamma_e \pmod{\pi}$) and the angle $c \in \mathcal{C}$ which is closest to $\gamma_e$:

\[ \dist_G(\mathcal{C}) = \sum_{e \in E} \left( \min_{c \in \mathcal{C}} |c - \gamma_e| \right). \]

\subsection{Regular orientation systems}
Fixing a single angle in a regular orientation system $ \mathcal{C} $ fixes all other orientations. 
It is therefore sufficient to specify the first orientation $c_1 \in \mathcal C$. 
We denote by $\mathcal C_\opt$ a regular orientation system minimizing the distortion, i.e., $\dist_G(\mathcal C_\opt) \le \dist_G(\mathcal C)$ for any $k$-regular orientation system~$\mathcal C$. 
The next lemma will help us to find such a system efficiently.

\begin{lemma}
	\label{lem:dir_containment}
	For any integer $k$ and metro graph $G$ there is an optimal orientation system $\mathcal{C}$ with $\dist_G(\mathcal C) = \dist_G(\mathcal C_\opt)$, in which at least one orientation $ c \in \mathcal{C} $ is equal to the slope of an input edge.
\end{lemma}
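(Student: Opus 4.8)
The plan is to reduce the search for an optimal regular system to a one-dimensional problem and to exploit the piecewise-linear structure of the distortion. Since a regular system is completely determined by its first orientation $c_1$, and since shifting all orientations by $\pi/k$ reproduces the same set of directions modulo $\pi$, the distortion depends only on $t := c_1 \bmod (\pi/k)$. Writing $f_e(t) = \min_{c \in \mathcal C(t)} |c - \gamma_e|$ for the contribution of a single edge $e$, the quantity to minimize is
\[
 D(t) = \dist_G(\mathcal C(t)) = \sum_{e \in E} f_e(t), \qquad t \in [0, \pi/k).
\]

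First I would determine the shape of each $f_e$. Because the orientations of $\mathcal C(t)$ form an arithmetic progression of step $\pi/k$, the angular distance of the fixed slope $\gamma_e$ to the nearest orientation is a triangle wave in $t$ of period $\pi/k$: it is piecewise linear with slopes $\pm 1$, it attains its minimum value $0$ exactly when one orientation coincides with $\gamma_e$, and its maximum value $\pi/(2k)$ when $\gamma_e$ falls midway between two consecutive orientations. Consequently $D$ is continuous, $\pi/k$-periodic and piecewise linear. Its breakpoints are of exactly two kinds: \emph{zero-breakpoints}, where some $f_e$ vanishes (an orientation equals the edge slope $\gamma_e$) and the slope of $f_e$ jumps by $+2$, and \emph{peak-breakpoints}, where some $f_e$ is maximal and its slope jumps by $-2$. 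A value $t$ realizing a zero-breakpoint is precisely one in which an orientation equals the slope of an input edge, so it suffices to show that $D$ attains its minimum at such a~$t$.

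The core of the argument is then a slope-sign inspection at the boundary of the set of minimizers. Let $\opt = \min_t D(t)$ and let $M = \{t : D(t) = \opt\}$. If $D$ is constant, then every $t$ is optimal; in particular any zero-breakpoint (which exists as long as $E \neq \emptyset$) is optimal and we are done. Otherwise $M$ is a proper, nonempty, closed subset of the circle $[0,\pi/k)$, so it has a point $a$ whose left neighbourhood lies outside $M$. There $D$ strictly decreases into $a$ (its left slope is negative), whereas to the right of $a \in M$ we have $D \ge \opt = D(a)$, so its right slope is nonnegative. Thus the total slope of $D$ strictly increases at $a$. Since each peak-breakpoint lowers the slope by $2$ and each zero-breakpoint raises it by $2$, a net increase of $2(\#\text{zero} - \#\text{peak})>0$ forces $\#\text{zero}\ge 1$, i.e.\ at least one zero-breakpoint is located at $a$. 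Hence at $t=a$ some orientation coincides with an edge slope and $D(a) = \opt = \dist_G(\mathcal C_\opt)$, giving the optimal system claimed by the lemma.

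The step I expect to require the most care is the degenerate behaviour on flat pieces of $D$: when several edges share or mirror slopes, $D$ may be locally constant, so the minimizer set $M$ can be a union of intervals rather than isolated points, and one must argue at a genuine boundary point of $M$ (where the slope increases) rather than at an interior flat stretch or at a peak-breakpoint (where it would decrease). Separating the constant case from the non-constant case, as above, resolves this cleanly; the remaining observations about triangle waves and the $\pm 2$ slope jumps are routine.
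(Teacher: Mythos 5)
Your proof is correct, and while it rests on the same first-order phenomenon as the paper's, it packages it quite differently. The paper argues locally at an optimum: it rotates $\mathcal C_\opt$ by a small $\varepsilon$ in each direction, partitions the edges into those whose distortion grows or shrinks ($\Ecw_+,\Ecw_-,\Eccw_+,\Eccw_-$), and concludes by counting that either some orientation already coincides with an edge slope, or both rotations preserve the distortion, in which case one keeps rotating until an edge slope is hit (a bisector crossing would make the next infinitesimal rotation strictly improving, contradicting optimality). You instead make explicit the object that is only implicit there: the distortion $D(t)$ as a function of the offset $t=c_1 \bmod (\pi/k)$, decomposed into triangle waves, whose one-sided slopes are exactly the paper's counts $|\Ecw_+|-|\Ecw_-|$ and whose breakpoints split into slope-raising (orientation meets an edge slope) and slope-lowering (bisector meets an edge slope) events; a boundary point of the minimizer set must then carry a slope-raising breakpoint. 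Your formulation handles the degenerate cases more cleanly --- the globally constant $D$ and the flat stretches of minimizers that the paper dispatches with the informal ``continue the rotation until one of two things will happen'' --- and it makes the $O(|E|)$ candidate set, and even a sweep over all $O(|E|)$ breakpoints, immediately visible; the paper's version is shorter and avoids setting up periodicity and the breakpoint taxonomy. Two small points you should make explicit when writing this up: a point $a\in M$ whose left neighbourhood avoids $M$ exists because the complement of the closed, nonempty, proper set $M$ is a nonempty union of open arcs of the circle whose endpoints lie in $M$; and the left slope at such an $a$ is strictly negative because $D$ is linear and strictly larger than $D(a)$ immediately to the left of $a$. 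Both are exactly as routine as you predict, and the angular distance $|c-\gamma_e|$ should, as in the paper, be read circularly modulo $\pi$ so that each $f_e$ is genuinely $\pi/k$-periodic.
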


\begin{proof}
	Let $\mathcal C_\opt$ be a minimum-distortion regular orientation system. 
	For each edge $e \in E$ we define $c_\opt(e)$ to be the orientation $c \in \mathcal C_\opt$ that minimizes the distortion $|c - \gamma_e|$ of $e$.	
	Let $\varepsilon$ be a sufficiently small angle such that a rotation of $ \mathcal{C}_\opt $ by $\varepsilon$ in clockwise (resp., counterclockwise) direction results in an orientation system $ \mathcal{C}^{cw}$ (resp., $\mathcal{C}^{ccw} $) with $c^{cw}(e) = c_\opt(e) - \varepsilon$ and $c^{ccw}(e) = c_\opt(e) + \varepsilon$. 
	This implies that in these rotations, every edge $e$ moves (in slope) either strictly closer to of strictly farther away from $c_\opt(e)$. 
	Let $\Ecw_+$ and $\Ecw_-$ be the sets of edges that increase and decrease, respectively, their distance to $c_\opt(e)$ during the clockwise rotation of the orientation system by $\varepsilon$.
	Analogously, we define $\Eccw_+$ and $\Eccw_-$ for a counterclockwise rotation by $\varepsilon$.
	Note that if there is an edge that is in $\Ecw_+$ and $\Eccw_+ $ simultaneously, its slope coincides with a direction in the orientation system and we are done. 
	So assume that every edge is either in $\Ecw_-$ or in $\Eccw_-$ and therefore $|\Ecw_-| + |\Eccw_-| \geq |\Ecw_+| + |\Eccw_+|$. 
	If $|\Ecw_+| < |\Ecw_-|$, then $ \dist_G(\mathcal{C}^{cw}) < \dist_G(\mathcal{C}_\opt) $ which contradicts the minimality of $ \mathcal{C}_\opt $. 
	If $|\Ecw_+| > |\Ecw_-|$ then $ |\Eccw_+| < |\Eccw_-| $ and $ \dist_G(\mathcal{C}^{ccw}) < \dist_G(\mathcal{C}_\opt) $, which again contradicts the minimality of $ \mathcal{C}_\opt $. 
	So finally, we must have $|\Ecw_+| = |\Ecw_-|$ and then $ \dist_G(\mathcal{C}^{cw}) = \dist_G(\mathcal{C}_\opt) $. 
	We can thus continue the clockwise rotation until one of two things will happen. 
	Either the slope of an edge will coincide with a direction in the rotated orientation system, in which case we are done, or the bisector between two of the rotated orientations coincides with the slope of an edge. 
	If we continue the rotation, this edge will change from $\Ecw_+$ to $\Ecw_-$ and hence $|\Ecw_+| < |\Ecw_-|$.
	A further minimal rotation will thus result in $ \dist_G(\mathcal{C}^{cw}) < \dist_G(\mathcal{C}_\opt) $, which contradicts the minimality of $ \mathcal{C}_\opt $.
\end{proof}

By this lemma we can restrict our search to orientation systems in $\mathfrak C(E) = \{\mathcal{C} \mid \exists e\in E: \gamma_e \in \mathcal{C}\}$, i.e., to orientation systems, where at least one orientation coincides with the slope of an edge in $E$. 
The set $\mathfrak C(E)$ contains $O(|E|)$ elements and we select $\mathcal C_\opt$ as the one yielding the minimum  $\dist_G(\mathcal C)$ for all $\mathcal C \in \mathfrak C(E)$.

\subsection{Irregular orientation systems}
In an irregular orientation system $\mathcal C$ with $k$ orientations, each orientation can be selected independently. 
We can interpret the orientation system as a clustering of the set $\Gamma = \{\gamma_e \mid e \in E\}$ of all input edge slopes, where each cluster is formed around the closest orientation in $\mathcal C$.
Our goal is to find a set of $k$ orientations (clusters) that minimizes $\dist_G(\mathcal C)$.
To this end we can use an exact 1-dimensional $k$-means clustering algorithm of Nielsen and Nock~\cite{nielsen2014optimal} applied to the set $\Gamma$. 
This algorithm has running time $O(n^2k)$ using a precomputed auxiliary matrix as a look up table.

\section{MIP model}\label{sec:ilp}

Next we sketch how the constraints of the MIP model of N\"ollenburg and Wolff~\cite{nw-dlhqm-11} must be modified in order to compute more general $\mathcal C$-oriented metro maps for a given arbitrary set $\mathcal C$ of $k$ orientations. In our description we focus on the differences to the octolinear MIP model~\cite{nw-dlhqm-11}.

\subsection{Hard constraints}

The hard constraints comprise four aspects: $\mathcal C$-oriented coordinate system, assignment of edge directions, combinatorial embedding, and planarity.

\subsubsection{Coordinate system}
Every vertex $u$ of $G$ has two Cartesian coordinates in the plane $\mathbb R^2$, specified as $x(u)$ and $y(u)$. In order to address vertex coordinates for a flexible number $k$ of orientations, we define a redundant system of $k$ coordinates $z_0, \dots, z_{k-1}$, which are all real-valued variables in the MIP model and can all be obtained by rotating the x-axis counterclockwise by one of the angles in the orientation system $\mathcal{C} = \{\theta_0, \dots, \theta_{k-1}\} \subset [0,\pi)$.
We define the coordinate $z_i(u)$ using $x(u)$ and $y(u)$ as
$z_i(u) = \cos {\left(\theta_i\right)} \cdot x(u) + \sin {\left(\theta_i\right)} \cdot y(u)$.

In order to be able to express later that two vertices $u, v$ are collinear on a line with slope in $\mathcal C$, we need the orthogonal orientation  $z^o_i$ for each coordinate~$z_i$.
Note that while $z^o_i$ can coincide with other coordinates, this is guaranteed only in a regular orientation system with an even number of orientations.
In general, this is not the case and we need to define a second set of redundant coordinates, see Figures~\ref{fig:2irr}, \ref{fig:3alig} and~\ref{fig:5reg}.
Using a rotation by $\pi/2$ we obtain
$z^o_i(u) = -\sin {\left(\theta_i\right)} \cdot x(u) + \cos {\left(\theta_i\right)} \cdot y(u)$.

\subsubsection{Edge directions and minimum length}\label{sec:edgedirs}
Every edge $(u, v) \in E$ has an original direction in the input layout of $G$, defined as the direction from $u$ to $v$. Our $\mathcal C$-oriented coordinate system splits the plane into exactly $2k$ sectors numbered from $0$ to $2k-1$ for each vertex $u \in V$, see Figure~\ref{fig:circular_order}. We store the sector in which an edge $(u, v)$ lies in the input drawing as a constant $\sect_u(v)$ that we call the \emph{original sector} of $(u,v)$. %

Next we define an integer variable $\dir(u, v)$ to encode the selected direction of the edge $(u, v)$ in a $\mathcal C$-oriented solution. The range for $\dir(u, v)$ includes the original sector $\sec_u(v)$ and $s\ge 1$ admissible neighboring sectors in both directions. 
The original MIP model~\cite{nw-dlhqm-11} uses $s=1$, which results in a range of three admissible edge directions for each edge.

For each edge $(u,v)$ we define the set $\mathcal{S}(u,v)$ of admissible directions\footnote{All index calculations are modulo $2k$.} as $\mathcal{S}(u,v) = \{i \mid \sec_u(v) - s \leq i \leq \sec_u(v) + s\}$. 
For each $i\in\mathcal{S}(u,v)$ we define its corresponding direction number as $\sec^i_u(v)$ and define a binary variable $\alpha_i(u, v)$ of which only one can be true at any given time (\ref{eq:sector_binaries}). These are then used to assign the correct value of $\dir(u, v)$ (\ref{eq:sector_directions_a}). %
\begin{equation} \label{eq:sector_binaries}
\sum_{i\in\mathcal{S}(u,v)}\alpha_i(u, v) = 1
\end{equation}
\begin{equation} \label{eq:sector_directions_a}
\dir(u,v) = \sum_{i\in\mathcal{S}(u,v)} \sec^i_u(v)\cdot \alpha_i(u, v)
\end{equation}
We further define $\dir(v,u) = \dir(u,v) + k$ for the opposite edge $(v,u)$.

To guarantee that the output layout draws the edge $(u,v)$ in the selected direction $\dir(u,v)$ we need to ensure that the variables of $u$ and $v$ for the orthogonal coordinate axis $z_i^o$ are equal, i.e., $z^o_{\dir(u,v)}(u) = z^o_{\dir(u,v)}(v)$ (\ref{eq:sector_ortho_equal}) and that the  coordinates $z_{\dir(u,v)}(u)$ and $z_{\dir(u,v)}(v)$ differ by at least the minimum edge length $L_{min}$, i.e., $z_{\dir(u,v)}(v) - z_{\dir(u,v)}(u) \geq L_{min}$ (\ref{eq:sector_min_dist_direction}).
\begin{subequations} \label{eq:sectors}
	\begin{equation} \label{eq:sector_ortho_equal}
	\begin{aligned}
	z^o_{i'}(u) - z^o_{i'}(v) & \leq M(1-\alpha_i(u, v))\\
	-z^o_{i'}(u) + z^o_{i'}(v) & \leq M(1-\alpha_i(u, v))
	\end{aligned}
	\end{equation}
	\begin{equation} \label{eq:sector_min_dist_direction}
	\begin{aligned}
	z_{i'}(v) - z_{i'}(u) \geq -M(1 - \alpha_i(u, v)) + L_{\min} &\quad \textrm{if } i<k\\
	z_{i'}(u) - z_{i'}(v) \geq -M(1 - \alpha_i(u, v)) + L_{\min} &\quad \textrm{if } i\geq k\\
	\end{aligned}
	\end{equation}
\end{subequations}
Note four things. First, the constraints are created for every $ i\in\mathcal{S}(u,v)$. 
Second, we use $i' = i \bmod k$, since we only have $k$ coordinates, but $2k$ possible directions. 
Third, we need to distinguish whether the direction $i$ is smaller than the number $k$ of orientations, in which case $u$ must have a smaller value than $v$ in coordinate $z_i$, or otherwise if $i\ge k$ then $v$ must have the smaller coordinate and we need to invert the difference in (\ref{eq:sector_min_dist_direction}). And fourth, every triple of constraints for which $\alpha_i(u,v)=0$ is trivially satisfied by using a sufficiently big constant $M$ in the constraints. Due to (\ref{eq:sector_binaries}), $\alpha_i(u,v) = 1$ for exactly one index $i$ and only for that index $i$ the constraints have the desired effect on the coordinates.

\subsubsection{Combinatorial embedding}

\begin{figure}
	\centering
	\includegraphics{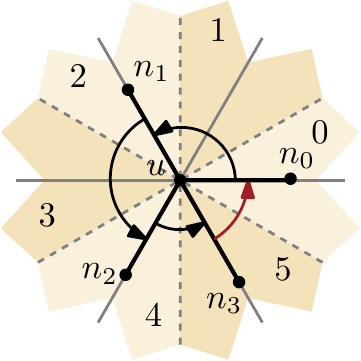}
	\caption{For edges $(u,n_3), (u,n_0)$ the direction value decreases from $5$ to $0$.}
	\label{fig:circular_order}
\end{figure}
We want to keep the combinatorial embedding, i.e., the topology of the input layout, which translates into preserving the cyclic order of the neighbors of each vertex. This can be expressed by requiring that the edge direction values strictly increase when visiting the incident edges in counterclockwise input order. There is exactly one exception, namely when going from the last used sector to the first one. 
Figure~\ref{fig:circular_order} illustrates this situation, where the crossover point lies between the neighbors $n_3$ and $n_0$, marked in red. Here we can add an offset of $2k$ instead to make the condition hold. Since this must happen exactly once, we can use binary variables $\beta_1(v), \beta_2(v), \dots, \beta_{\deg(v)}(v)$ to select the respective edge pair in equations (\ref{eq:overflow_embedding}) and (\ref{eq:overflow_embedding_b}).
\begin{equation} \label{eq:overflow_embedding}
\sum_{i=1}^{\deg(v)}\beta_i(v) = 1
\end{equation}
\begin{equation}\label{eq:overflow_embedding_b}
\begin{aligned}
\textrm{dir}(v, u_1) + 1 & \leq & \textrm{dir}(v, u_2) + 2k\cdot \beta_1(v)\\
\textrm{dir}(v, u_2) + 1& \leq & \textrm{dir}(v, u_3) + 2k\cdot \beta_2(v) \\
& \vdots & \\
\textrm{dir}(v, u_{\deg(v)}) + 1 & \leq & \textrm{dir}(v, u_1)  + 2k\cdot \beta_{\deg(v)}(v)
\end{aligned}
\end{equation}

\subsubsection{Planarity}\label{sec:planarity}
For every pair of non-adjacent edges $e=(u, v)$ and $e' = (u', v')$ we need to find (at least) one separation line between $e$ and $e'$ in a direction of $\mathcal K$ to guarantee that $e, e'$ do not intersect. 
We define  a set of $2k$ binary variables $\gamma_i(e, e')$ for which we require that at least one of them is set to true. %
\begin{equation}\label{eq:plan_binaries}
\sum_{i=0}^{2k-1} \gamma_i(e, e') \geq 1
\end{equation}
Now we ensure that every pair of edges $e, e'$ has a minimum distance $d_{\min}$ in the selected directions, i.e., %
both endpoints of $e$ have a distance of at least $d_{\min}$ to both endpoints of $e'$.
\begin{equation}\label{eq:plan_min_dist}
\begin{aligned}
z_{i'}(u') - z_{i'}(u)  & \geq & -M (1-\gamma_i(e, e')) +d_{\min} \\
z_{i'}(u') - z_{i'}(v)  & \geq & -M (1-\gamma_i(e, e')) +d_{\min} \\
z_{i'}(v') - z_{i'}(u)  & \geq & -M (1-\gamma_i(e, e')) +d_{\min} \\
z_{i'}(v') - z_{i'}(v)  & \geq & -M (1-\gamma_i(e, e')) +d_{\min} \\
\end{aligned}
\end{equation}
Note that the constraints are created for every $ 0\leq i < 2k$, that we use $i' = i \mod k$ and that the first $k$ sets of these equation look like (\ref{eq:plan_min_dist}), while the rest needs to invert the differences, e.g., $-z_{i'}(u') + z_{i'}(u)$, since they change sides with respect to direction $z_{i'}$. %

\subsection{Soft constraints}\label{sub:soft}

Soft constraints model the aesthetic quality criteria to be optimized in the layout. We adapt the three criteria of the octolinear MIP~\cite{nw-dlhqm-11} to arbitrary orientation systems: line straightness, topographicity, and compactness. Each requires a set of linear constraints and a corresponding linear term in the objective function.

\subsubsection{Line straightness}
We optimize for line straightness by minimizing the number and angles of bends along the metro lines in $\mathcal L$.
First we create a variable $\theta(u_1, u_2, u_3)$ for all pairs of consecutive edges $e_1 = (u_1, u_2), e_2 = (u_2, u_3)$ along some path $L \in \mathcal L$ that represents the cost of a potential bend between $e_1$ and $e_2$ on the metro line $L$. 
To assign $\theta(u_1, u_2, u_3)$ we subtract the direction of $e_2$  from the direction of $e_1$. If the edges do not have the same direction, the difference $\dir(u_1, u_2) - \dir(u_2, u_3)$, which we will call $\Delta \dir_{u_1, u_2, u_3}$, will either be positive or negative and $\Delta \dir_{u_1, u_2, u_3} \in [-2k+1, 2k-1]$. 
From~\cite{nw-dlhqm-11} we know that $\theta(u_1, u_2, u_3) = \min\{|\Delta \dir_{u_1, u_2, u_3}|, 2k - |\Delta \dir_{u_1, u_2, u_3}|\}$, i.e., $\theta \in [-k+1, k-1]$. 
Using two binary correction variables $\delta_1$ and $\delta_2$ we can ensure that $\theta$ takes the desired minimal value (\ref{eq:bend_deviation}), which then lets us define the bend cost function (\ref{eq:obj_dev}).
\begin{equation}
\begin{aligned} \label{eq:bend_deviation}
-\theta(u_1, u_2, u_3) & \leq \Delta \dir_{u_1, u_2, u_3} - 2k \cdot \delta_1 + 2k \cdot \delta_2\\
\theta(u_1, u_2, u_3) & \geq \Delta \dir_{u_1, u_2, u_3} - 2k \cdot \delta_1 + 2k \cdot \delta_2\\
\end{aligned}
\end{equation}
\begin{equation} \label{eq:obj_dev}
\textrm{cost}_\textrm{bends} = \sum_{L \in \mathcal{L}} \> \sum_{(u_1, u_2), (u_2, u_3) \in L} \theta(u_1, u_2, u_3)
\end{equation} 

\subsubsection{Topographicity}
\label{sec:topographicity}
In order to support the mental map~\cite{mels-lam-95} of the user, we want the shape of the output drawing to resemble the input drawing as closely as possible. 
For this we try to preserve the input directions of the edges. 
Formally we want to minimize the difference between the input direction and the output direction, i.e., $\sum_{(u,v) \in E} |\dir(u, v) - \sec_u(v)|$. 
In order to minimize the absolute value we define a new variable $\xi(u, v) = |\dir(u, v) - \sec_u(v)|$ by imposing (\ref{eq:rel_position_abs}) and minimizing $\xi(u,v)$ in the cost function (\ref{eq:obj_rel}). The topographicity cost function is simply the sum over all $\xi$-variables (\ref{eq:obj_rel}).
\begin{equation}\label{eq:rel_position_abs}
\begin{aligned}
\dir(u, v) - \sec_u(v) &\leq \xi(u, v)\\
-\dir(u, v) + \sec_u(v) &\leq \xi(u, v)
\end{aligned}
\end{equation}
\begin{equation}\label{eq:obj_rel}
\textrm{cost}_\textrm{topo} = \sum_{(u,v) \in E} \xi(u, v)
\end{equation}

\subsubsection{Compactness.}
To ensure a compact layout we minimize the total edge length of the output drawing.
Here we use that the Euclidean length of an edge $e=(u,v)$ in a $\mathcal C$-oriented layout is defined by the maximum absolute value $|z_i(u) - z_i(v)|$ in all $k$ coordinates (the projections in all other directions are shorter), which we model by a variable $\lambda(u,v)$. The compactness cost function is the sum of all edge lengths.

\begin{equation} \label{eq:edge_length_abs}
\begin{aligned}
z_i(u) - z_i(v) &\leq \lambda(u, v)\\
-z_i(u) + z_i(v) &\leq \lambda(u, v)
\end{aligned}
\end{equation}
\begin{equation} \label{eq:obj_dis}
\textrm{cost}_\textrm{length} = \sum_{(u, v) \in E} \lambda(u, v)
\end{equation}

\subsubsection{Objective function}
The objective function to be minimized is composed of the three different terms $\textrm{cost}_\textrm{bends}, \textrm{cost}_\textrm{topo}$ and $\textrm{cost}_\textrm{length}$ defined above. Each term can be weighted with factors $f_1, f_2, f_3$ depending on their relative importance.

\begin{equation} \label{eq:obj_function}
\text{minimize } f_1\cdot\textrm{cost}_\textrm{bends} + f_2\cdot\textrm{cost}_\textrm{topo} + f_3\cdot\textrm{cost}_\textrm{length}
\end{equation}

\subsection{Improvements}\label{sec:improvements}

Further, N\"ollenburg and Wolff~\cite{nw-dlhqm-11} devised several practical improvements to accelerate their method.
For instance, the number of planarity constraints (Sect.~\ref{sec:planarity}) grows quadratically with the number of edges, but most of them are never critical as any reasonable layout satisfies them trivially. 
So they suggested ways of reducing the number of necessary constraints and to add them only on demand, which immediately carries over to our generalized implementation.

\section{Experiments}\label{sec:examples}

We performed experiments on real-world data to compare the computational performance and  visual quality of metro maps with different linearity systems.

\subsection{Setup}
We generated schematic layouts of the metro networks of Montreal ($n=65$, $m=66$), Vienna ($n=90$, $m=96$), Washington DC ($n=97$, $m=101$) and Sydney ($n=173$, $m=181$) using aligned, regular and irregular orientation systems with $k \in \{3, 4, 5\}$ orientations. {All layouts were created with two different weight vectors  for the objective function, a more balanced setting $(f_1, f_2, f_3) = (3, 2, 1)$ and one emphasizing bend minimization $(f_1, f_2, f_3) = (10, 5, 1)$}, resulting in 72 instances in total. For all layouts we added planarity constraints on demand, used $s=1$ admissible neighboring sectors for each original edge direction (recall Section~\ref{sec:edgedirs}) and concentrated on the overall layout geometry and interchanges without showing the individual stops along the lines.

The experiments were run on a computing cluster with 16 nodes, each with 160GB RAM and two 10-core Intel Xeon E5-2640 v4, 2.40GHz. %
The operating system runs a Linux kernel version 4.15.0-72. 
Our implementation uses IBM ILOG CPLEX 12.8 to solve the integer linear programs as single threads. 
Each experiment had an allocated available memory space of 32GB (Montreal), 64GB (Vienna, Washington), and 150GB (Sydney). 
The increase in memory was necessary since some lager instances like Sydney ran out of memory (consistently for 64GB and sporadically even for 128GB).

To judge the quality and performance of a layout, we use several measurements. %
Line straightness was measured by the bend cost in the MIP (see Section~\ref{sub:soft}).
The sector deviation is a coarse measure of topographicity, counting how many edges are not drawn in their preferred direction (see~Section~\ref{sec:topographicity}). 
Sector deviation is measured in total and on average per edge. 
Another measure of topographicity is the angular distortion, i.e., the actual angular difference between input edges and schematized output edges, which is measured on average per edge. 
Finally we measure the runtime in seconds.

\subsection{Results}

\setlength{\tabcolsep}{2pt}
\begin{figure}[!t]
	\begin{tabular}{ccccc}
		\fbox{\includegraphics[width=.19\textwidth]{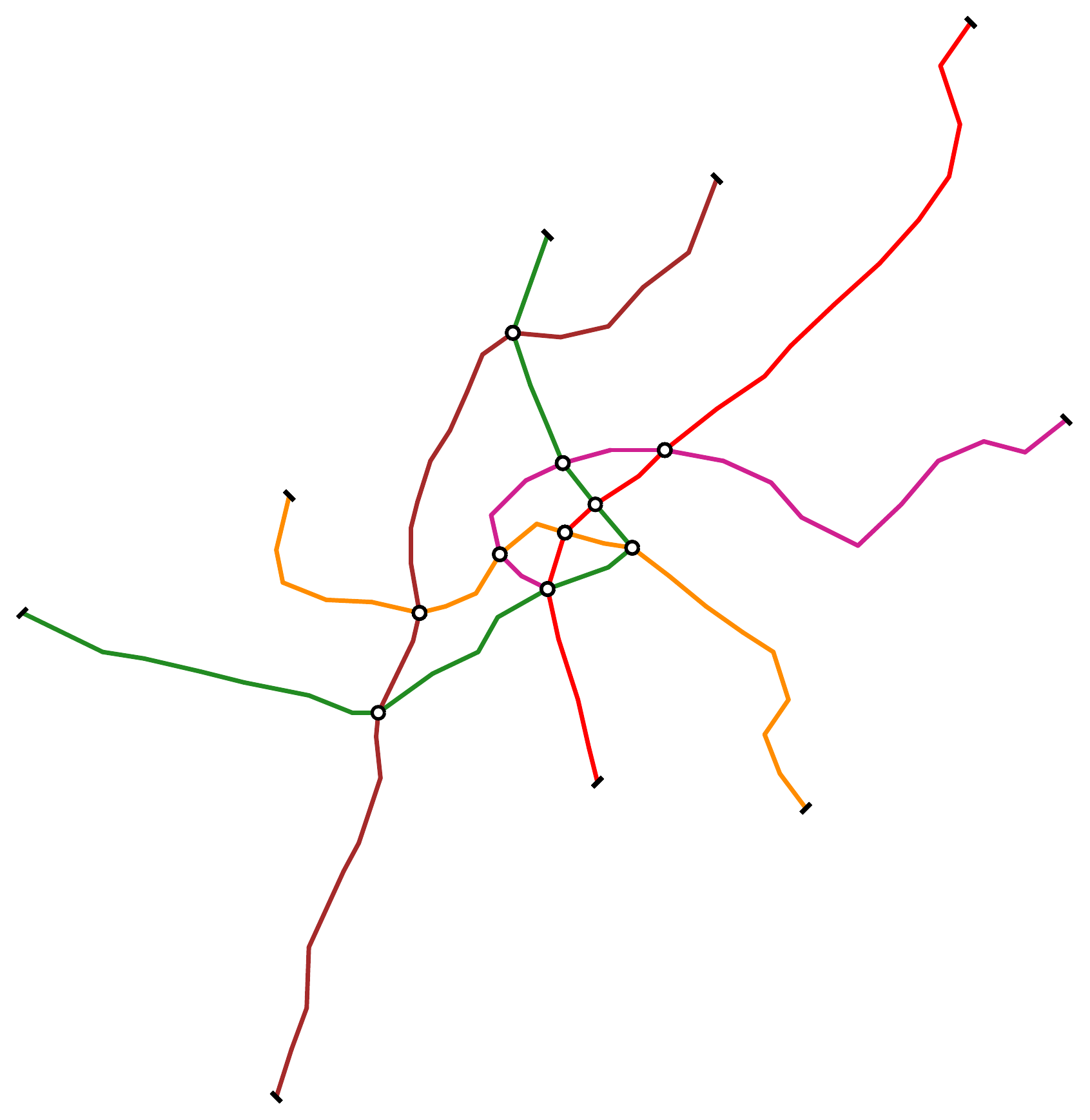}}&\includegraphics[width=.19\textwidth]{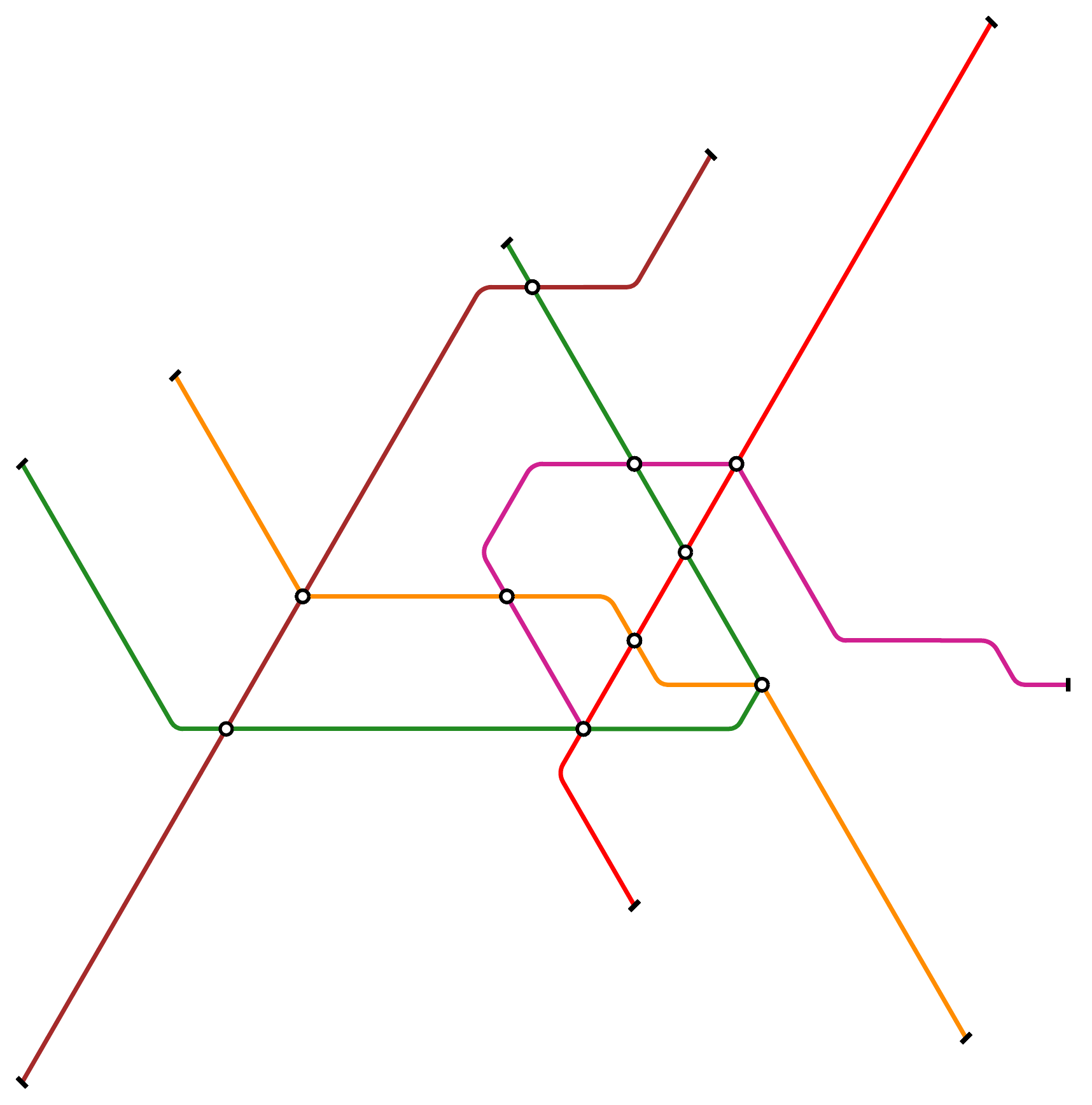} &   \includegraphics[width=.19\textwidth]{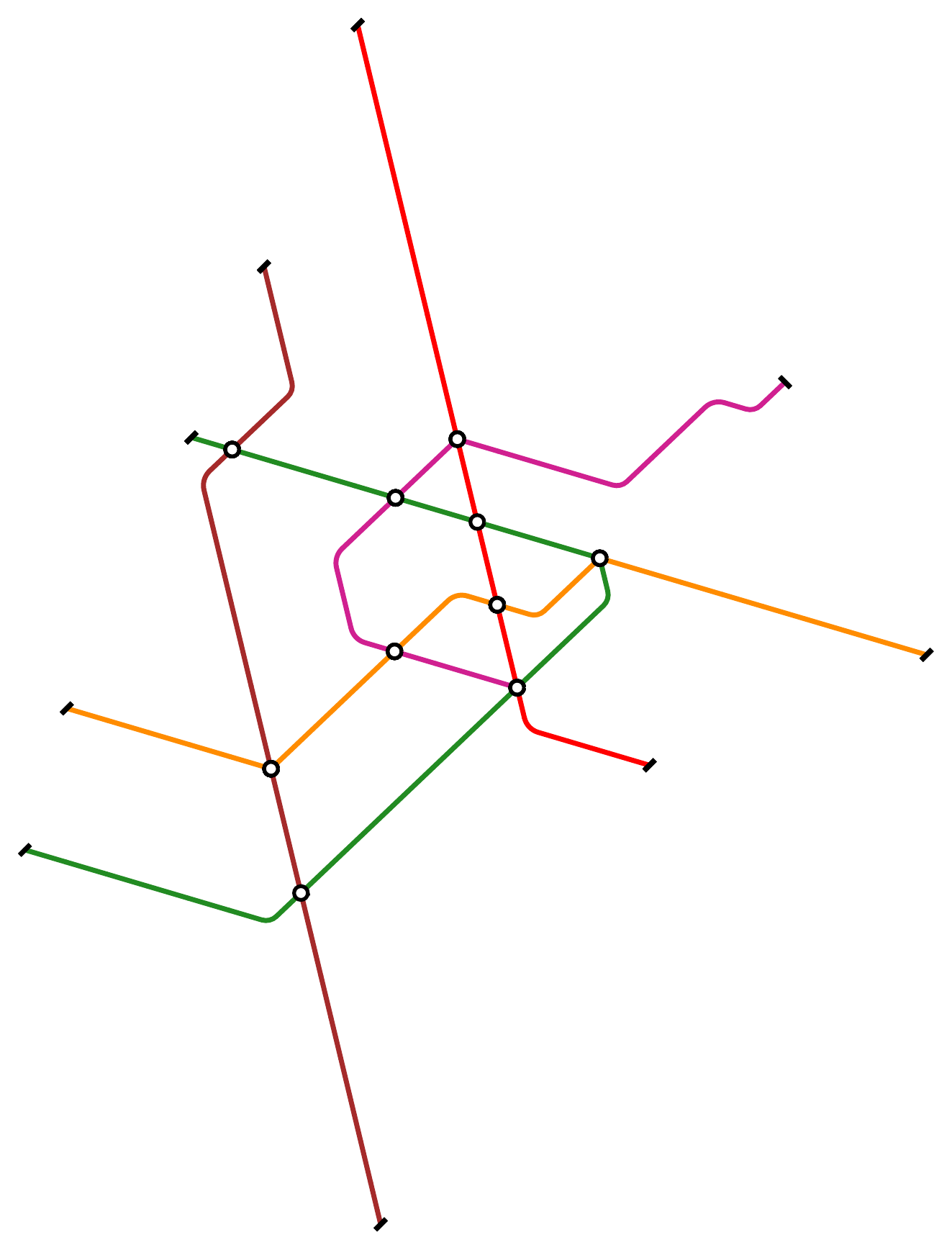} & \includegraphics[width=.19\textwidth]{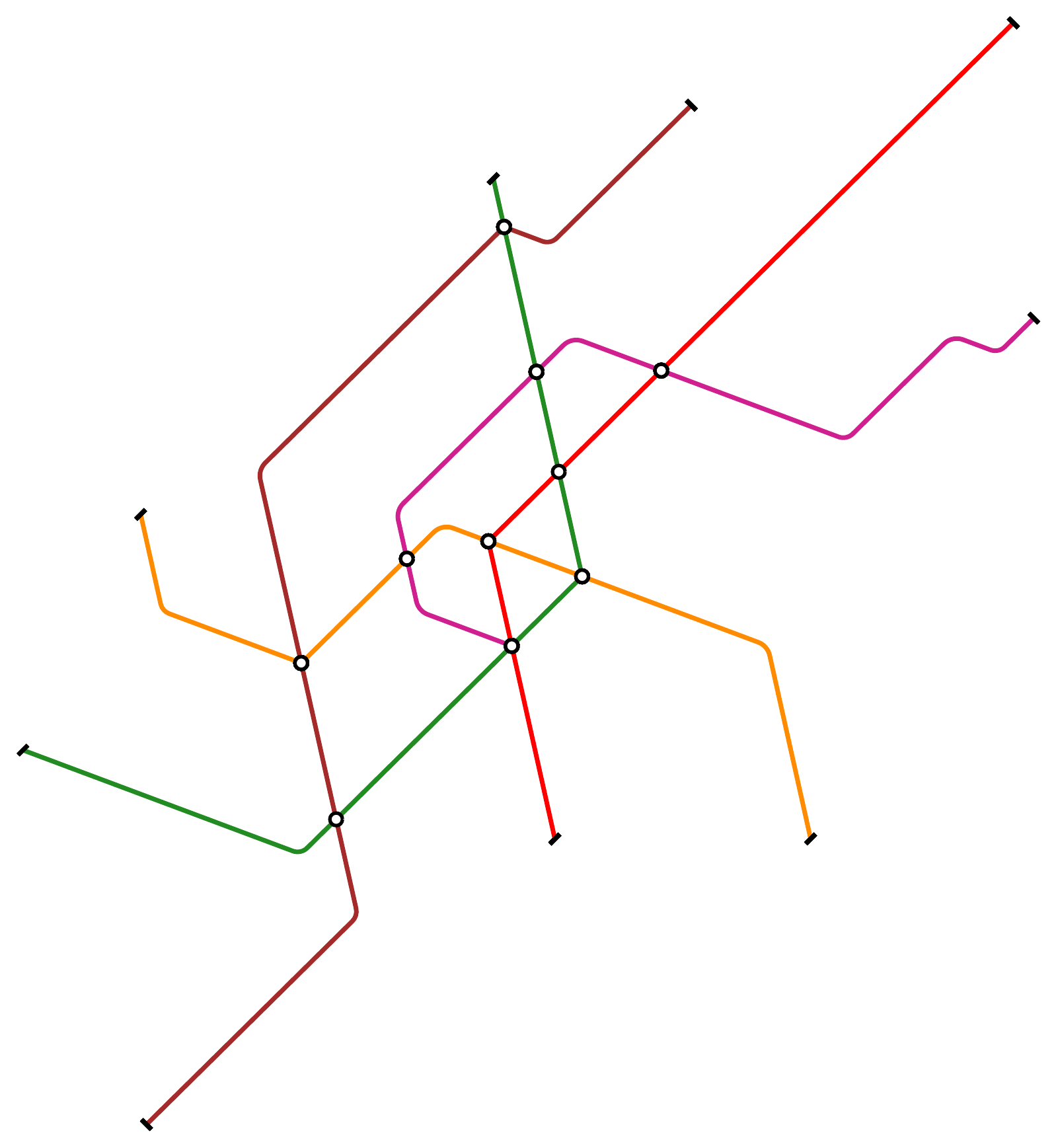} & \includegraphics[width=.19\textwidth]{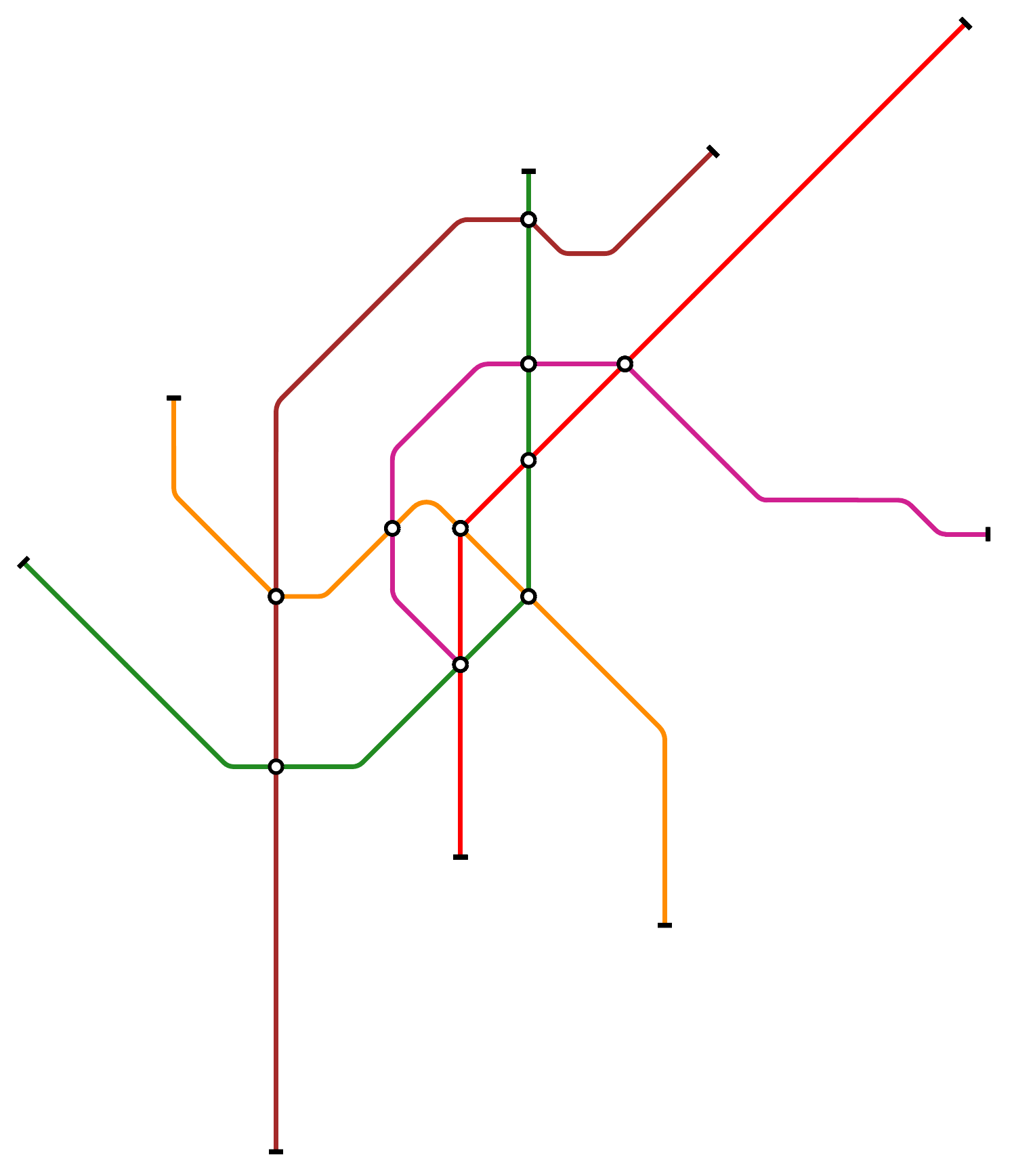}\\
		Input&(a) 3-A&(b) 3-R&(c) 3-I&(d) 4-A\\[6pt]
		\includegraphics[width=.19\textwidth]{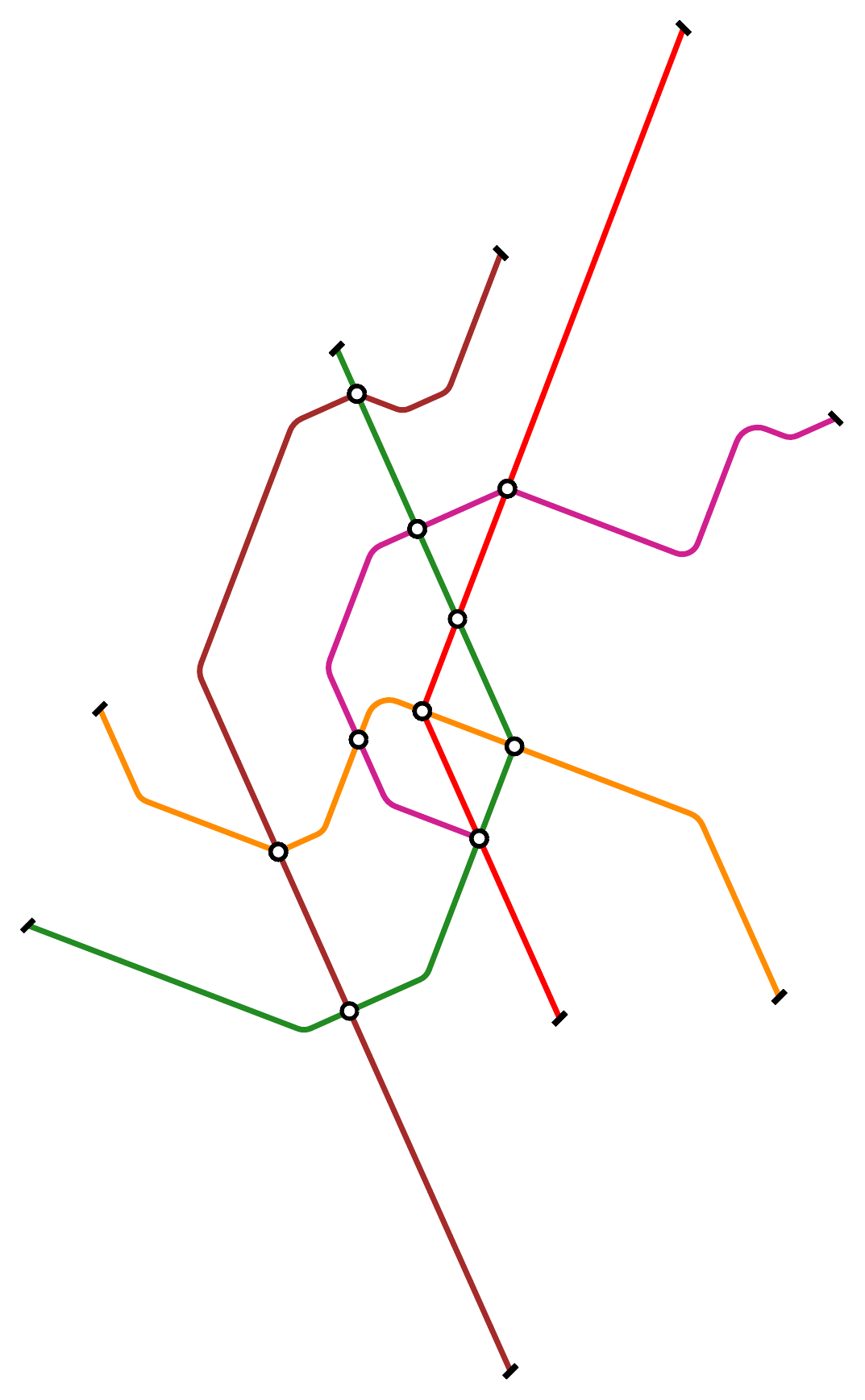} & \includegraphics[width=.19\textwidth]{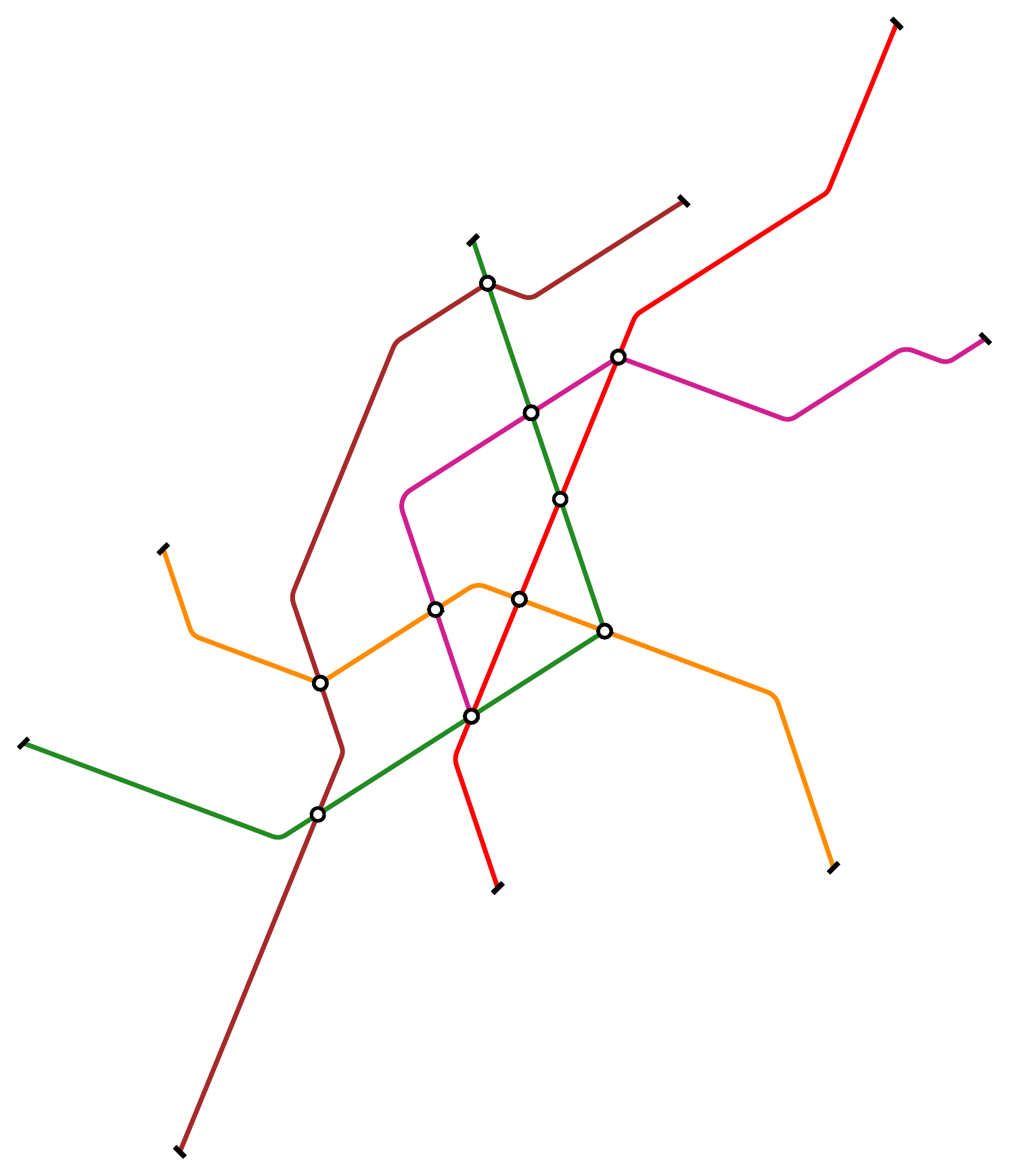} & \includegraphics[width=.19\textwidth]{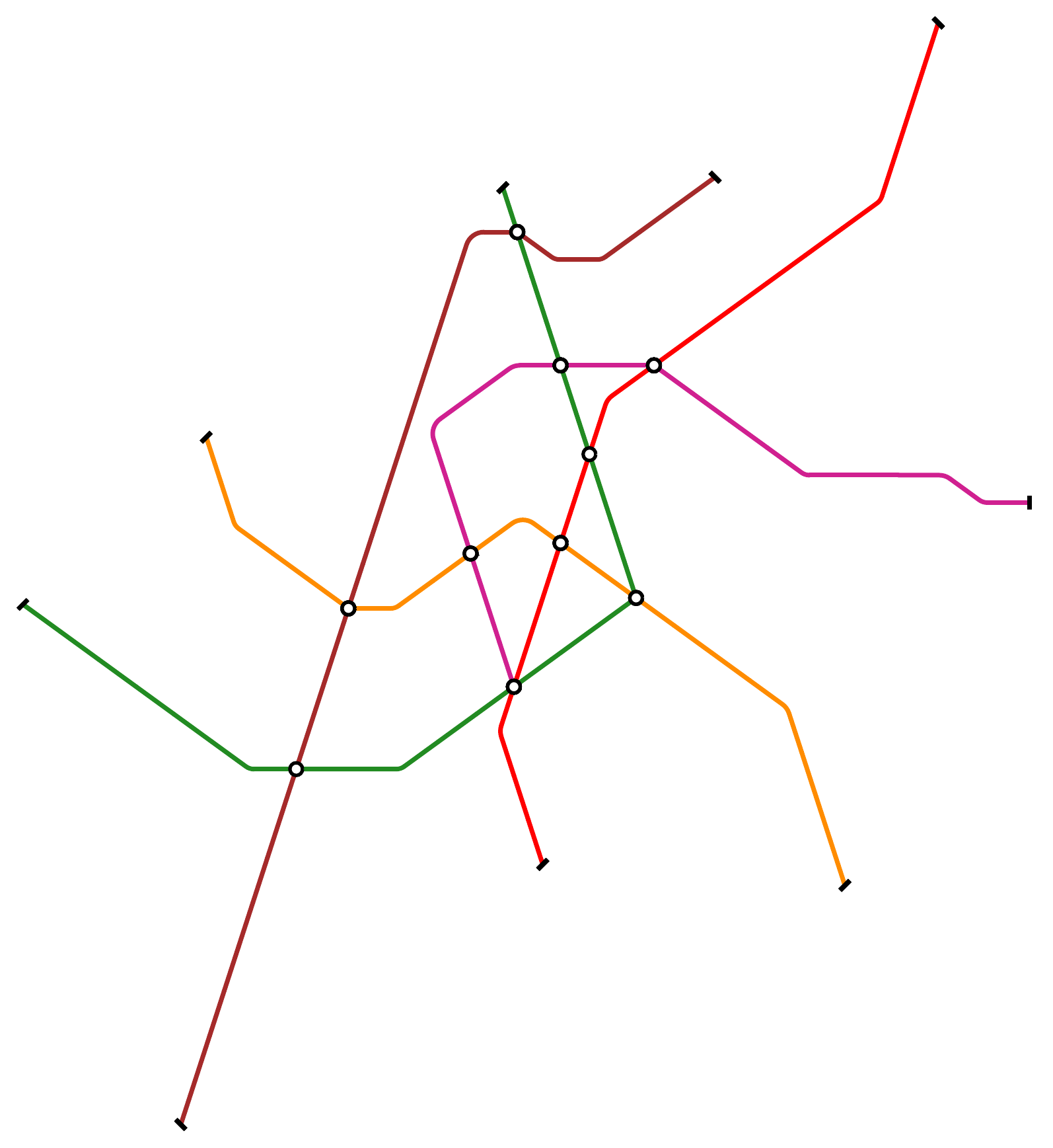} &   \includegraphics[width=.19\textwidth]{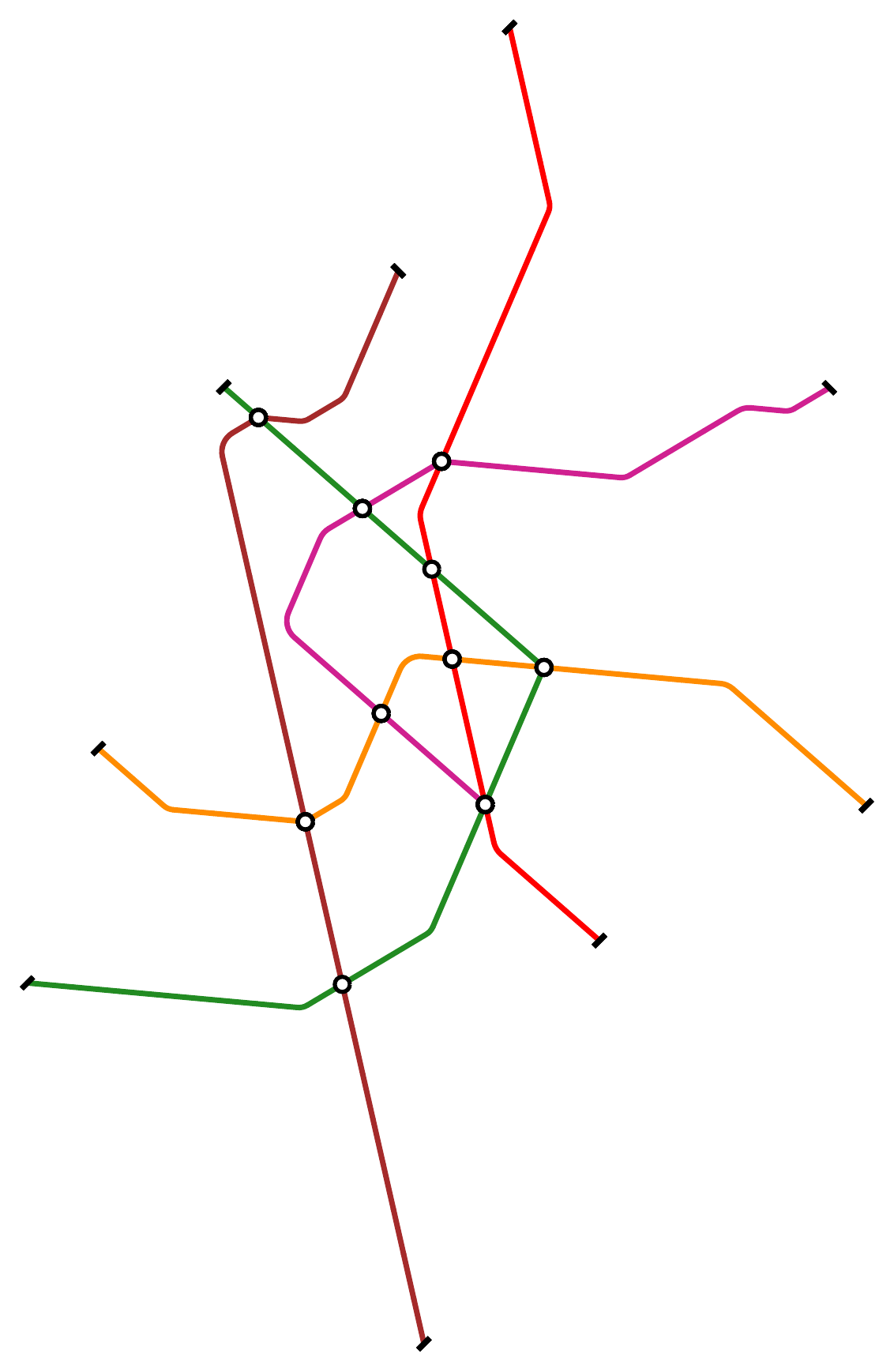} & \includegraphics[width=.19\textwidth]{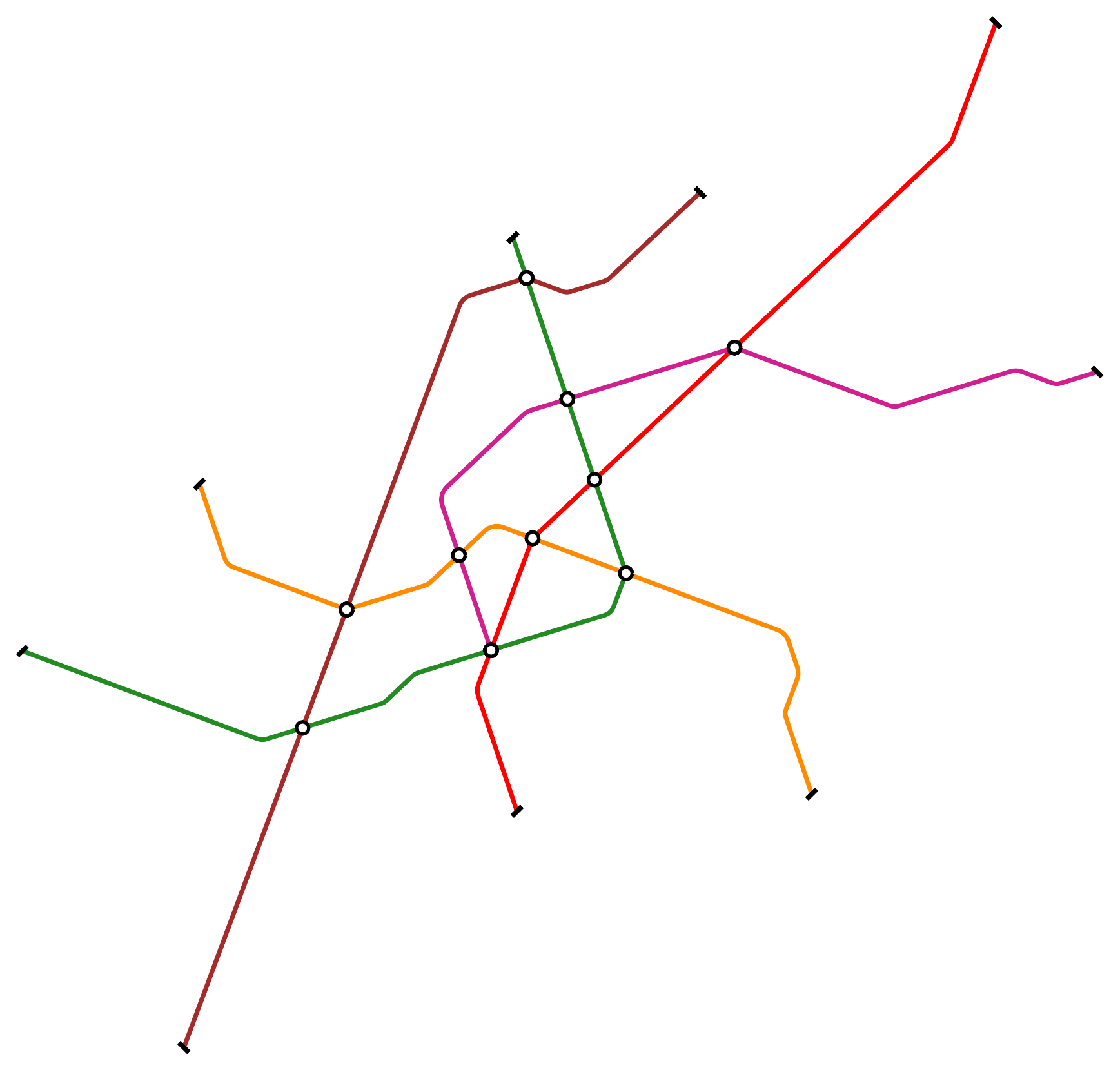}\\		
		(e) 4-R&(f) 4-I&(g) 5-A&(h) 5-R&(i) 5-I\\[6pt]
	\end{tabular}
	\caption{Examples of Vienna generated with objective function weights $(f_1, f_2, f_3) = (3, 2, 1)$ for different $k \in \{3, 4, 5\} $ and aligned ($k$-A), regular ($k$-R) and irregular ($k$-I) orientation systems.}
	\label{fig:vienna321}
\end{figure}

Here we show a representative set of nine layouts for Vienna in Figure~\ref{fig:vienna321}, the performance and quality measurements for Vienna in Table~\ref{tab:eval_v} and plots of the performance and quality measurements for the objective function weights $(f_1, f_2, f_3) = (3, 2, 1)$ in Figure~\ref{fig:stats321}. The same plots for the objective function weights $(f_1, f_2, f_3) = (10, 5, 1)$ (Figure~\ref{fig:stats1051}), the complete table of the performance and quality measurements (Table~\ref{tab:eval_2}) and the computed networks for Montreal (Figures~\ref{fig:ap_montreal321}--\ref{fig:ap_montreal1051}), Vienna (Figures~\ref{fig:ap_vienna321}--\ref{fig:ap_vienna1051}), Washington (Figures~\ref{fig:ap_washington321}--\ref{fig:ap_washington1051}) and Sydney (Figures~\ref{fig:ap_sydney321}--\ref{fig:ap_sydney1051}) can be found in the appendix. 
We set a 10-hour time limit for CPLEX, which was reached by most Sydney instances. 
Note however that during the process of solving the MIP, we have access to intermediate, but possibly suboptimal solutions.
Even in instances, which run out of time (like Sydney), we tend to find intermediate solutions, which are visually already quite close to the final solutions after only a few seconds of computation. 
Since solving these instances as close to optimality as possible increases comparability between instances via the taken measurements, we set the rather long time limit.

\setlength{\tabcolsep}{3pt}
\begin{table}
	\caption{Results for the Vienna network. The model parameters are the number of available directions ($k$) and the orientation system (\textbf{A}ligned, \textbf{R}egular, \textbf{I}rregular). The measures are the number of bends, sector deviation (total and per edge), distortion per edge and the runtime in seconds. For one set of objective function weights, the best results across different $k$ and orientation systems for every measure are marked in bold numbers.}
	\label{tab:eval_v}
	
	\begin{center}
		\begin{tabular}{c!{\vrule width 1pt}l|rrr|rrr|rrr}
			&\multicolumn{1}{l|}{instance} & \multicolumn{9}{c}{Vienna}\\
			\cmidrule{2-11}
			&\multicolumn{1}{l|}{drawing dimensions}& \multicolumn{3}{c|}{$k=3$} & \multicolumn{3}{c|}{$k=4$} & \multicolumn{3}{c}{$k=5$}\\
						\cmidrule{2-11}
			\multirow{-3}{*}{\rotatebox[origin=c]{90}{weights}}&\multicolumn{1}{l|}{orientation system} & \centhead{A}     & \centhead{R} & \centheadbar{I}   & \centhead{A}     & \centhead{R}     & \centheadbar{I}   & \centhead{A}     & \centhead{R}     & \centhead{I}\\
			\ChangeRT{1pt}
			
			\rowcolor{gray!20}& \bendname        & \textbf{16}          & \textbf{16}      & 17          & 22          & 24          & 21          & 25          & 25          & 29\\
			\rowcolor{gray!20}& \sectdevname    & 27          & 27      & \textbf{13}          & 21          & 18          & 17          & 24          & 24          & 21\\
			\rowcolor{gray!20}& \subper             & 0.28        & 0.28    & \textbf{0.14}        & 0.22        & 0.19        & 0.18        & 0.25        & 0.25        & 0.22\\
			\rowcolor{gray!20}& \distpername & 31.47       & 36.07   & 15.96       & 23.18       & 22.96       & 16.07       & 19.45       & 26.68       & \textbf{14.46}\\
			\rowcolor{gray!20}\billy{$(3, 2, 1)$}     & \runtimename  & 308         & 349     & \textbf{8}           & 108         & 116         & 299         & 69          & 217         & 113\\
			\ChangeRT{0.4pt}
			
			& \bendname        & 16      & 16      & \textbf{15}          & 19      & 19          & 19          & 25          & 25          & 29\\
			& \sectdevname    & 25      & 25      & \textbf{19}          & 27      & 27          & \textbf{19}          & 24          & 24          & 23\\
			& \subper             & 0.26    & 0.26    & \textbf{0.2}         & 0.28    & 0.28        & \textbf{0.2}         & 0.25        & 0.25        & 0.24\\
			& \distpername & 31.18   & 33.5    & 17.76       & 25.19   & 23.53       & 16.35       & 19.45       & 26.68       & \textbf{15.01}\\
			\billy{$(10, 5, 1)$}   & \runtimename  & 53      & 39      & \textbf{8}           & 140     & 115         & 41          & 44          & 27          & 51\\

		\end{tabular}
	\end{center}
\end{table}

Specific instances in this section will be referred to by name followed by their number of orientations $k$ and the weights $f_1, f_2, f_3$ in parentheses. %
Our first observation from generalizing the model of N\"ollenburg and Wolff~\cite{nw-dlhqm-11} is that the MIP model size, i.e., the numbers of constraints and variables scale linearly with the number $k$ of orientations. 
So as long as $k$ is a (small) constant the asymptotics with respect to the graph size parameters $n$ and $m$ remain the same.
Yet, in practice, doubling the size of the model may yield a significant slow-down in the actual solution time.

Comparing the number of bends, we can see that increasing $k$ leads to a greater number of bends. This can be explained in part by an increase in forced bends, where the probability that consecutive edges in a metro line cannot possibly be drawn in the same direction increases with $k$. 
This could be counteracted by increasing the parameter $s$, i.e., allowing more than two admissible neighboring sectors for each edge. 
The difference in bends between the aligned and the regular orientation systems is small, with only a few instances where the difference is up to 2. 
A very similar picture emerges between the aligned and irregular systems, however the spread is a bit bigger (1--4 bends difference on average) with some extreme examples, e.g., Washington, $k=4$, (3, 2, 1), where we save 7 bends with an irregular system. 
Unsurprisingly we have a similar or smaller amount of bends, when emphasizing the objective function that minimizes bends from $f_1=3$ to $f_1=10$.

Montreal and Sydney see the smallest sector deviation for $k=3$, Vienna and Washington depend on the parameter setting, i.e., they have the lowest value for $k=4$ with settings (3, 2, 1) and $k=5$ for settings (10, 5, 1), respectively. While the values of the aligned and regular systems are very similar, we see improvements ranging from slight ($k=5$) to significant ($k=4$) in the irregular orientation systems except for Montreal ($k=3$).

The general picture of the actual angle distortion is that, while a regular but rotated orientation system increases the distortion (with the notable exception of Vienna, $k=4$), we achieve better values across the board when employing the irregular system, sometimes drastically so, with a reduction of more than 50\% of the distortion in some instance, e.g., Vienna, $k=3$, (3, 2, 1). Again, rather unsurprisingly, we increase the angle distortion, when we emphasize line straightness with weights (10, 5, 1). 

The behavior regarding $k$ shows that the distortion decreases with increasing $k$, since the maximally possible angle distortion for each edge decreases. This is a trend we can see generally. However, this makes the few outliers all the more interesting. These are Montreal, (3, 2, 1), where $k=4$ achieves the lowest distortion and Vienna, (3, 2, 1), where $k=3$ gives a lower distortion than $k=4$.

\begin{figure}
	\centering
	\includegraphics{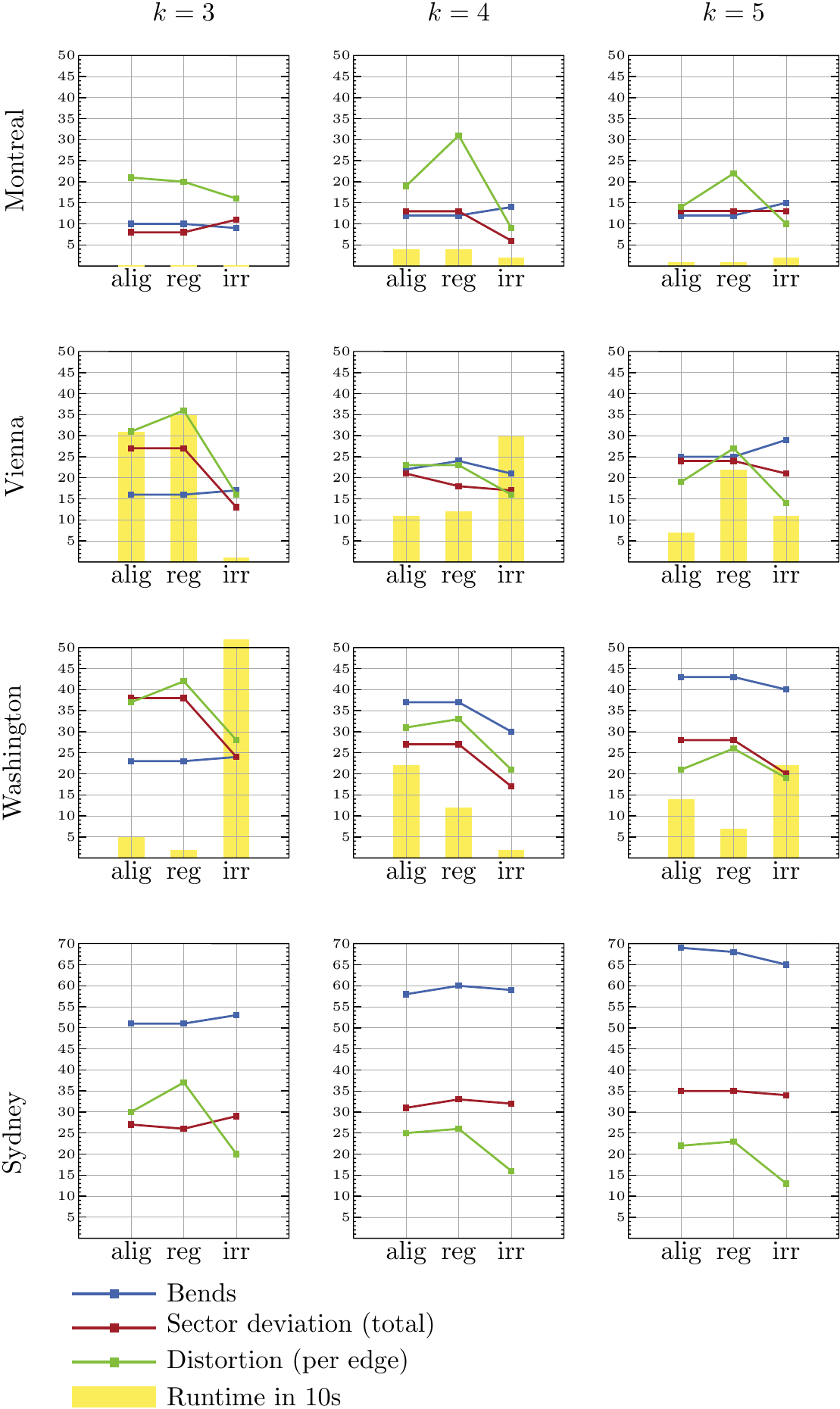}
	\caption{Plots of the results of the experiments for the objective function weights $(f_1, f_2, f_3) = (3, 2, 1)$.}
	\label{fig:stats321}
\end{figure}

Even though the size of the MIPs does not change, the runtime between instances changes, sometimes drastically so. On average the more extreme setting of objective function weights (10, 5, 1), in which a single goal is more pronounced over the other goals is solved faster than the more balanced setting (3, 2, 1). The influence of the irregular orientation system over the aligned one is unclear. Some instances are solved significantly faster, e.g., Vienna, (3, 2, 1) with $k=3$ drops by 300 seconds to just 8, other instances heavily increase in runtime, e.g., Washington, (3, 2, 1) with $k=3$ jumps from 52 to 524 seconds, while yet other instances see no significant difference in runtime. While the runtimes of the regular systems seem to be more similar to their aligned counterparts, it is in general unclear how exactly the orientation system influences the MIPs runtime.

\subsection{Discussion}

Our approach of increasing topographicity in metro maps through data-driven orientation systems seems to be working in almost all computed instances. Choosing an irregular orientation system  is a valid option to increase topographicity, even if the irregular set of slopes is unfamiliar. Moreover we can see that in isolated instances distortion can be lower when restricted to a smaller set of directions, which might be an indication that some input maps lend themselves more naturally to a specific $k$, which is not always the octolinear $k=4$.

Looking at the actual metro maps produced by our system, we can see one major caveat of our approach to minimize distortion by deciding the directions based on the input. While for most edges we have a very suitable representative direction in the orientation system, the constraints of the MIP might still force an edge to be drawn in a different sector. In Figure~\ref{fig:vienna321}(e) the lower end of the brown line is drawn downwards, to the right, despite there being a direction available which is closer to its general direction in the input. However due to the previous direction of the line and crossings with the orange and green lines and the goal of avoiding bends it is drawn in a direction with more topographic distortion.

On a positive note we can see that irregular orientation systems can create metro maps that resemble the input more closely than typical aligned systems. This can be seen when comparing Figures~\ref{fig:vienna321} (a) and (c), (d) and (f), and (g) and (i), respectively.

We can also see that most of the layouts, which are not using an aligned orientation system do not include the horizontal direction. This might be helpful in labeling these metro maps, since it is difficult to place the visually preferred horizontal labels along a horizontal line with clear association to a station.

\section{Conclusions}

We presented and implemented an adaptation of an existing MIP model for octolinear metro maps~\cite{nw-dlhqm-11} that can draw metro maps schematized to any set $\mathcal{C}$ of arbitrary orientations. 
This is supplemented by a data-driven approach to optimize the set $\mathcal{C}$ based on $k$-means clustering of the input edge orientations or by finding the best rotation of a regular orientation system. Finally we performed, presented, and discussed experiments of our system and its results for different real-world metro maps. 

An approach to choose a suitable $k$ for a given input might be to use the smallest $k$ which generates visually appealing layouts in order to find a middle ground between the schematic appearance of the metro map and geographic accuracy. This leads to the general idea, that it is still important for a metro map designer to consider a number of possible layouts in different linearities for a given input network in order to find the most suitable metro map style. Hence our system should not be understood as a stand-alone method to metro map generation, but rather as an automated tool to help pre-select possible candidates for layouts and increase the number of layout settings a designer can explore at low time cost. This pre-selection might be refined in the future if a more global metric to judge the quality of a metro maps is devised.

As future work, we want to include station labeling~\cite{nh-aflnm-18} and investigate additional improvements of the practical performance, as well as integrating it into a human-in-the-loop tool for schematic map design.

\medskip
{\small
\noindent \textbf{Acknowledgments.} We thank Maxwell J. Roberts for discussions about non-standard linearity models.}

\newpage
\appendix

\section*{Appendix}

\begin{figure}[!h]
	\centering
	\includegraphics{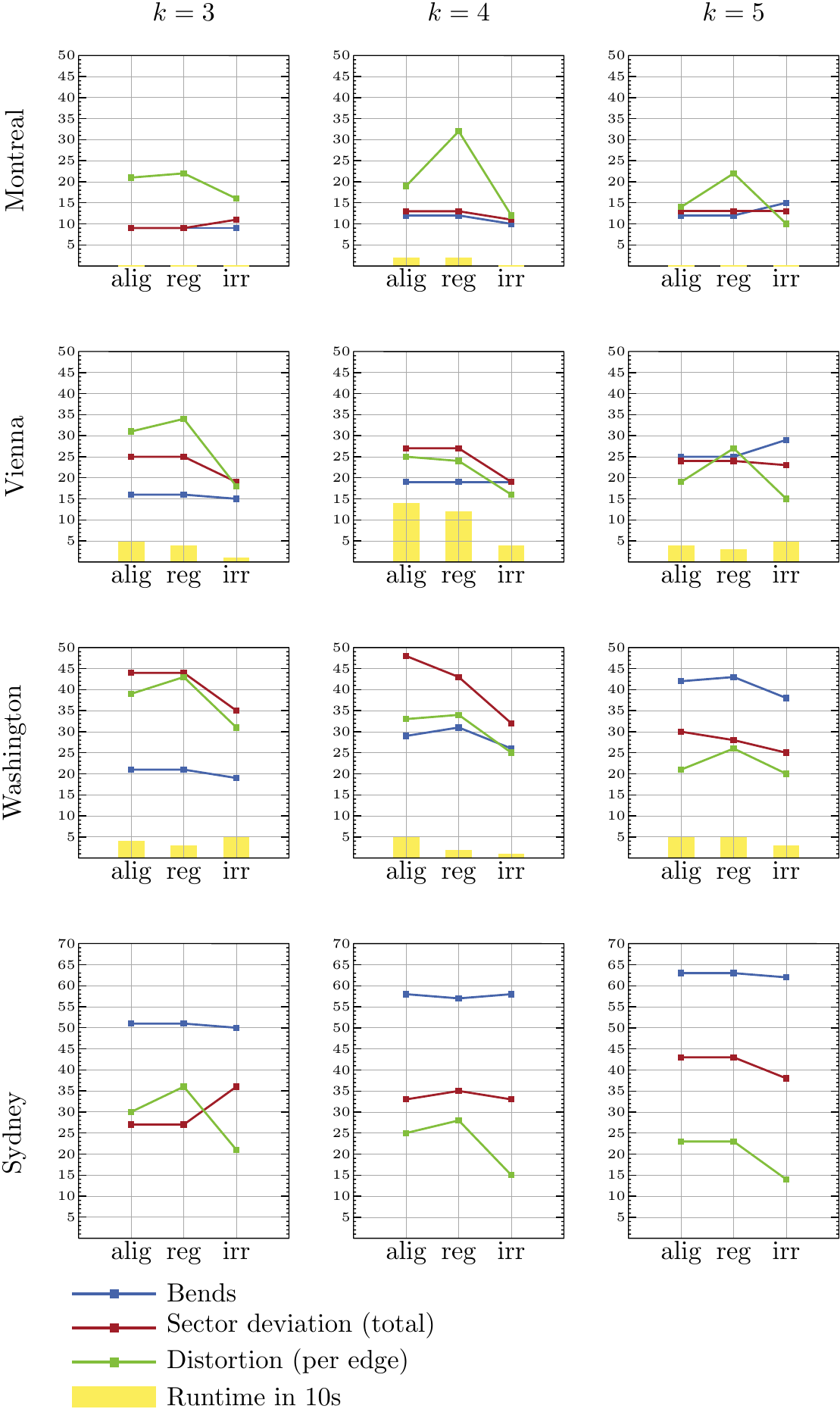}
	\caption{Plots of the results of the experiments for the objective function weights $(f_1, f_2, f_3) = (10, 5, 1)$.}
	\label{fig:stats1051}
\end{figure}

\begin{landscape}
	
	\setlength{\tabcolsep}{2pt}
	\begin{table}
		\caption{Full results for all experiments. The model parameters are the number of available directions ($k$) and the orientation system (\textbf{A}ligned, \textbf{R}egular, \textbf{I}rregular), as well as the objective function weights. The measures are the number of bends, sector deviation (total and per edge), distortion per edge and the runtime in seconds. For one set of objective function weights, the best results across different $k$ and orientation systems for every measure are marked in bold numbers. Runtimes marked as ``--'' reached the 10 hour time limit.}
		\label{tab:eval_2}
		
		\begin{center}
			\begin{tabular}{c!{\vrule width 1pt}l|rrr|rrr|rrr!{\vrule width 1pt}rrr|rrr|rrr}
				&\multicolumn{1}{l|}{obj. fct. weights} & \multicolumn{9}{c!{\vrule width 1pt}}{$(f_1, f_2, f_3) = (3, 2, 1)$} & \multicolumn{9}{c}{$(f_1, f_2, f_3) = (10, 5, 1)$}\\
				\cmidrule{2-20}
				&\multicolumn{1}{l|}{drawing dimensions}& \multicolumn{3}{c|}{$k=3$} & \multicolumn{3}{c|}{$k=4$} & \multicolumn{3}{c!{\vrule width 1pt}}{$k=5$}
				& \multicolumn{3}{c|}{$k=3$} & \multicolumn{3}{c|}{$k=4$} & \multicolumn{3}{c}{$k=5$}\\
				\cmidrule{2-20}
				\multirow{-3}{*}{\rotatebox[origin=c]{90}{$\leftarrow$ Inst.}}&\multicolumn{1}{l|}{orientation system} & \centhead{A}     & \centhead{R} & \centheadbar{I}   & \centhead{A}     & \centhead{R}     & \centheadbar{I}   & \centhead{A}     & \centhead{R}     & \centheadbartwo{I}
				& \centhead{A}    & \centhead{R} & \centheadbar{I}   & \centhead{A}     & \centhead{R}     & \centheadbar{I}   & \centhead{A}     & \centhead{R}     & \centhead{I}\\
				\ChangeRT{1pt}
				\rowcolor{gray!20}& \bendname        & 10          & 10      & \textbf{9}           & 12          & 12          & 14          & 12          & 12          & 15    
				& \textbf{9}       & \textbf{9}       & \textbf{9}           & 12      & 12          & 10          & 12          & 12          & 15    \\
				
				\rowcolor{gray!20}& \sectdevname    & 8           & 8       & 11          & 13          & 13          & \textbf{6}           & 13          & 13          & 13          
				& \textbf{9}       & \textbf{9}       & 11          & 13      & 13          & 11          & 13          & 13          & 13\\
				
				\rowcolor{gray!20}& \subper             & 0.12        & 0.12    & 0.17        & 0.2         & 0.2         & \textbf{0.09}        & 0.2         & 0.2         & 0.2         
				& \textbf{0.14}    & \textbf{0.14}    & 0.17        & 0.2     & 0.2         & 0.17        & 0.2         & 0.2         & 0.2\\
				
				\rowcolor{gray!20}& \distpername & 21.34       & 20.24   & 15.63       & 18.73       & 30.98       & \textbf{8.99}        & 14.34       & 22.43       & 10.22       
				& 21.28   & 22.38   & 15.63       & 18.69   & 32.34       & 11.86       & 14.34       & 22.43       & \textbf{10.22}\\
				
				\rowcolor{gray!20}
				\billy{Montreal}   & \runtimename  & \textbf{2}           & 3       & 3           & 43          & 39          & 18          & 10          & 9           & 15          
				& \textbf{1}       & 2       & 2           & 20      & 15          & 4           & 3           & 4           & 3\\
				\ChangeRT{0.4pt}
				& \bendname        & \textbf{16}          & \textbf{16}      & 17          & 22          & 24          & 21          & 25          & 25          & 29    
				& 16      & 16      & \textbf{15}          & 19      & 19          & 19          & 25          & 25          & 29\\
				& \sectdevname    & 27          & 27      & \textbf{13}          & 21          & 18          & 17          & 24          & 24          & 21          
				& 25      & 25      & \textbf{19}          & 27      & 27          & \textbf{19}          & 24          & 24          & 23\\
				& \subper             & 0.28        & 0.28    & \textbf{0.14}        & 0.22        & 0.19        & 0.18        & 0.25        & 0.25        & 0.22        
				& 0.26    & 0.26    & \textbf{0.2}         & 0.28    & 0.28        & \textbf{0.2}         & 0.25        & 0.25        & 0.24\\
				& \distpername & 31.47       & 36.07   & 15.96       & 23.18       & 22.96       & 16.07       & 19.45       & 26.68       & \textbf{14.46}       
				& 31.18   & 33.5    & 17.76       & 25.19   & 23.53       & 16.35       & 19.45       & 26.68       & \textbf{15.01}\\
				\billy{Vienna}     & \runtimename  & 308         & 349     & \textbf{8}           & 108         & 116         & 299         & 69          & 217         & 113         
				& 53      & 39      & \textbf{8}           & 140     & 115         & 41          & 44          & 27          & 51\\
				\ChangeRT{0.4pt}
				
				\rowcolor{gray!20}& \bendname        & \textbf{23}          & \textbf{23}      & 24          & 37          & 37          & 30          & 43          & 43          & 40    
				& 21      & 21      & \textbf{19}          & 29      & 31          & 26          & 42          & 43          & 38\\
				
				\rowcolor{gray!20}& \sectdevname    & 38          & 38      & 24          & 27          & 27          & \textbf{17}          & 28          & 28          & 20          
				& 44      & 44      & 35          & 48      & 43          & 32          & 30          & 28          & \textbf{25}\\
				
				\rowcolor{gray!20}& \subper             & 0.38        & 0.38    & 0.24        & 0.27        & 0.27        & \textbf{0.17}        & 0.28        & 0.28        & 0.2         
				& 0.44    & 0.44    & 0.35        & 0.48    & 0.43        & 0.32        & 0.3         & 0.28        & \textbf{0.25}\\
				
				\rowcolor{gray!20}& \distpername & 37.18       & 41.78   & 28.25       & 30.79       & 33.47       & 21.35       & 21.15       & 26.09       & \textbf{19.3}        
				& 38.61   & 42.72   & 31.08       & 33.01   & 34.05       & 24.79       & 21.45       & 26.09       & \textbf{20.4}\\
				
				\rowcolor{gray!20}
				\billy{Washington} & \runtimename  & 52          & \textbf{23}      & 524         & 224         & 123         & \textbf{23}          & 140         & 69          & 215         
				& 36      & 32      & 48          & 47      & 23          & \textbf{3}           & 48          & 54          & 25\\
				\ChangeRT{0.4pt}
				& \bendname        & \textbf{51}          & \textbf{51}      & 53          & 58          & 60          & 59          & 69          & 68          & 65    
				& 51      & 51      & \textbf{50}          & 58      & 57          & 58          & 63          & 63          & 62\\
				& \sectdevname    & 27          & \textbf{26}      & 29          & 31          & 33          & 32          & 35          & 35          & 34          
				& \textbf{27}      & \textbf{27}      & 36          & 33      & 35          & 33          & 43          & 43          & 38\\
				& \subper             & 0.15        & \textbf{0.14}    & 0.16        & 0.17        & 0.18        & 0.17        & 0.19        & 0.19        & 0.19        
				& \textbf{0.15}    & \textbf{0.15}    & 0.2         & 0.18    & 0.19        & 0.18        & 0.23        & 0.23        & 0.21\\
				& \distpername & 30.41       & 36.68   & 20.26       & 24.68       & 26.31       & 15.78       & 21.61       & 23.33       & \textbf{13.3}        
				& 30.17   & 36.36   & 21.13       & 24.91   & 27.56       & 15.06       & 22.5        & 23.28       & \textbf{14.16}\\
				\billy{Sydney}     & \runtimename  & -- & \textbf{19825}   & -- &-- & -- & -- & -- & -- & -- 
				& 27088   & \textbf{1370}    & -- & 10125   & -- & -- & -- & -- & --\\
			\end{tabular}
		\end{center}
	\end{table}
	
\end{landscape}

\begin{figure}[b!]
\centering
\begin{tabular}{ccc}
	&\fbox{\includegraphics[scale=.25]{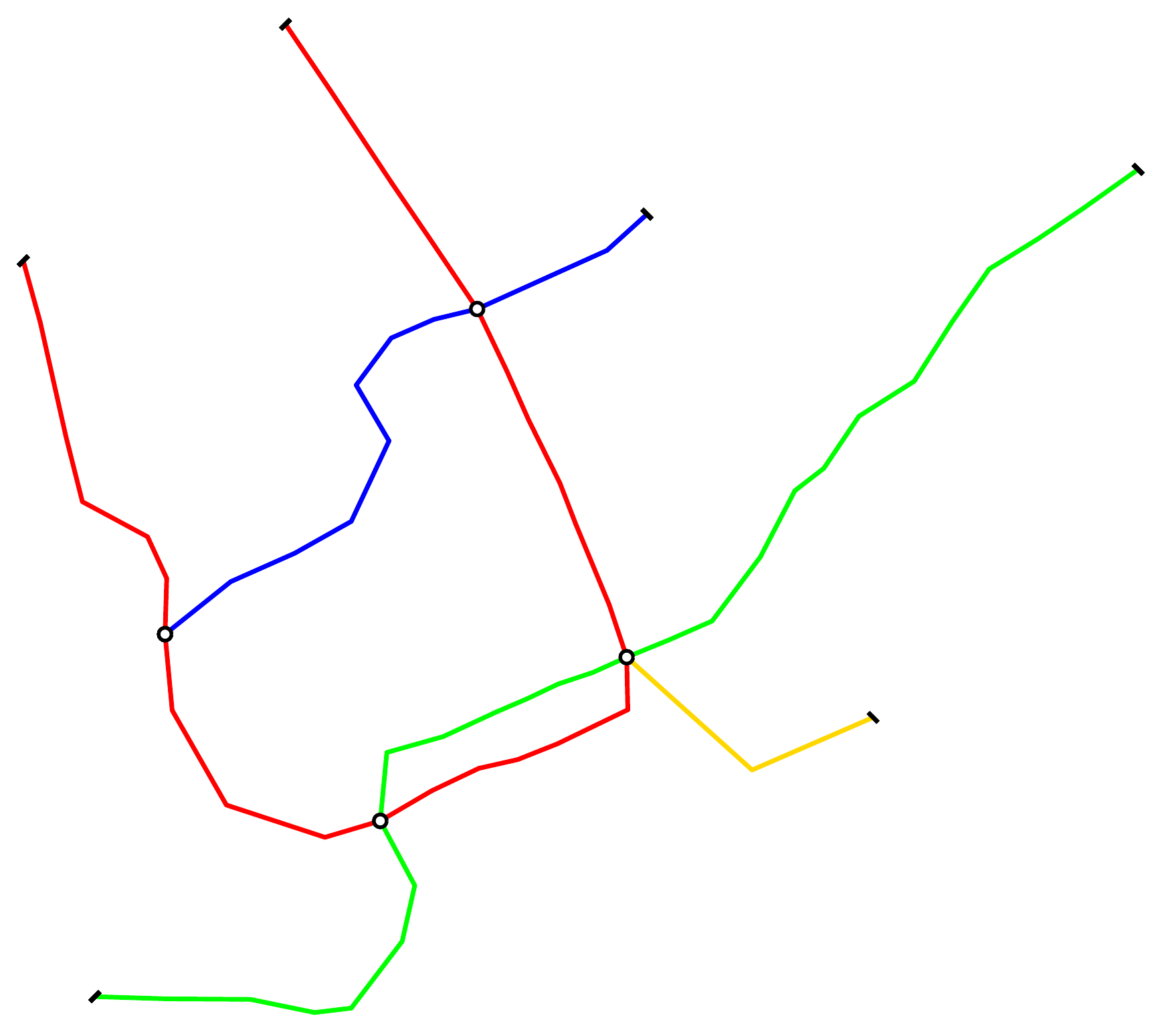}}&\\
	\includegraphics[scale=.25]{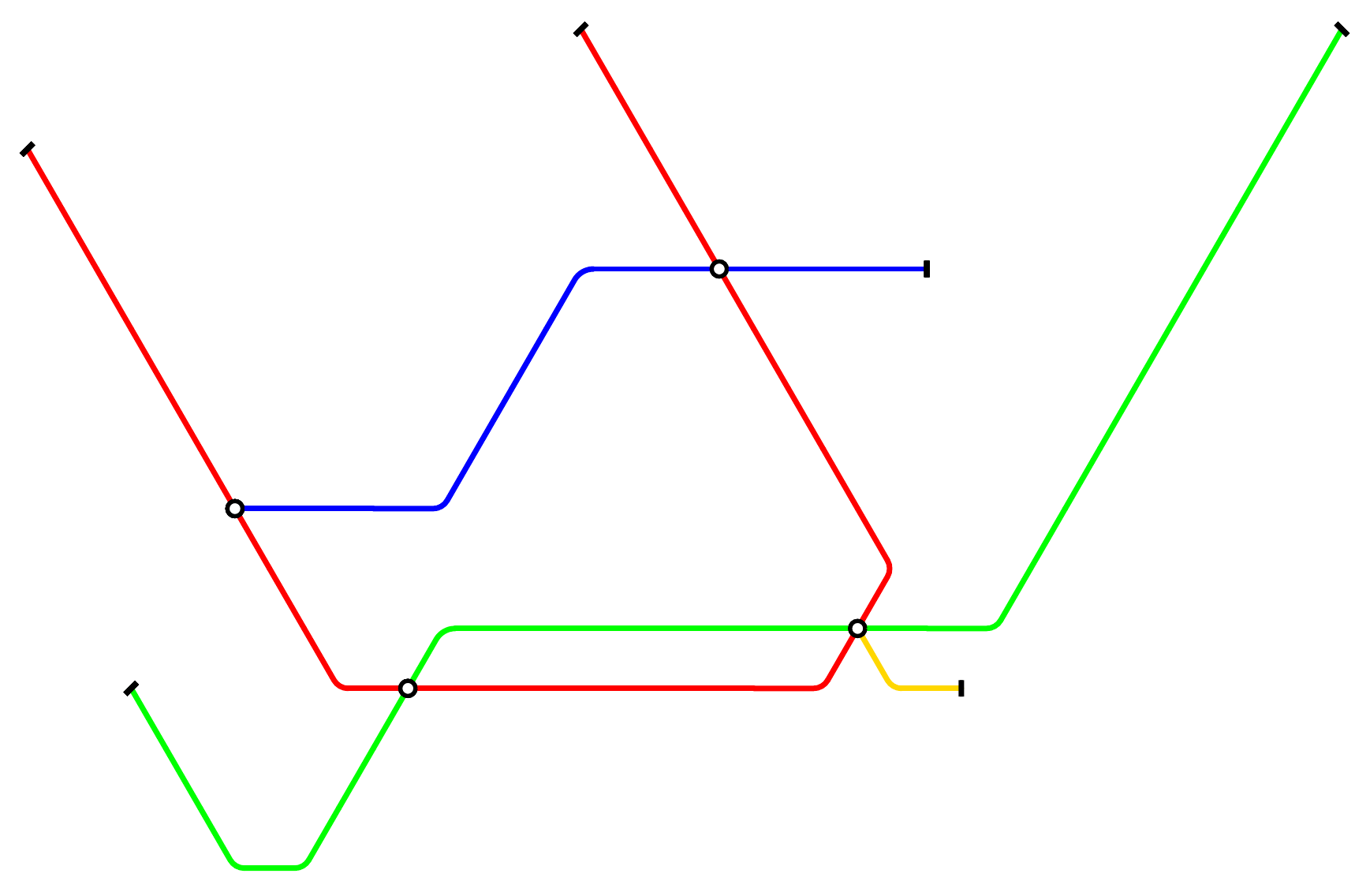} &   \includegraphics[scale=.25]{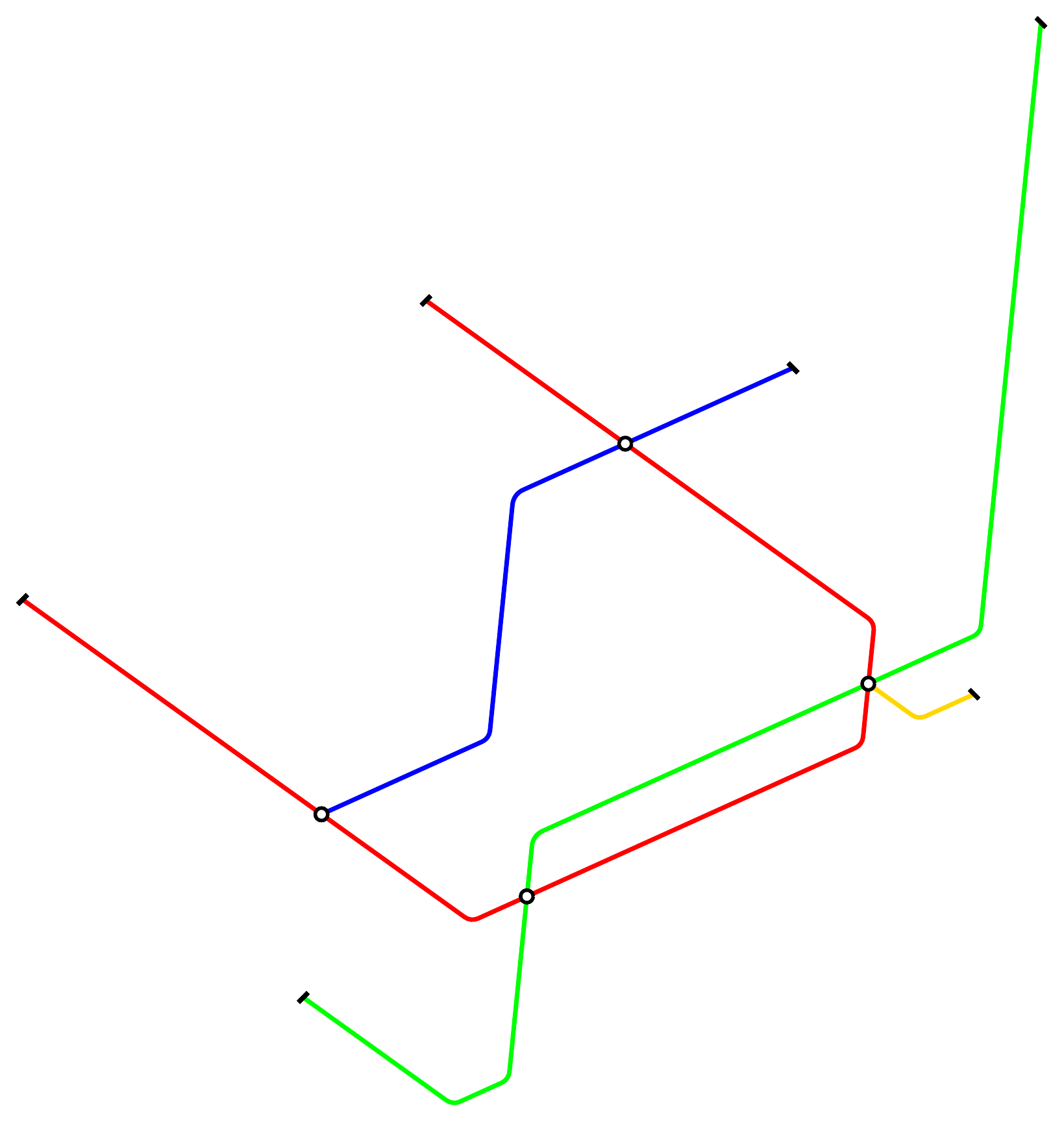} & \includegraphics[scale=.25]{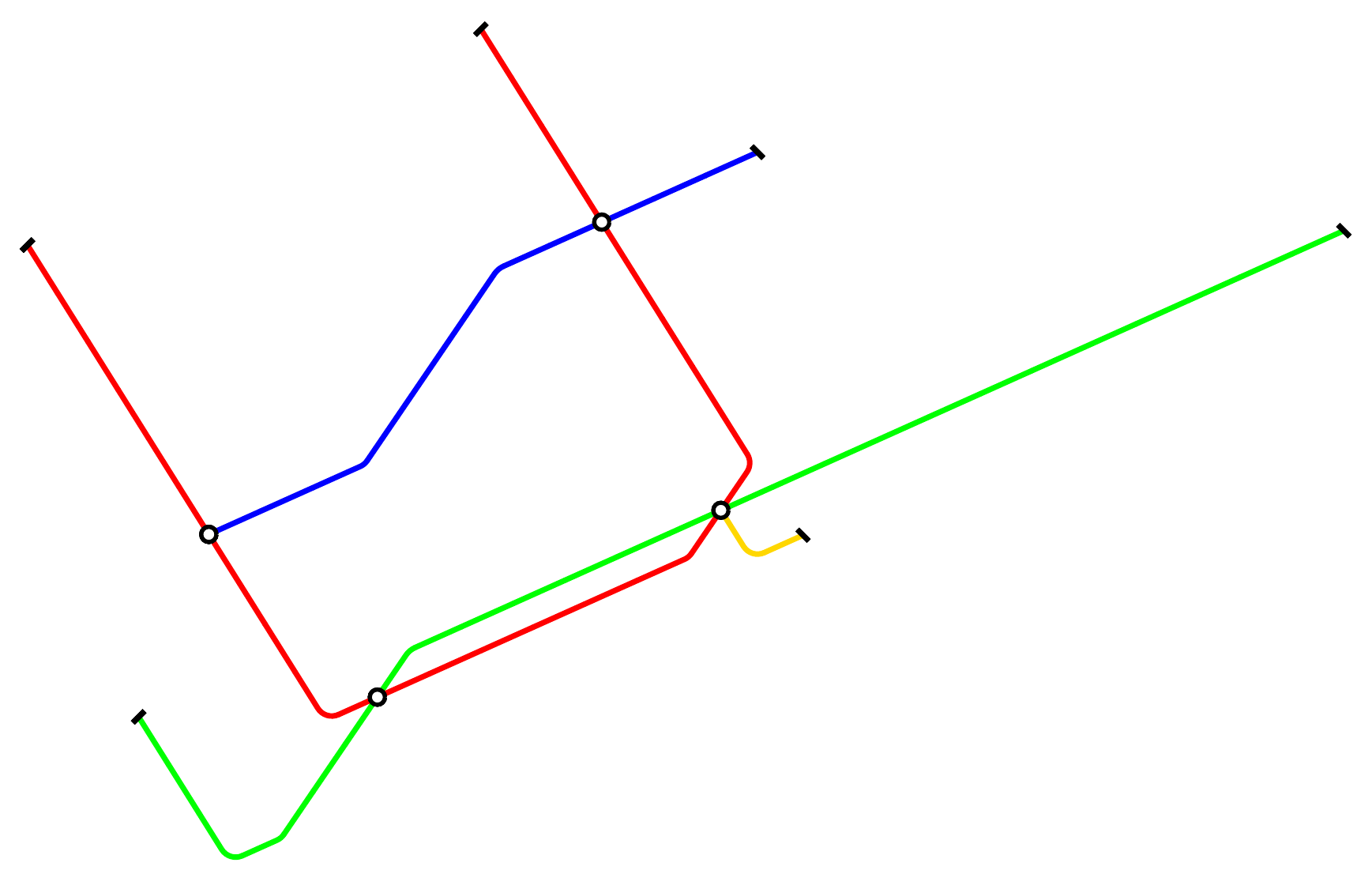} \\
	(a) 3-A & (b) 3-R & (c) 3-I \\[6pt]
	\includegraphics[scale=.25]{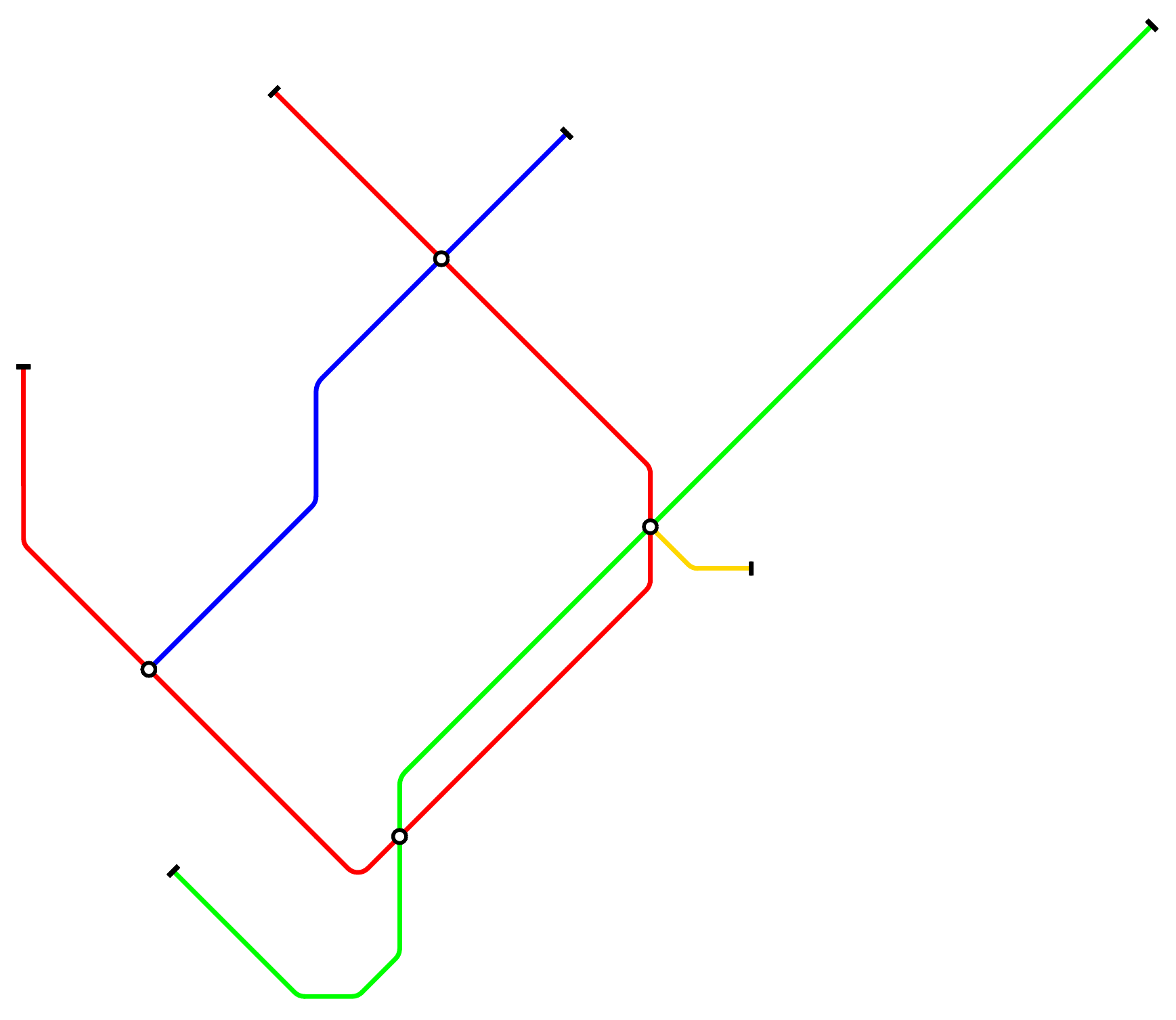} &   \includegraphics[scale=.25]{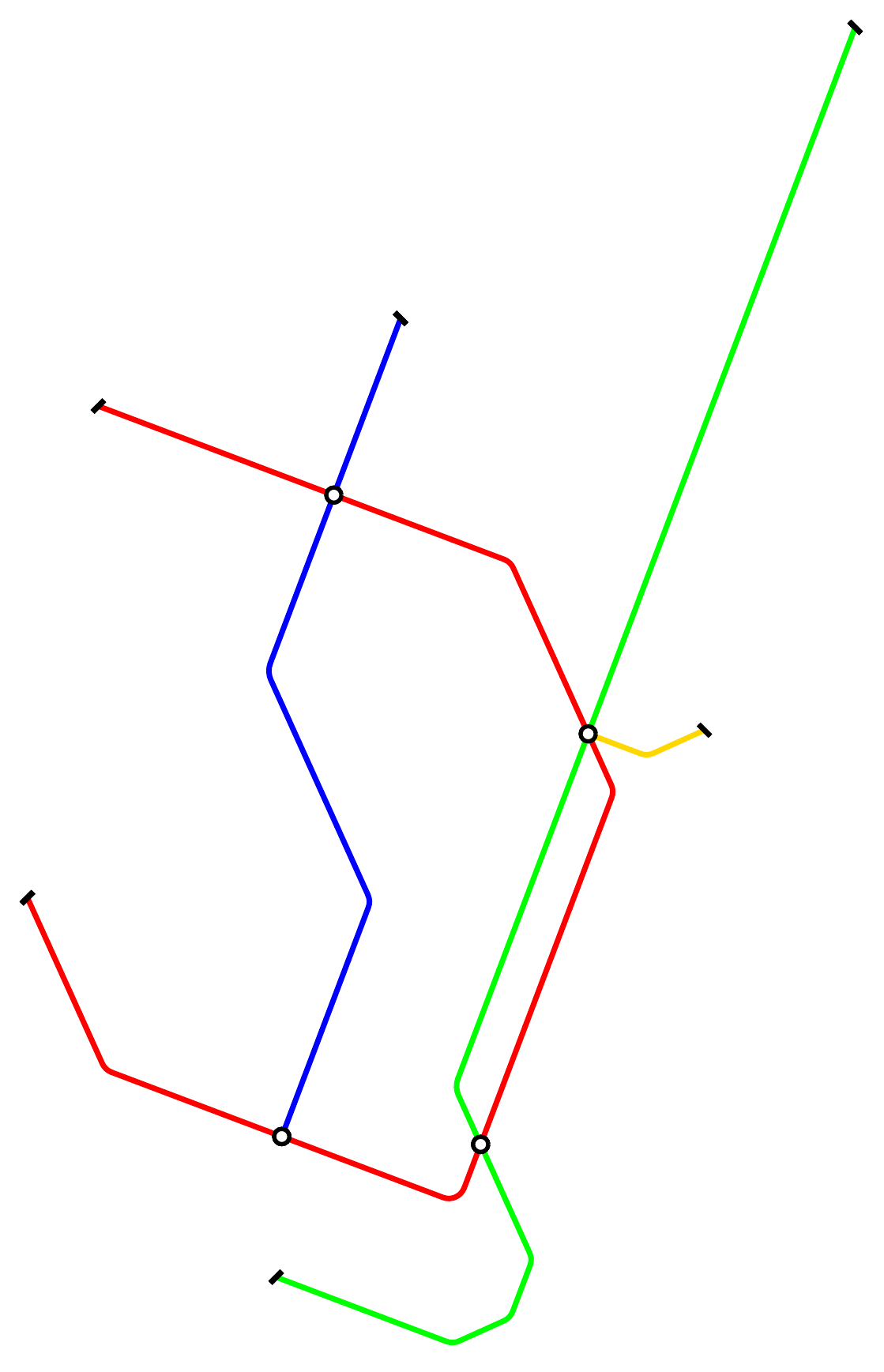} & \includegraphics[scale=.25]{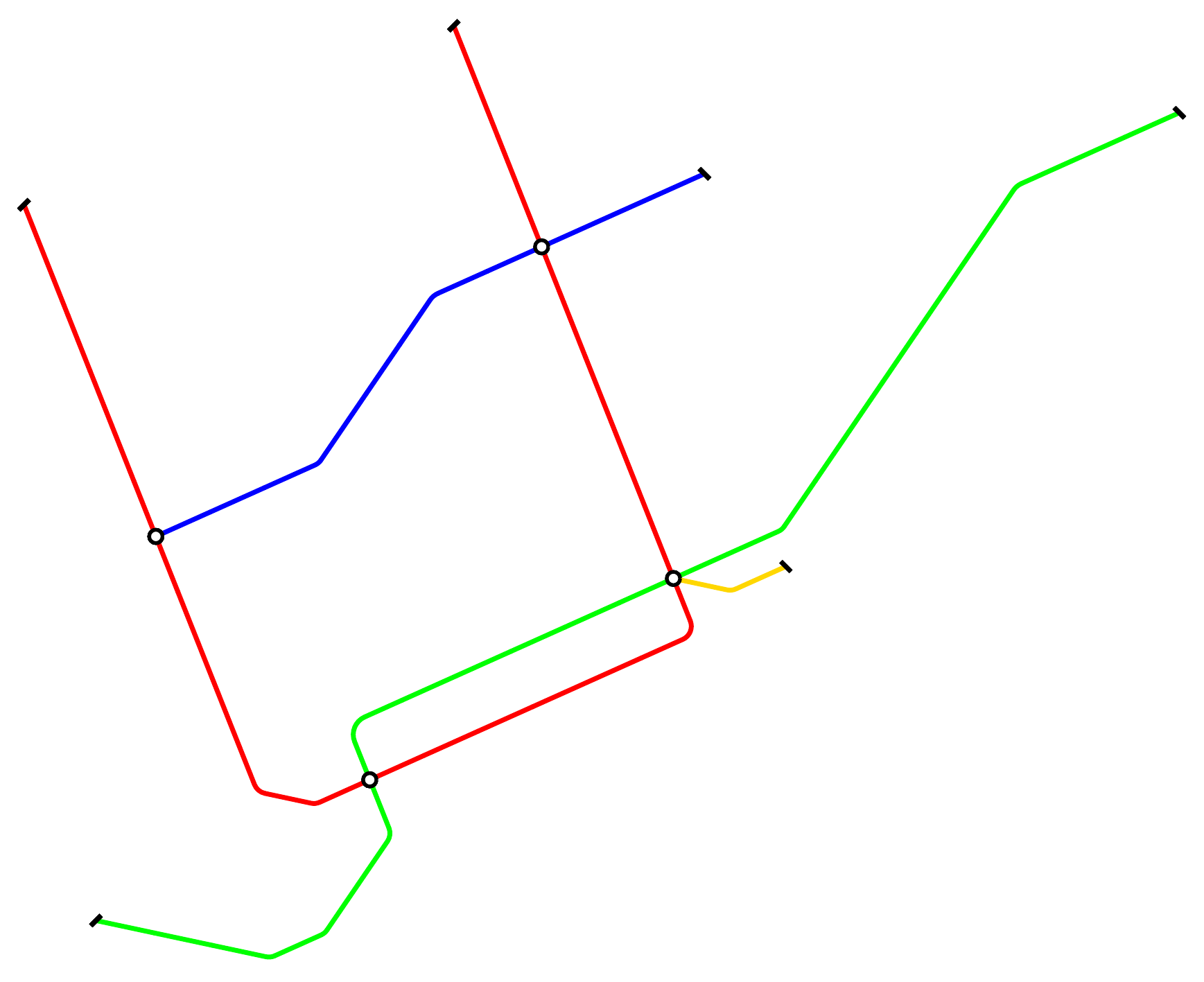} \\
	(d) 4-A & (e) 4-R & (f) 4-I \\[6pt]
	\includegraphics[scale=.25]{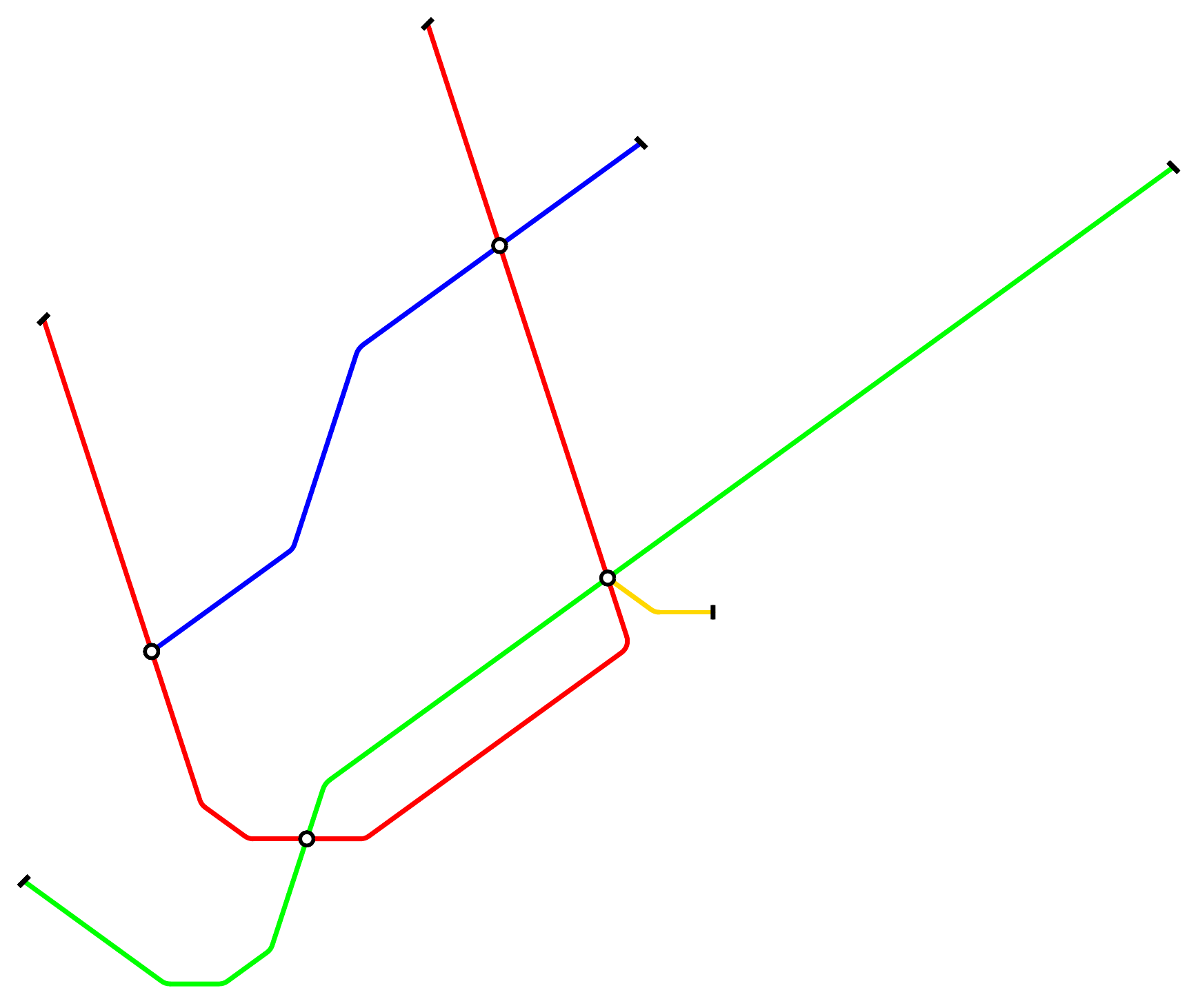} &   \includegraphics[scale=.25]{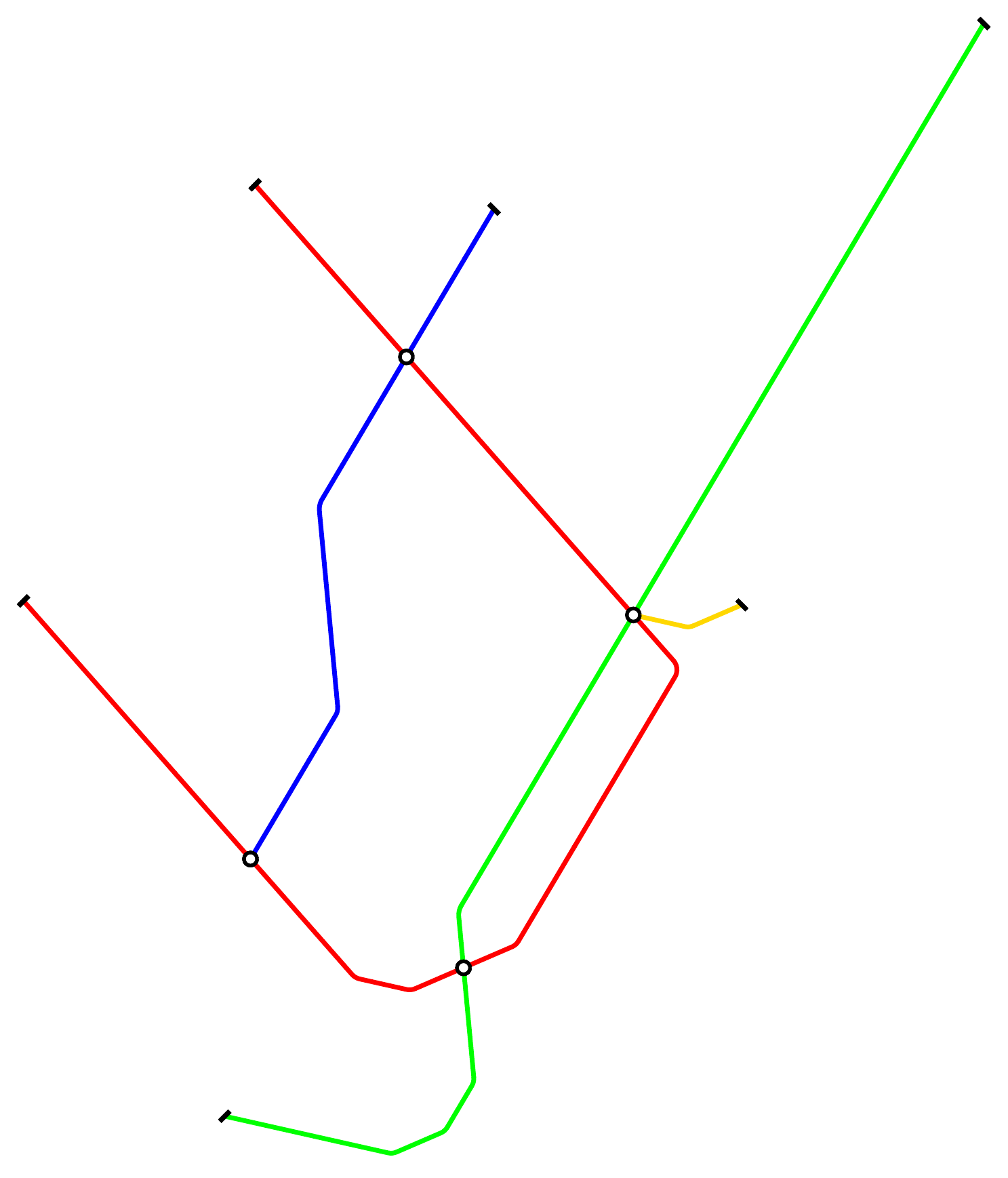} & \includegraphics[scale=.25]{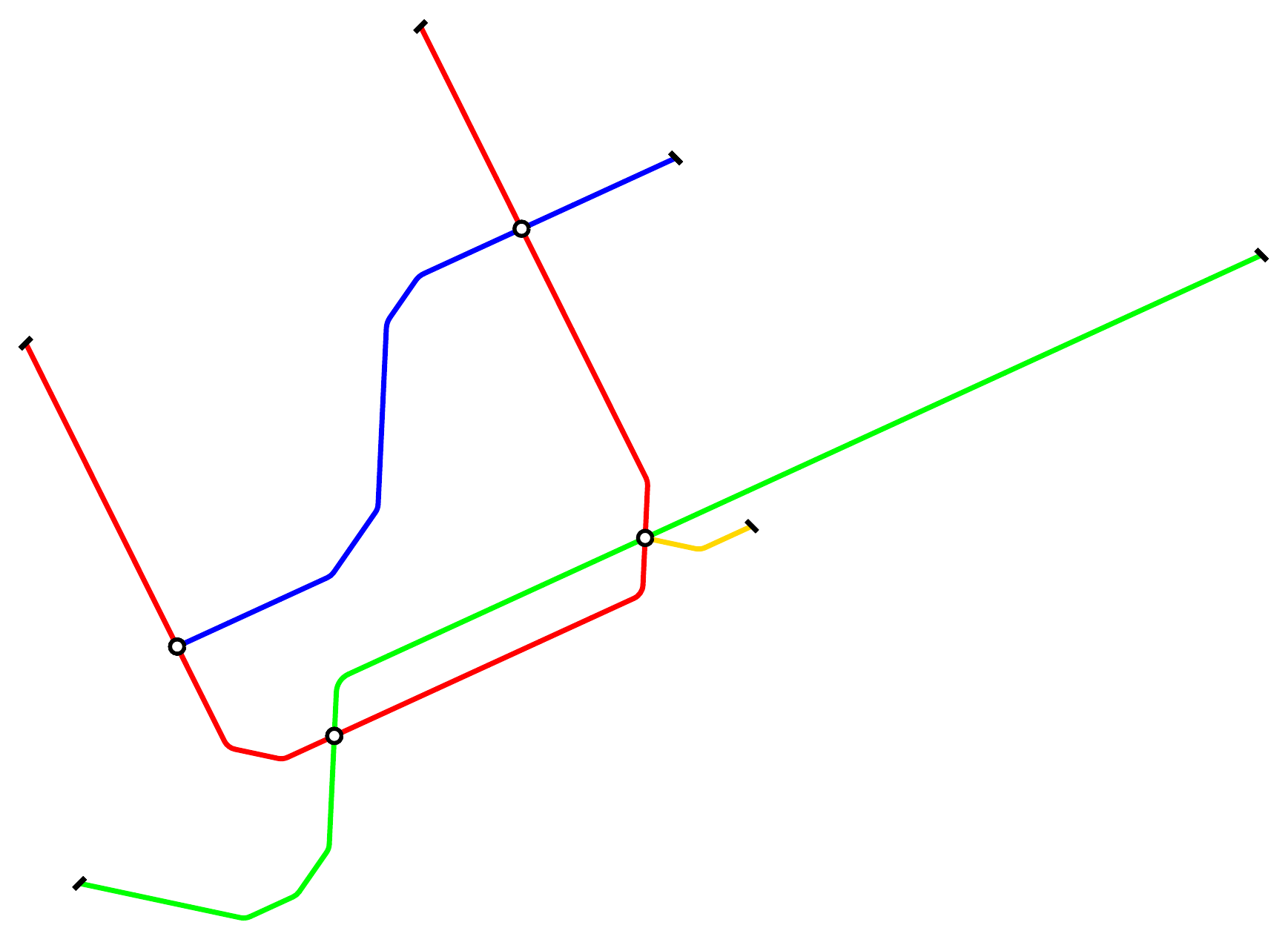} \\
	(g) 5-A & (h) 5-R & (i) 5-I \\[6pt]
\end{tabular}
\caption{Examples of Montreal generated with objective function weights $(f_1, f_2, f_3) = (3, 2, 1)$. Rows are $k = 3, k=4, k=5$ from top to bottom, columns are aligned ($k$-A), regular ($k$-R) and irregular ($k$-I) orientation system from left to right.}\label{fig:ap_montreal321}
\end{figure}
\begin{figure}[b!]
\centering
	\begin{tabular}{ccc}
		&\fbox{\includegraphics[scale=.25]{pictures/metros/input/montreal_input.pdf}}&\\
		\includegraphics[scale=.25]{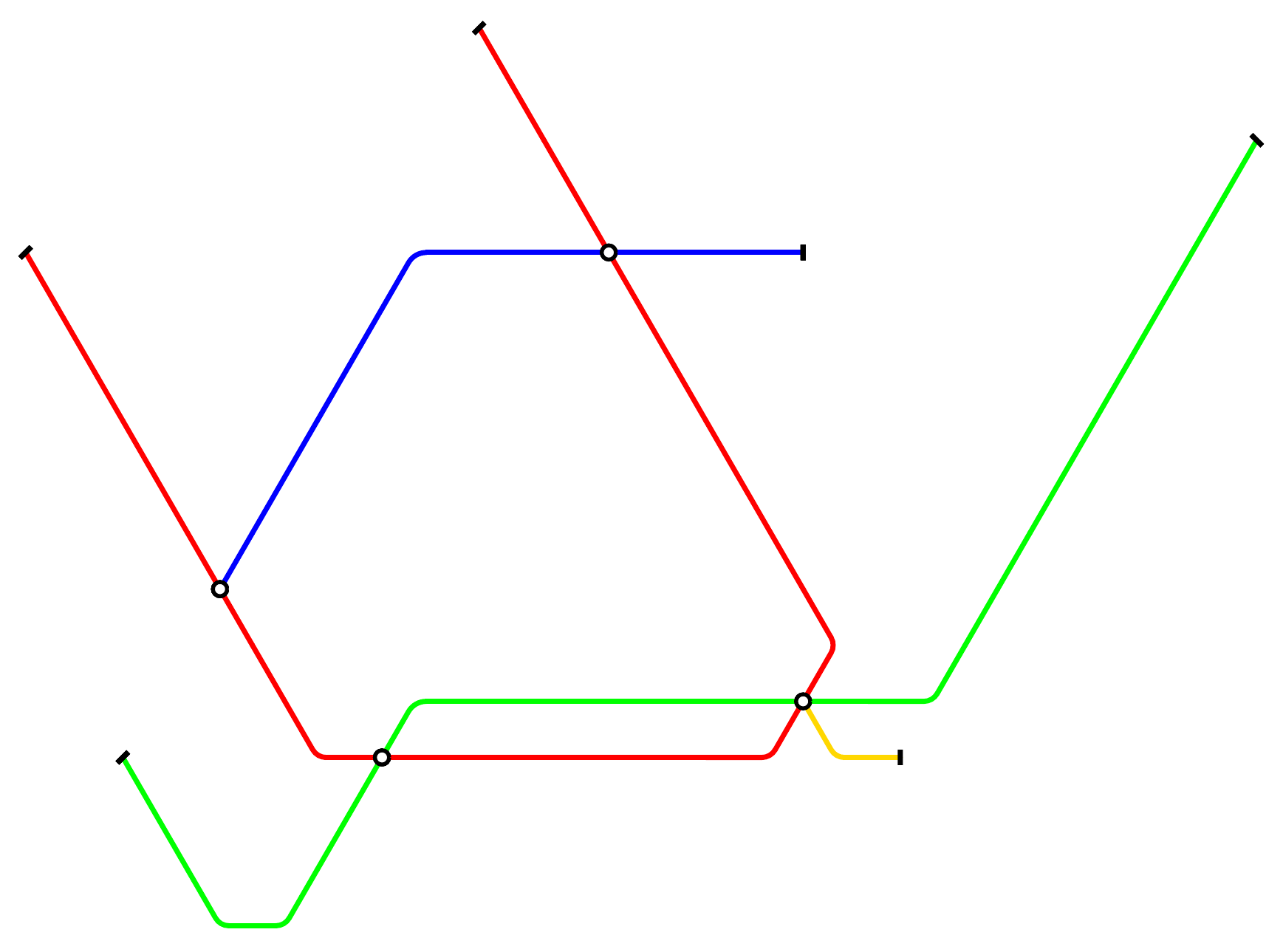} &   \includegraphics[scale=.25]{experiments/DIAGRAMS20-321/montreal-3-CO.pdf} & \includegraphics[scale=.25]{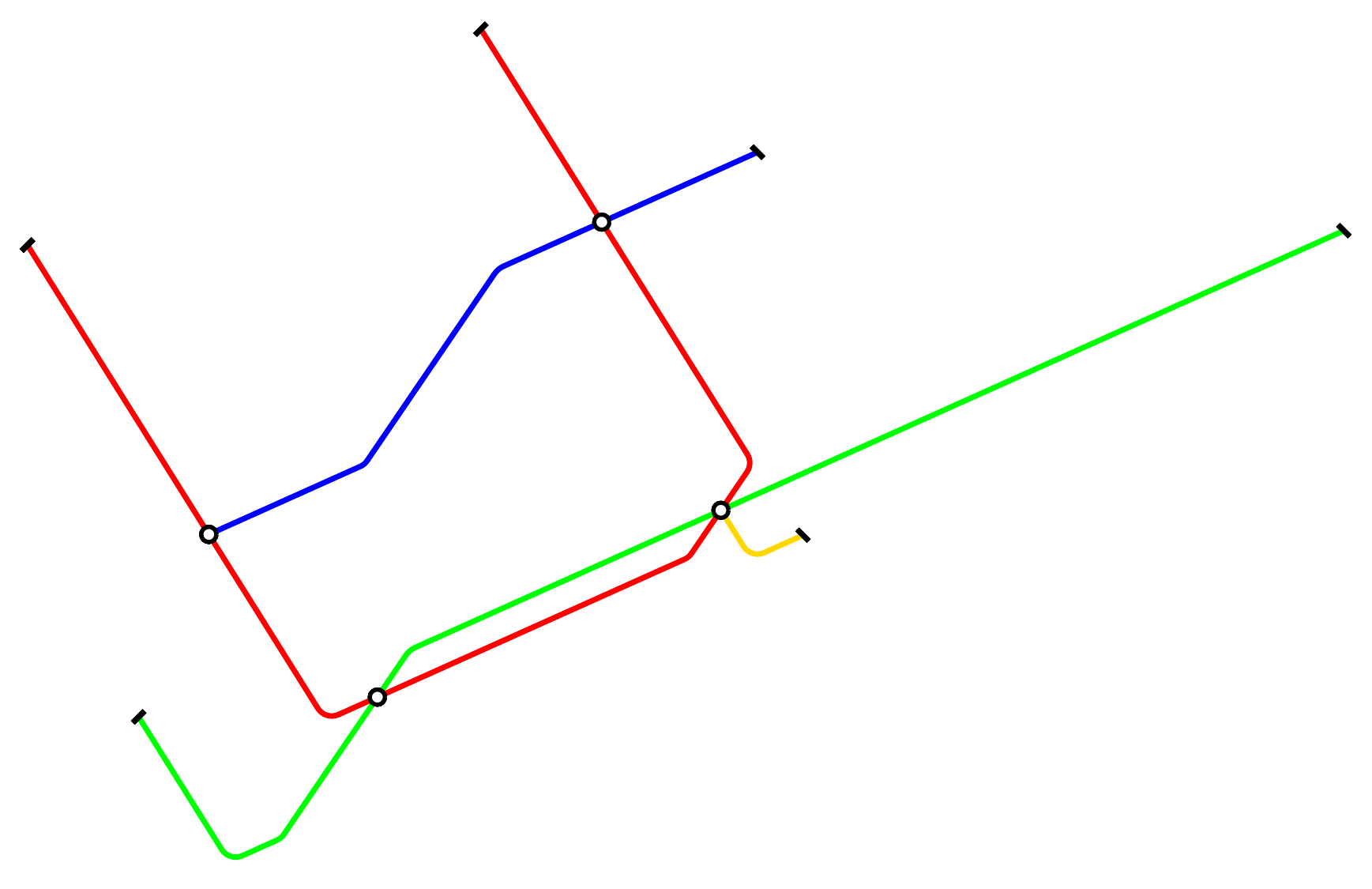} \\
		(a) 3-A & (b) 3-R & (c) 3-I \\[6pt]
		\includegraphics[scale=.25]{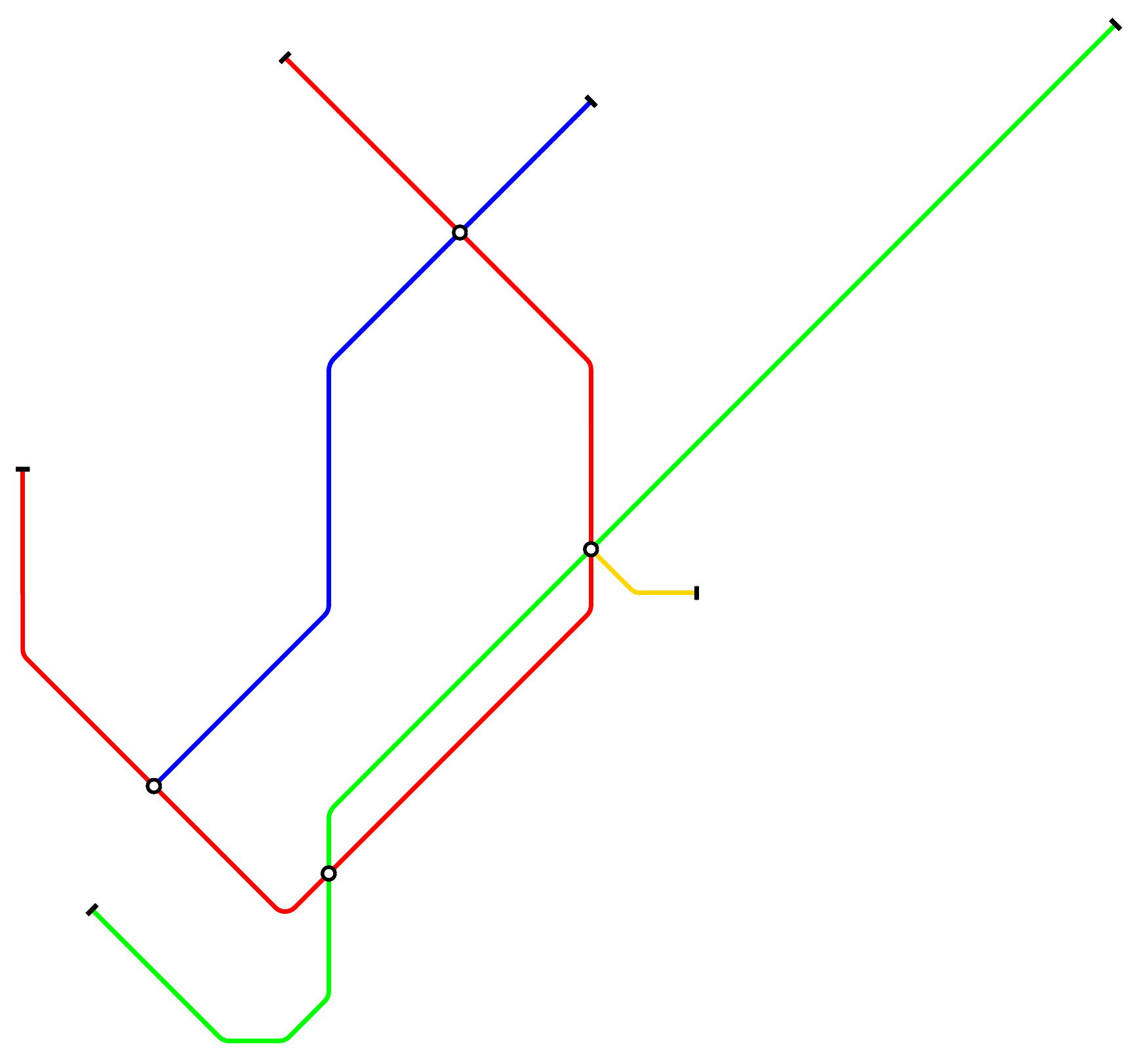} &   \includegraphics[scale=.25]{experiments/DIAGRAMS20-321/montreal-4-CO.pdf} & \includegraphics[scale=.25]{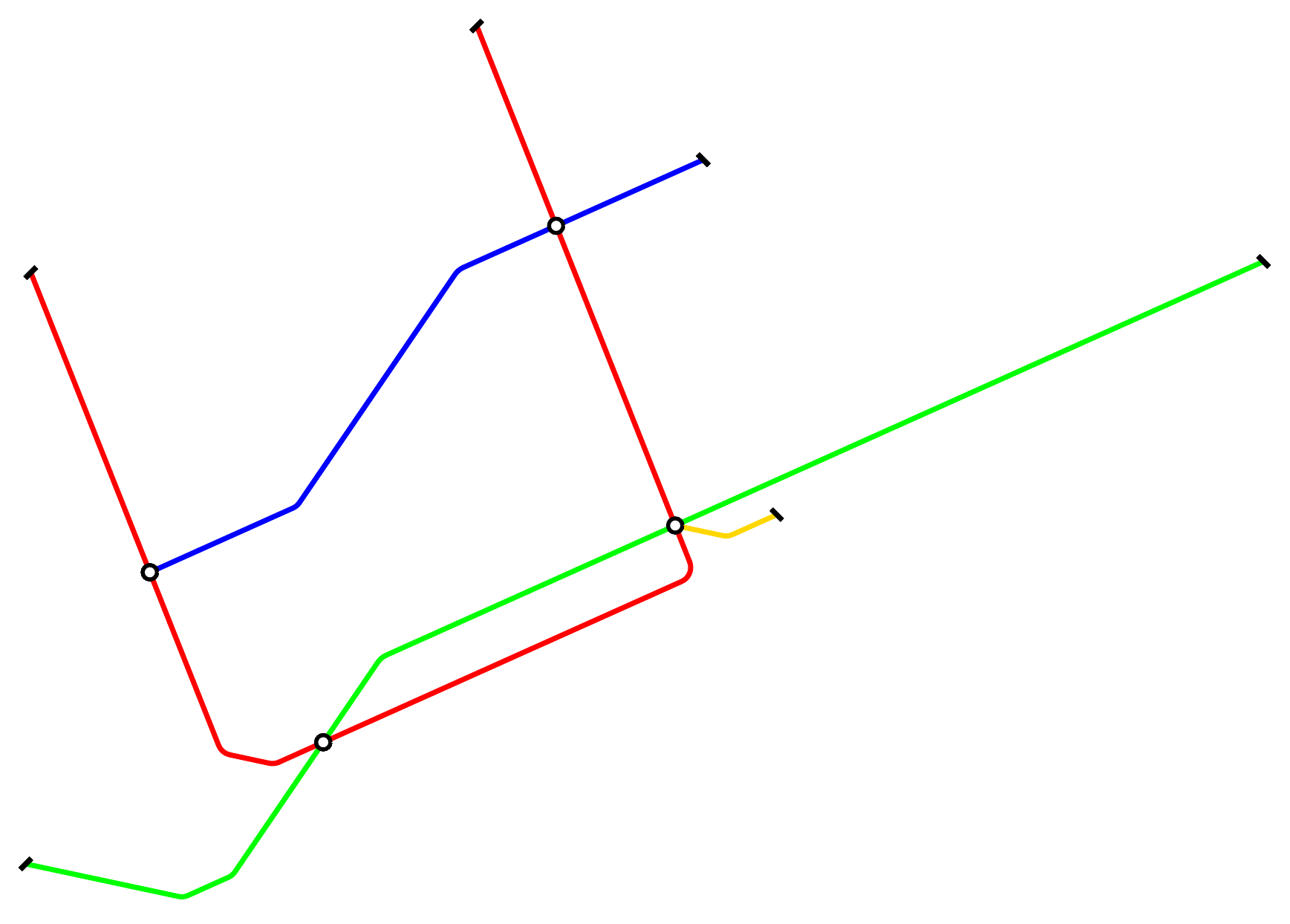} \\
		(d) 4-A & (e) 4-R & (f) 4-I \\[6pt]
		\includegraphics[scale=.25]{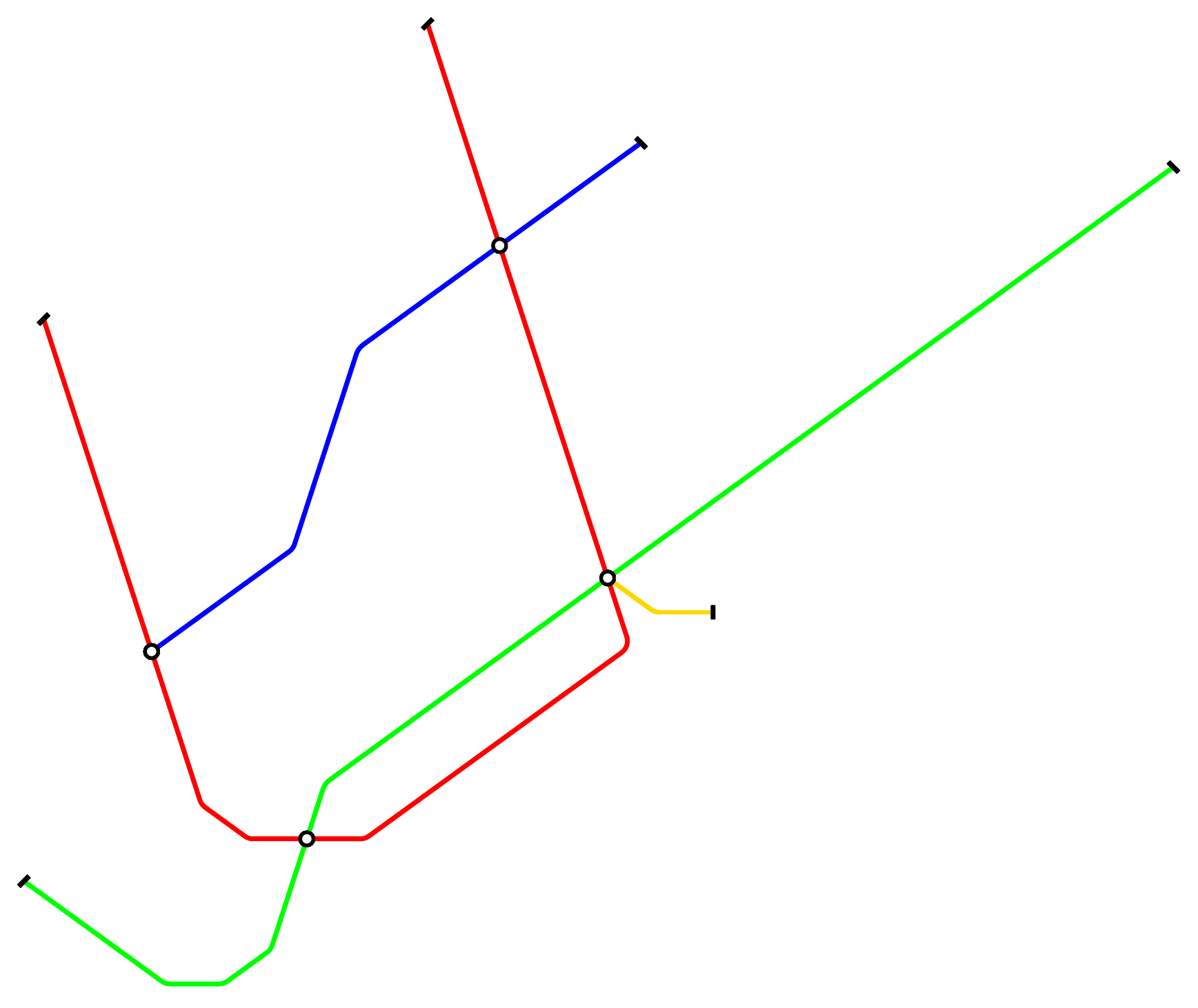} &   \includegraphics[scale=.25]{experiments/DIAGRAMS20-321/montreal-5-CO.pdf} & \includegraphics[scale=.25]{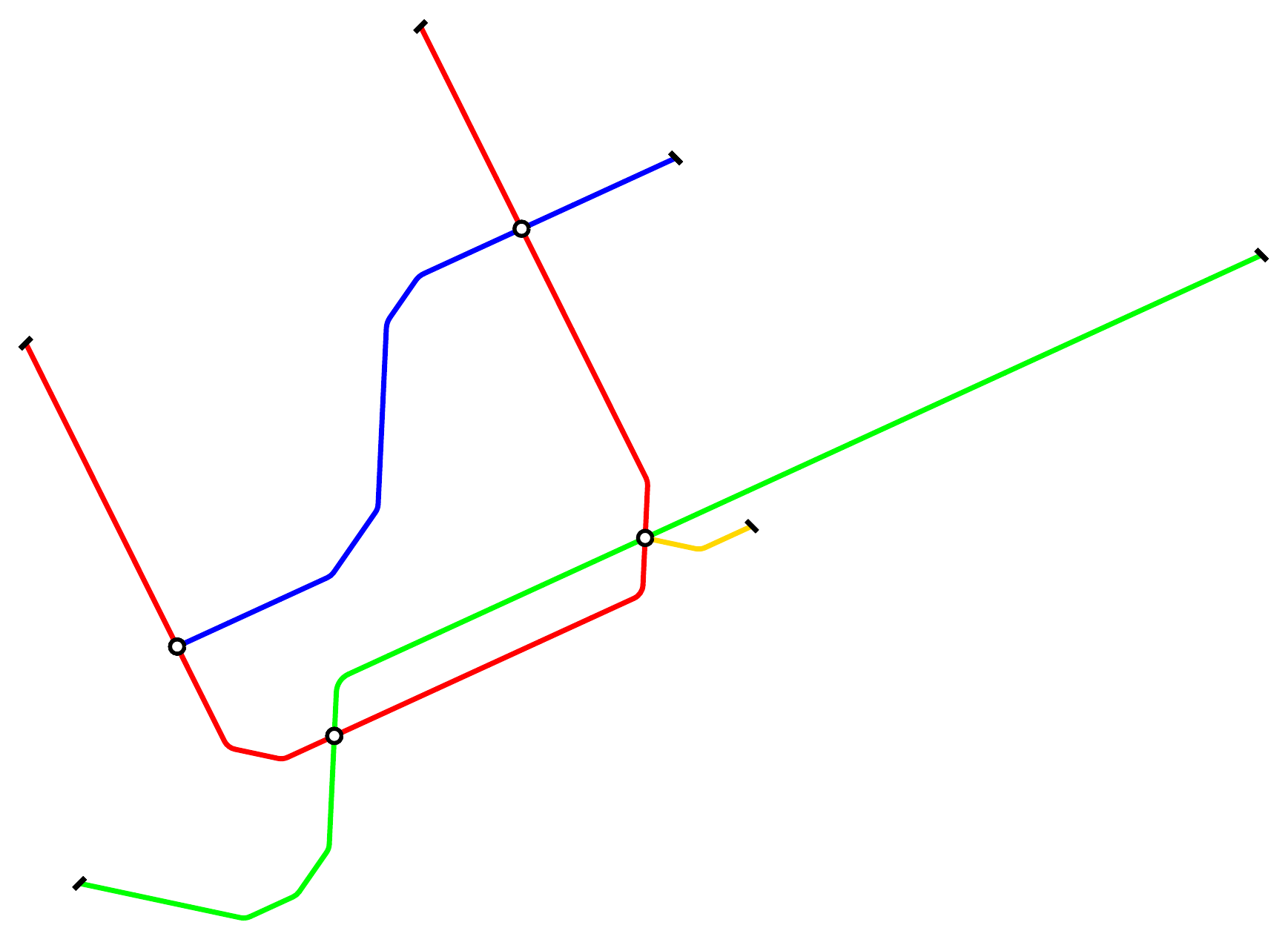} \\
		(g) 5-A & (h) 5-R & (i) 5-I \\[6pt]
	\end{tabular}
	\caption{Examples of Montreal generated with objective function weights $(f_1, f_2, f_3) = (10, 5, 1)$. Rows are $k = 3, k=4, k=5$ from top to bottom, columns are aligned ($k$-A), regular ($k$-R) and irregular ($k$-I) orientation system from left to right.}\label{fig:ap_montreal1051}
\end{figure}

\begin{figure}[b!]
	\centering
	\begin{tabular}{ccc}
		&\fbox{\includegraphics[scale=.25]{pictures/metros/input/wien_input.pdf}}&\\
		\includegraphics[scale=.25]{experiments/DIAGRAMS20-321/wien-3-A.pdf} &   \includegraphics[scale=.25]{experiments/DIAGRAMS20-321/wien-3-CO.pdf} & \includegraphics[scale=.25]{experiments/DIAGRAMS20-321/wien-3-I.pdf} \\
		(a) 3-A & (b) 3-R & (c) 3-I \\[6pt]
		\includegraphics[scale=.25]{experiments/DIAGRAMS20-321/wien-4-A.pdf} &   \includegraphics[scale=.25]{experiments/DIAGRAMS20-321/wien-4-CO.pdf} & \includegraphics[scale=.25]{experiments/DIAGRAMS20-321/wien-4-I.pdf} \\
		(d) 4-A & (e) 4-R & (f) 4-I \\[6pt]
		\includegraphics[scale=.25]{experiments/DIAGRAMS20-321/wien-5-A.pdf} &   \includegraphics[scale=.25]{experiments/DIAGRAMS20-321/wien-5-CO.pdf} & \includegraphics[scale=.25]{experiments/DIAGRAMS20-321/wien-5-I.pdf} \\
		(g) 5-A & (h) 5-R & (i) 5-I \\[6pt]
	\end{tabular}
	\caption{Examples of Vienna generated with objective function weights $(f_1, f_2, f_3) = (3, 2, 1)$. Rows are $k = 3, k=4, k=5$ from top to bottom, columns are aligned ($k$-A), regular ($k$-R) and irregular ($k$-I) orientation system from left to right.}\label{fig:ap_vienna321}
\end{figure}
\begin{figure}[b!]
	\centering
	\begin{tabular}{ccc}
		&\fbox{\includegraphics[scale=.25]{pictures/metros/input/wien_input.pdf}}&\\
		\includegraphics[scale=.25]{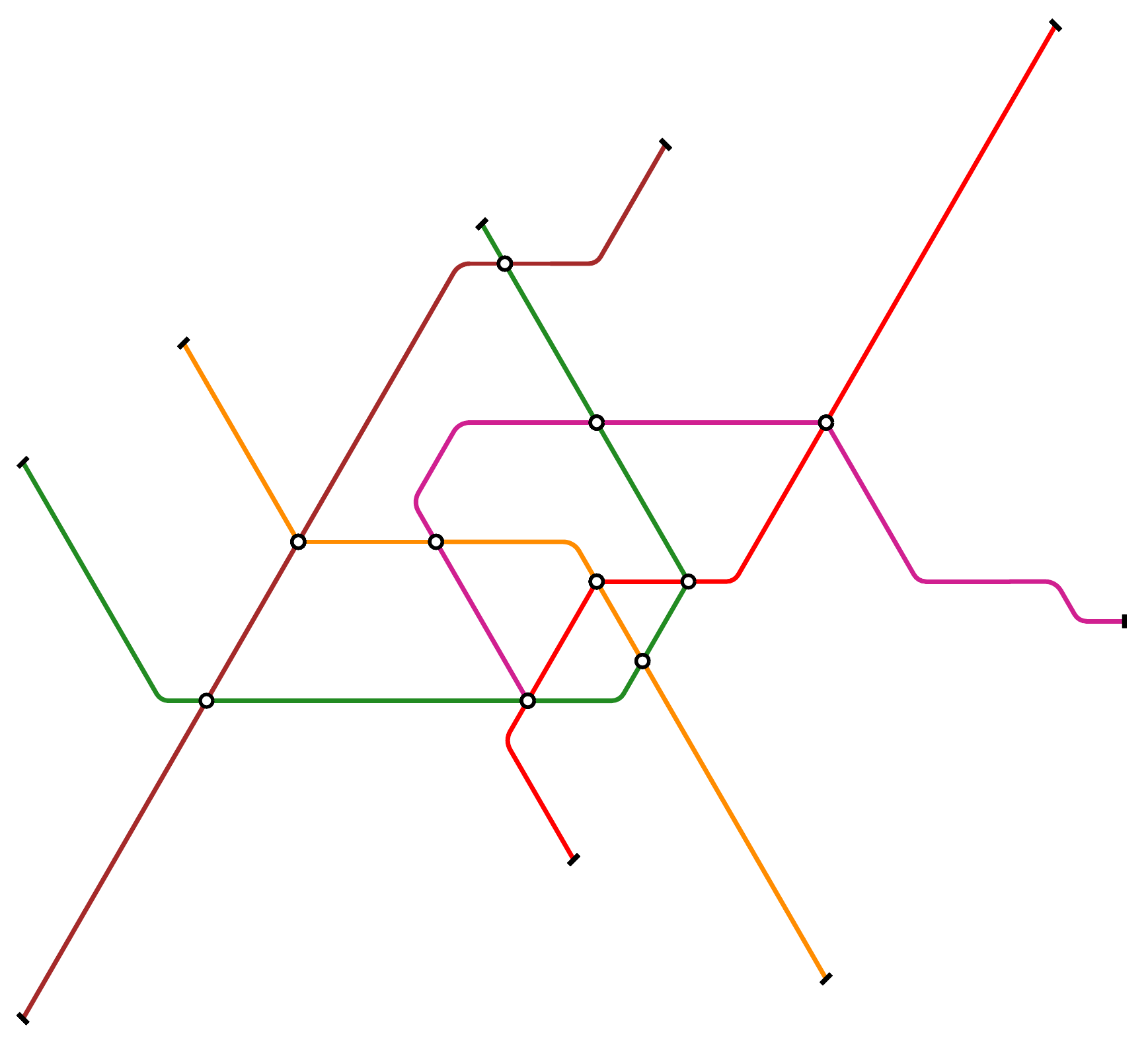} &   \includegraphics[scale=.25]{experiments/DIAGRAMS20-321/wien-3-CO.pdf} & \includegraphics[scale=.25]{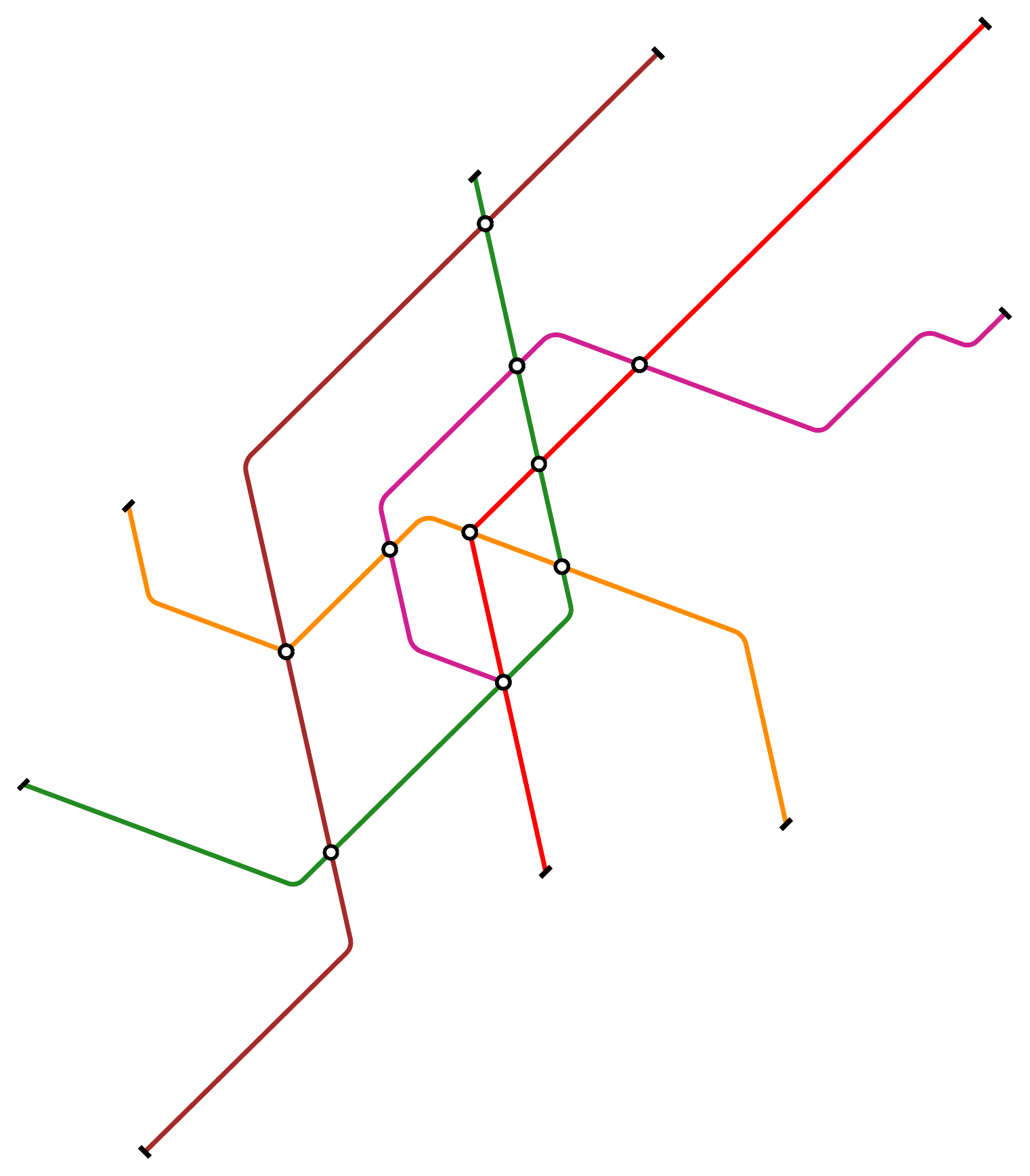} \\
		(a) 3-A & (b) 3-R & (c) 3-I \\[6pt]
		\includegraphics[scale=.25]{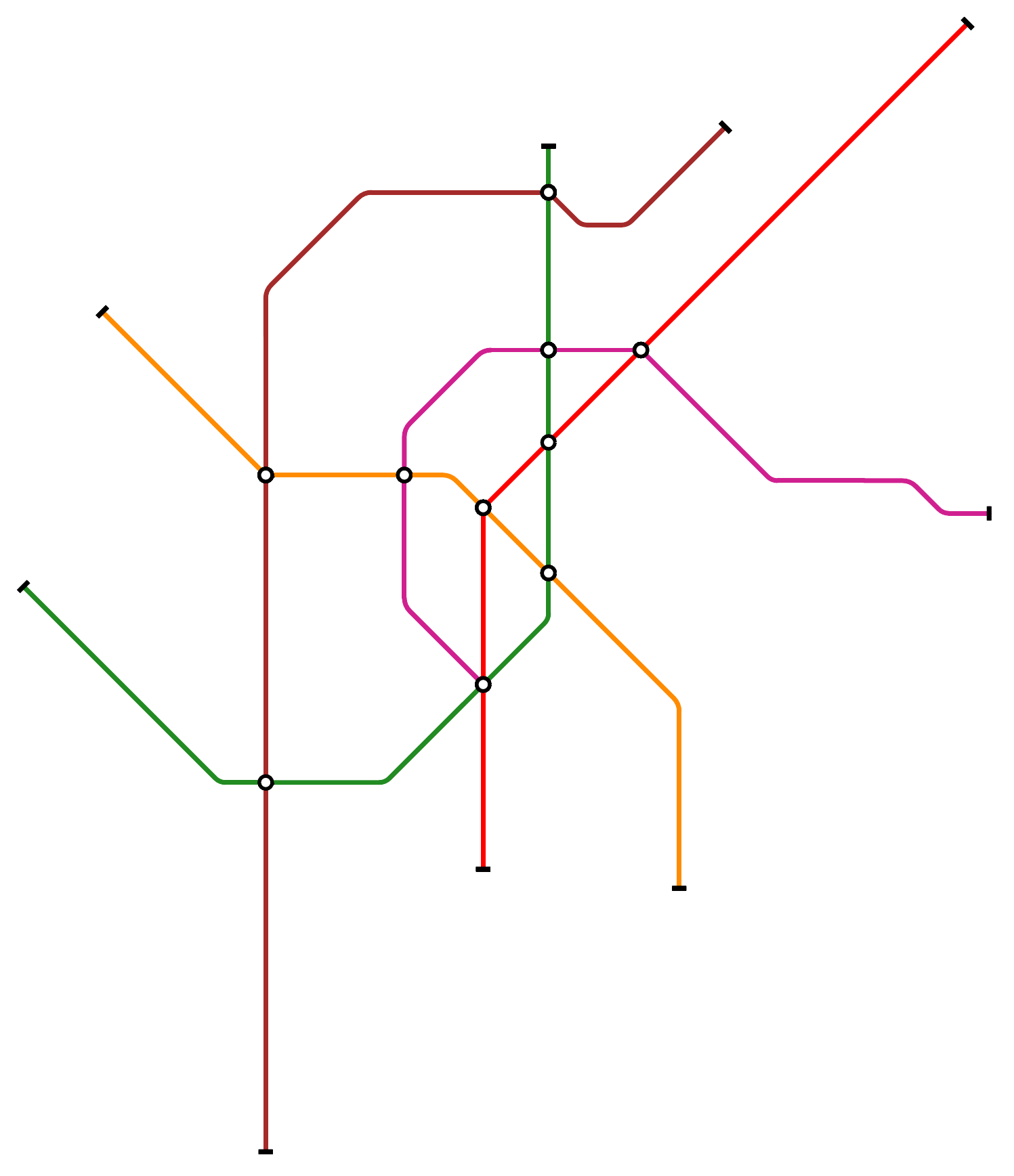} &   \includegraphics[scale=.25]{experiments/DIAGRAMS20-321/wien-4-CO.pdf} & \includegraphics[scale=.25]{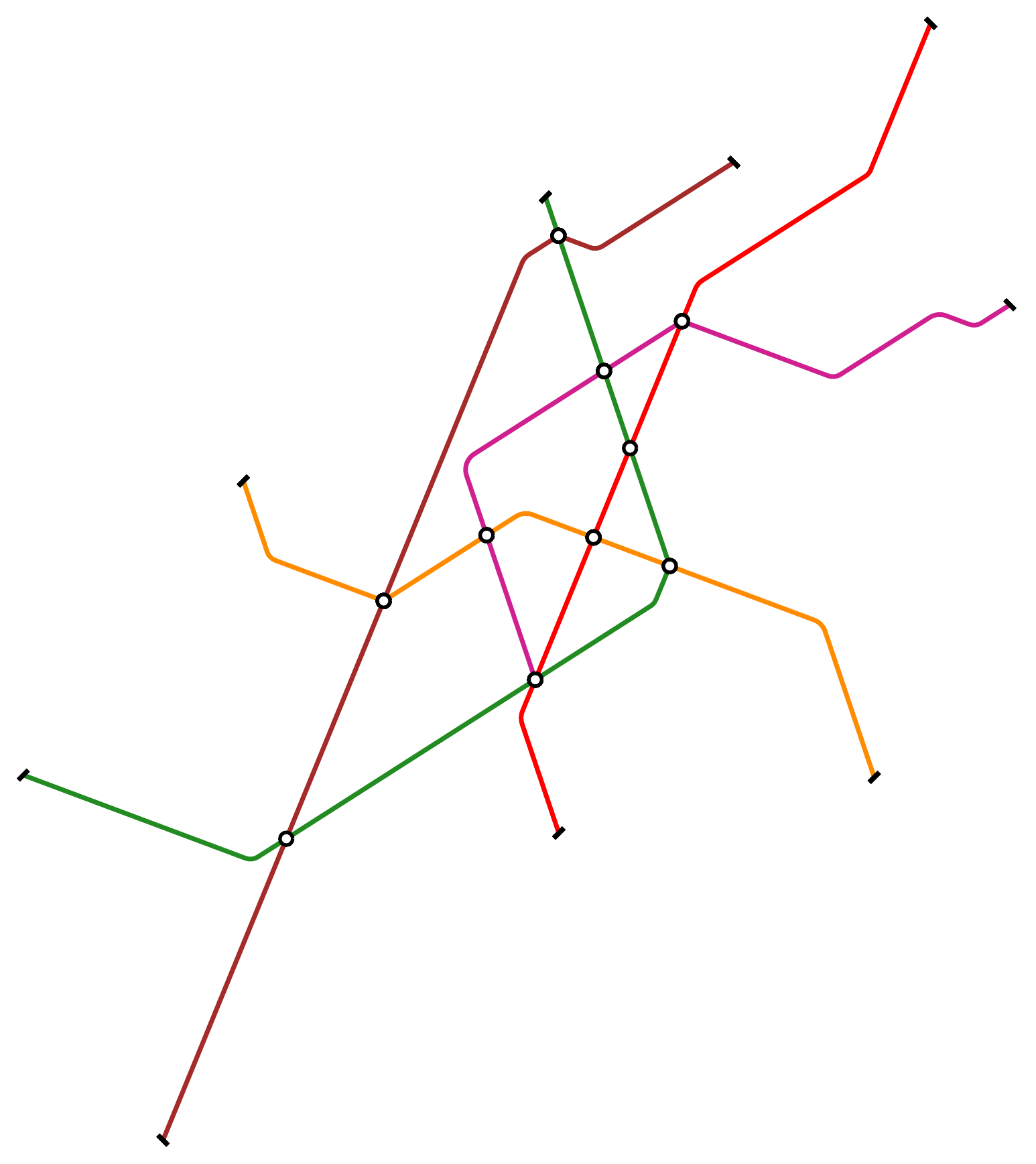} \\
		(d) 4-A & (e) 4-R & (f) 4-I \\[6pt]
		\includegraphics[scale=.25]{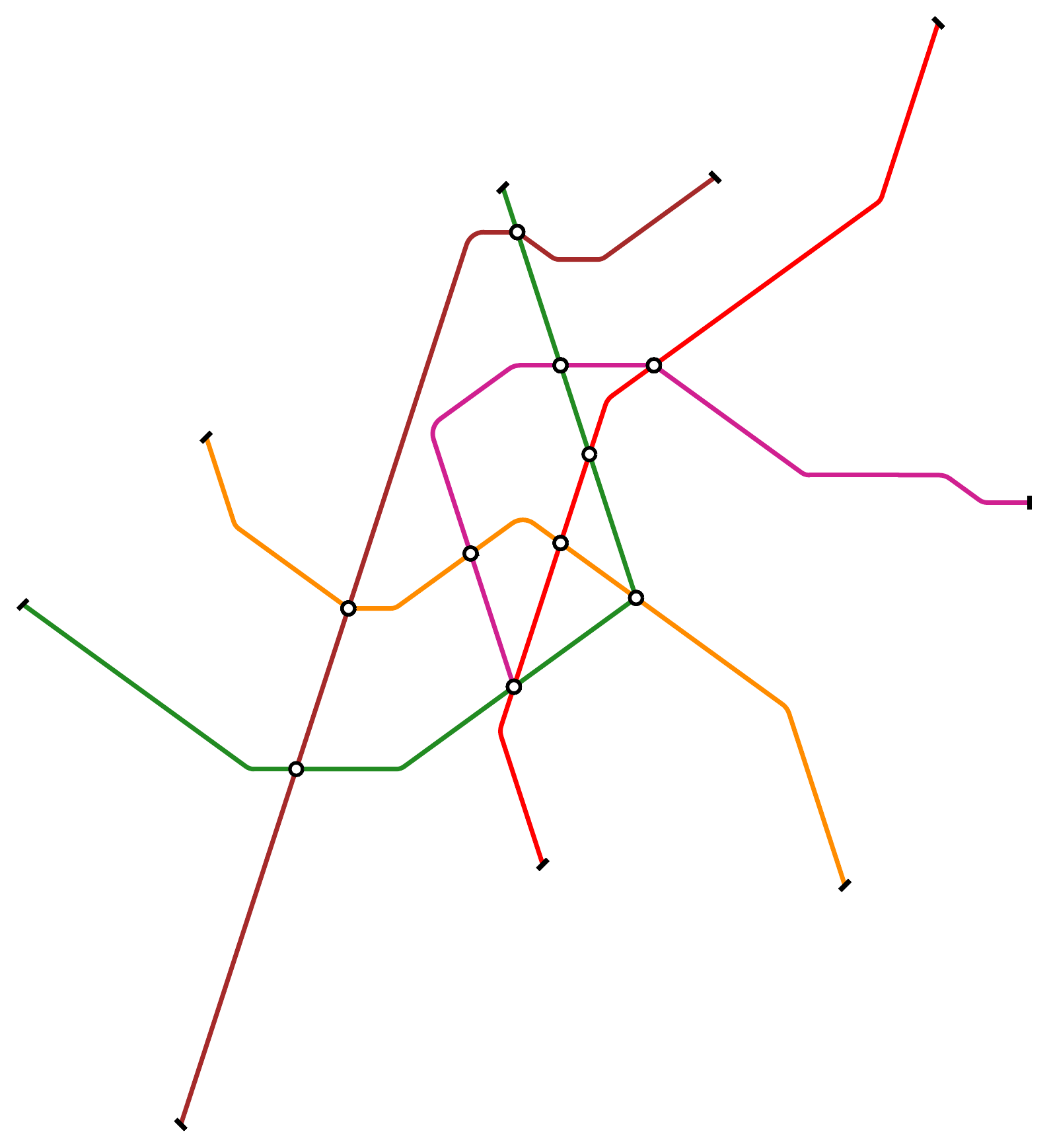} &   \includegraphics[scale=.25]{experiments/DIAGRAMS20-321/wien-5-CO.pdf} & \includegraphics[scale=.25]{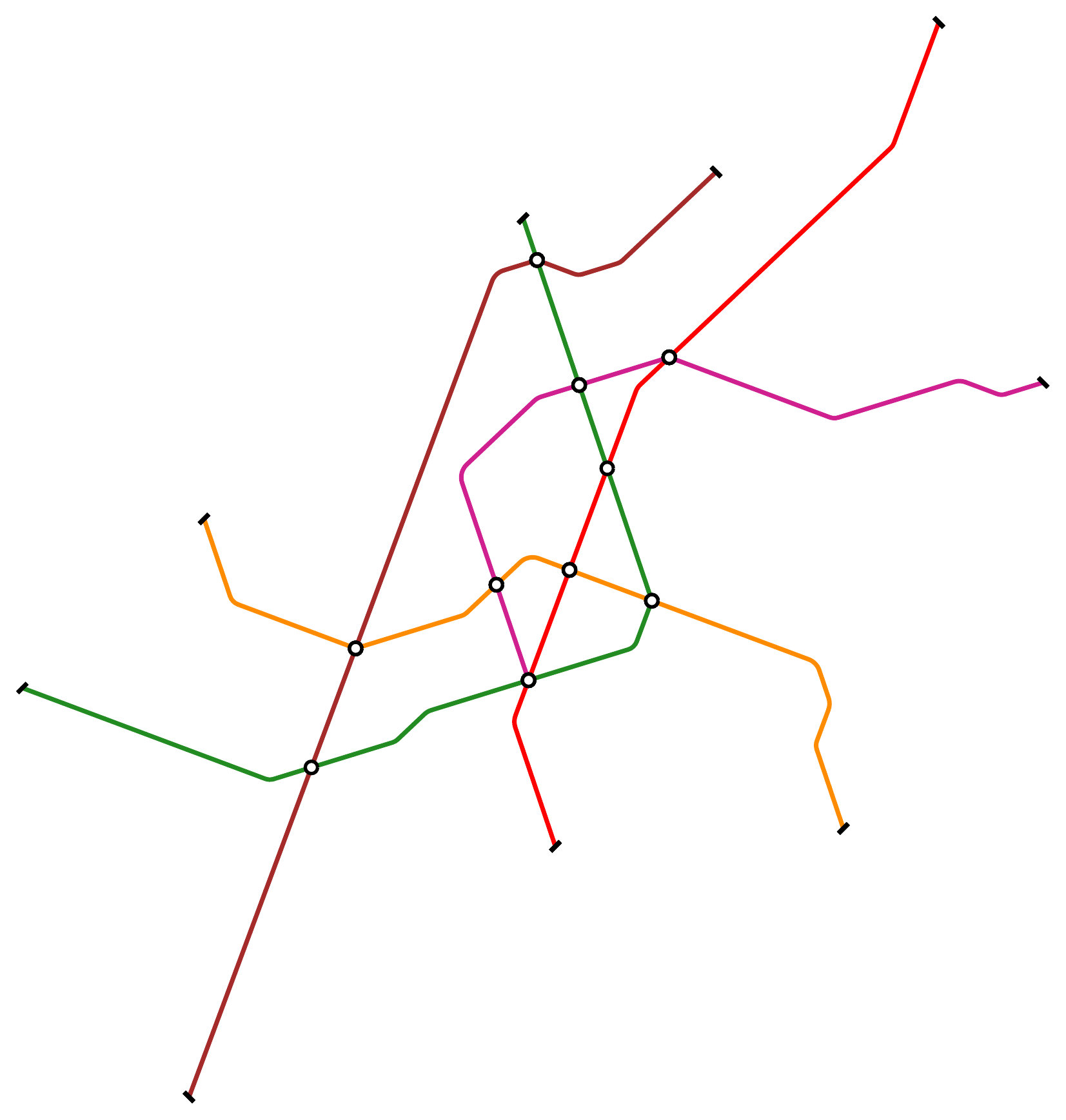} \\
		(g) 5-A & (h) 5-R & (i) 5-I \\[6pt]
	\end{tabular}
	\caption{Examples of Vienna generated with objective function weights $(f_1, f_2, f_3) = (10, 5, 1)$. Rows are $k = 3, k=4, k=5$ from top to bottom, columns are aligned ($k$-A), regular ($k$-R) and irregular ($k$-I) orientation system from left to right.}\label{fig:ap_vienna1051}
\end{figure}

\begin{figure}[b!]
\centering
	\begin{tabular}{ccc}
		&\fbox{\includegraphics[scale=.25]{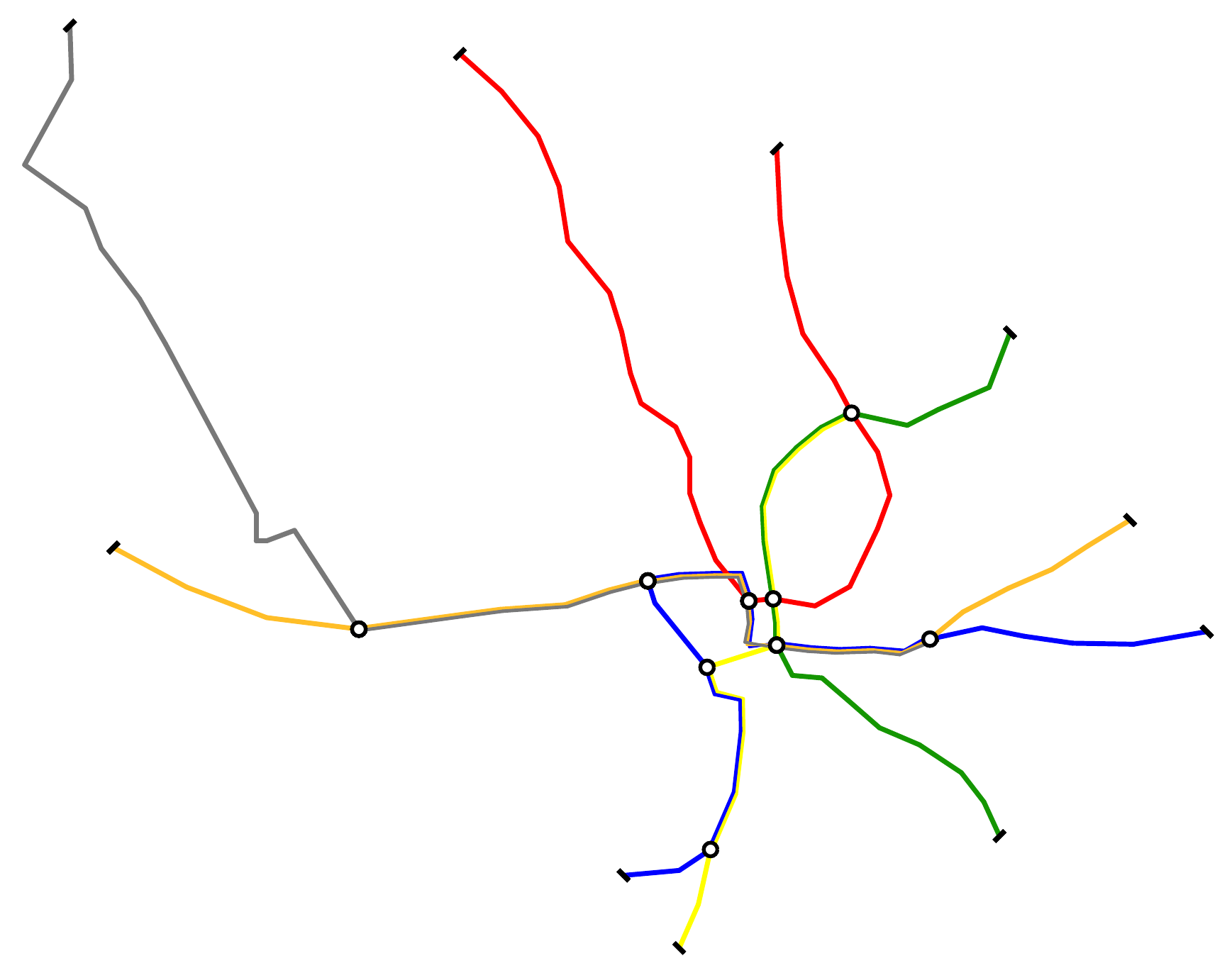}}&\\
		\includegraphics[scale=.25]{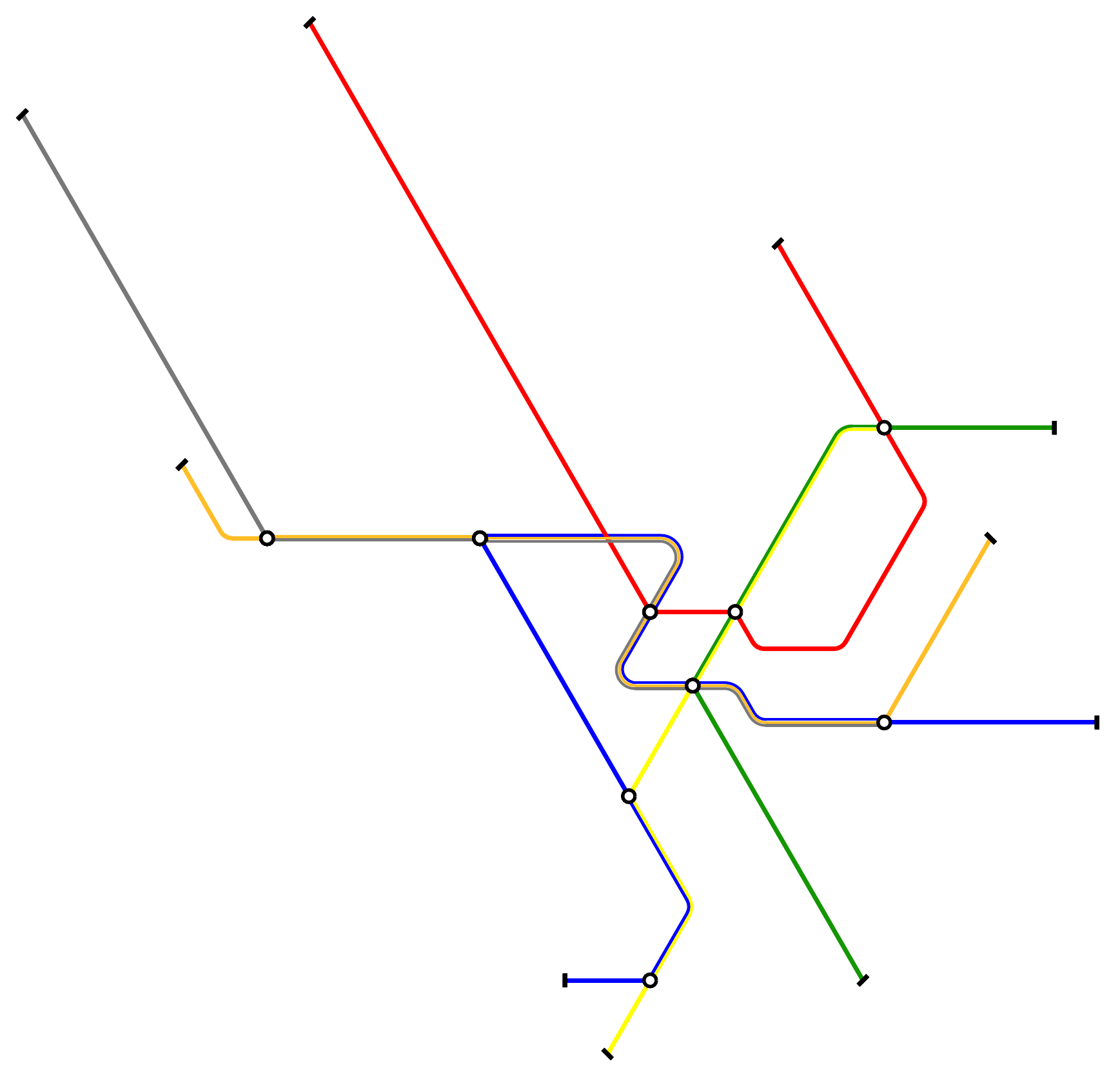} &   \includegraphics[scale=.25]{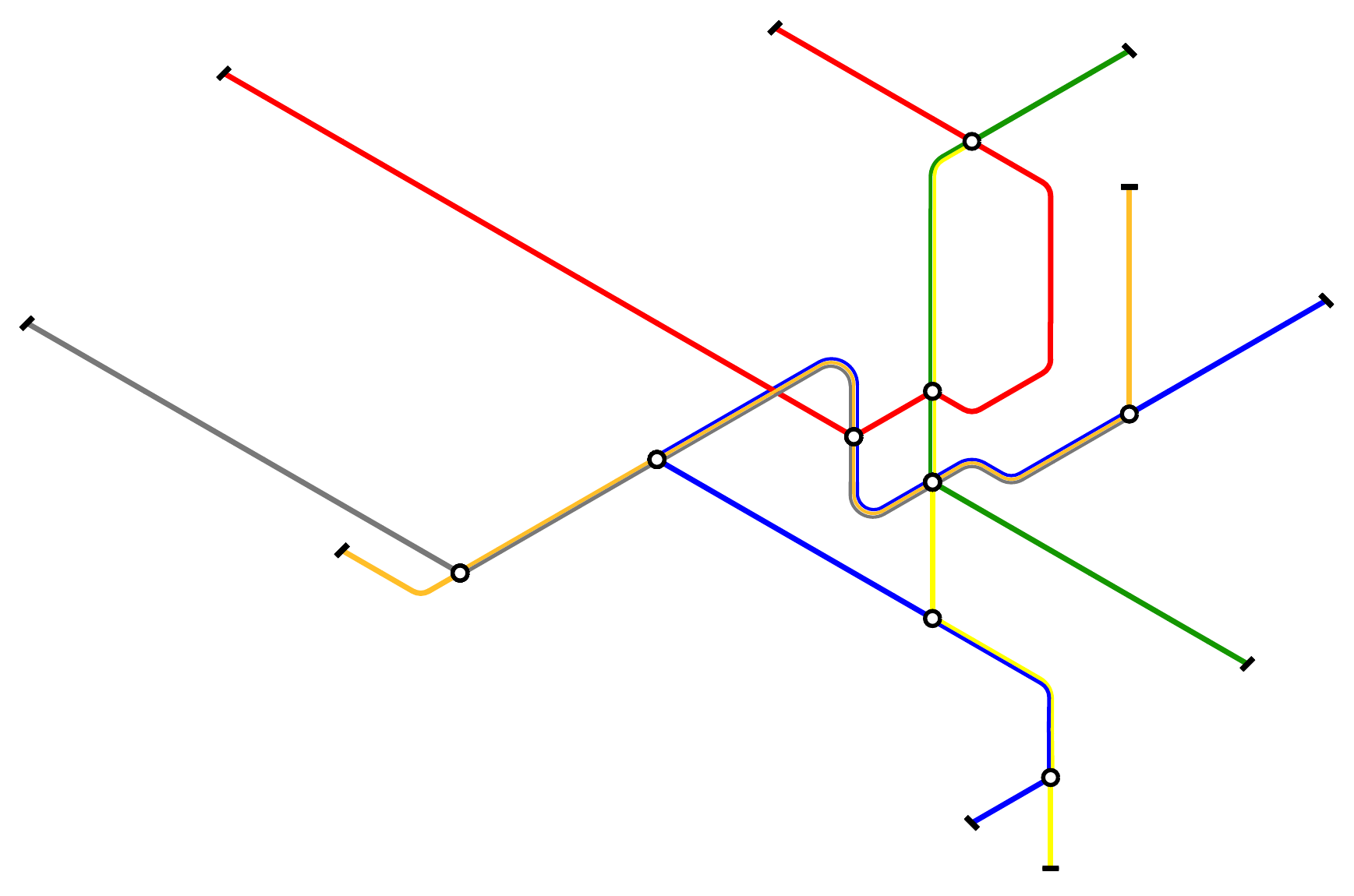} & \includegraphics[scale=.25]{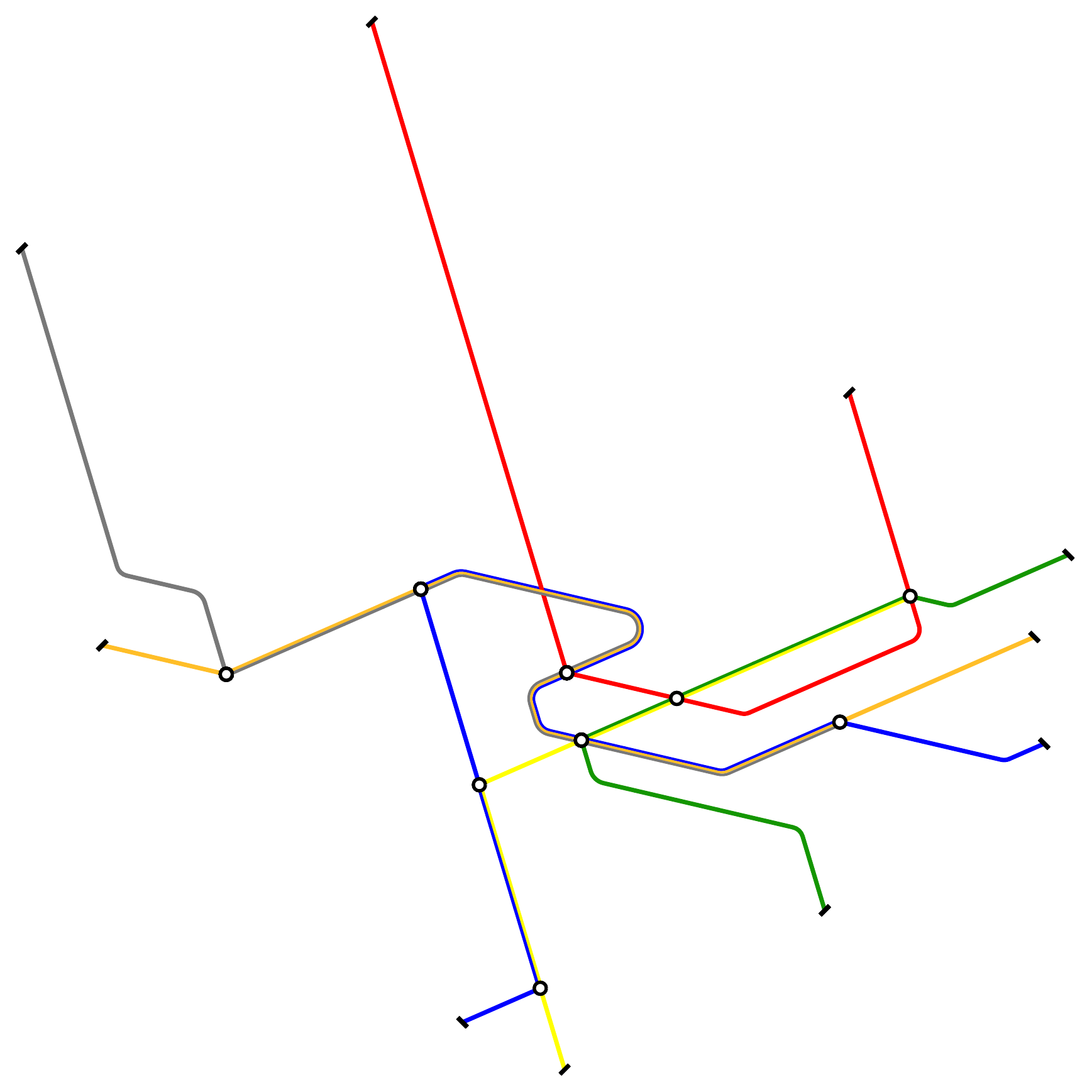} \\
		(a) 3-A & (b) 3-R & (c) 3-I \\[6pt]
		\includegraphics[scale=.25]{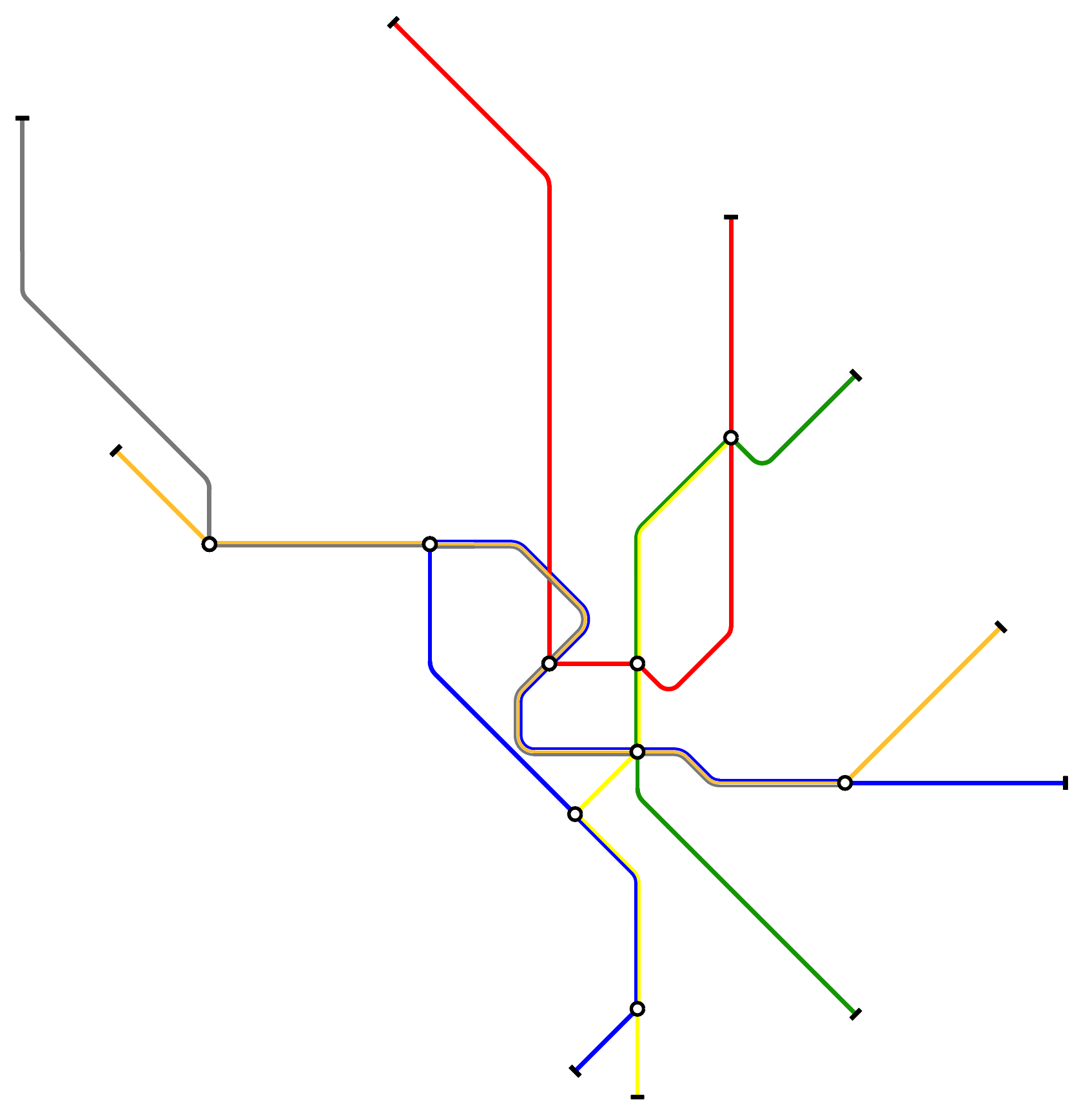} &   \includegraphics[scale=.25]{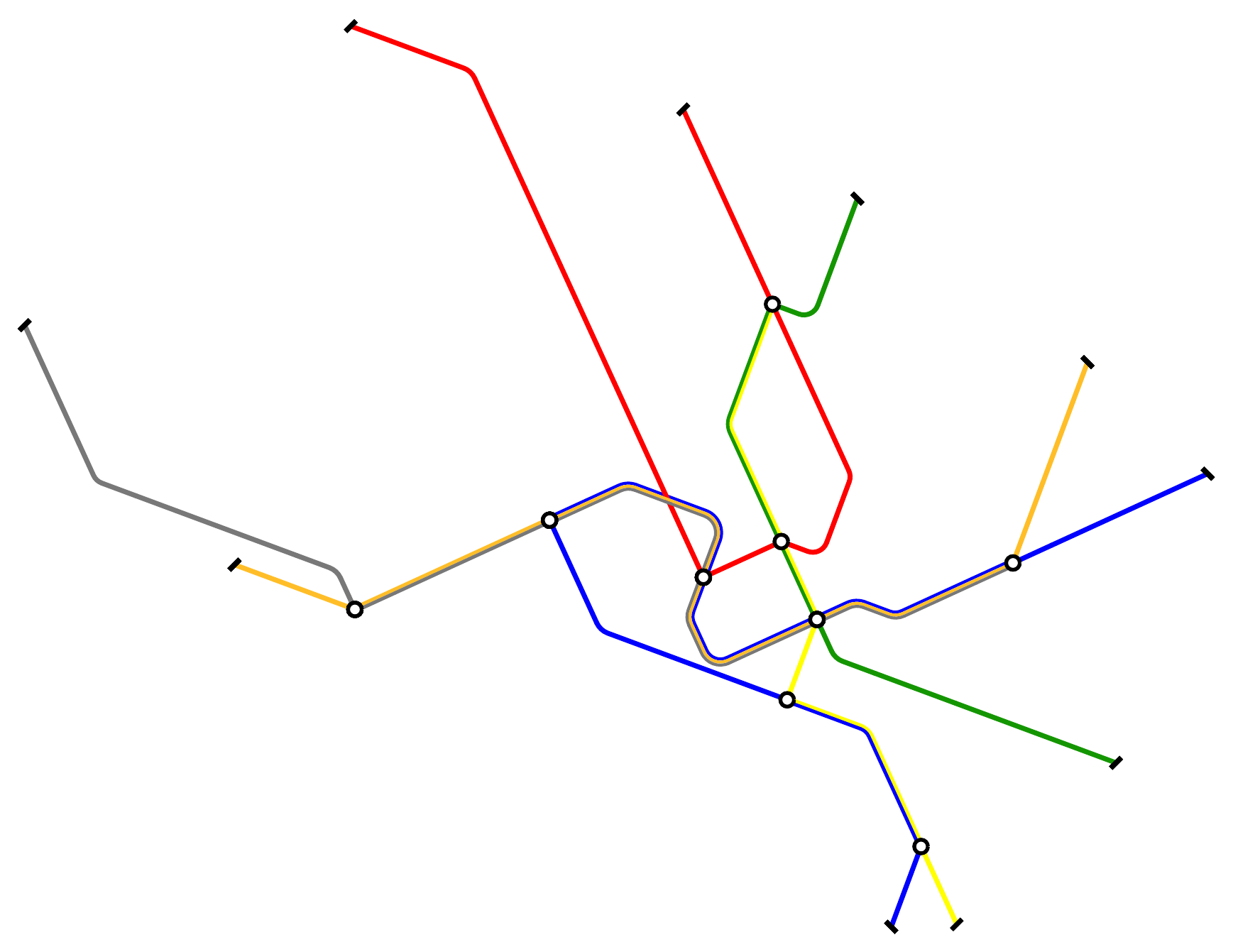} & \includegraphics[scale=.25]{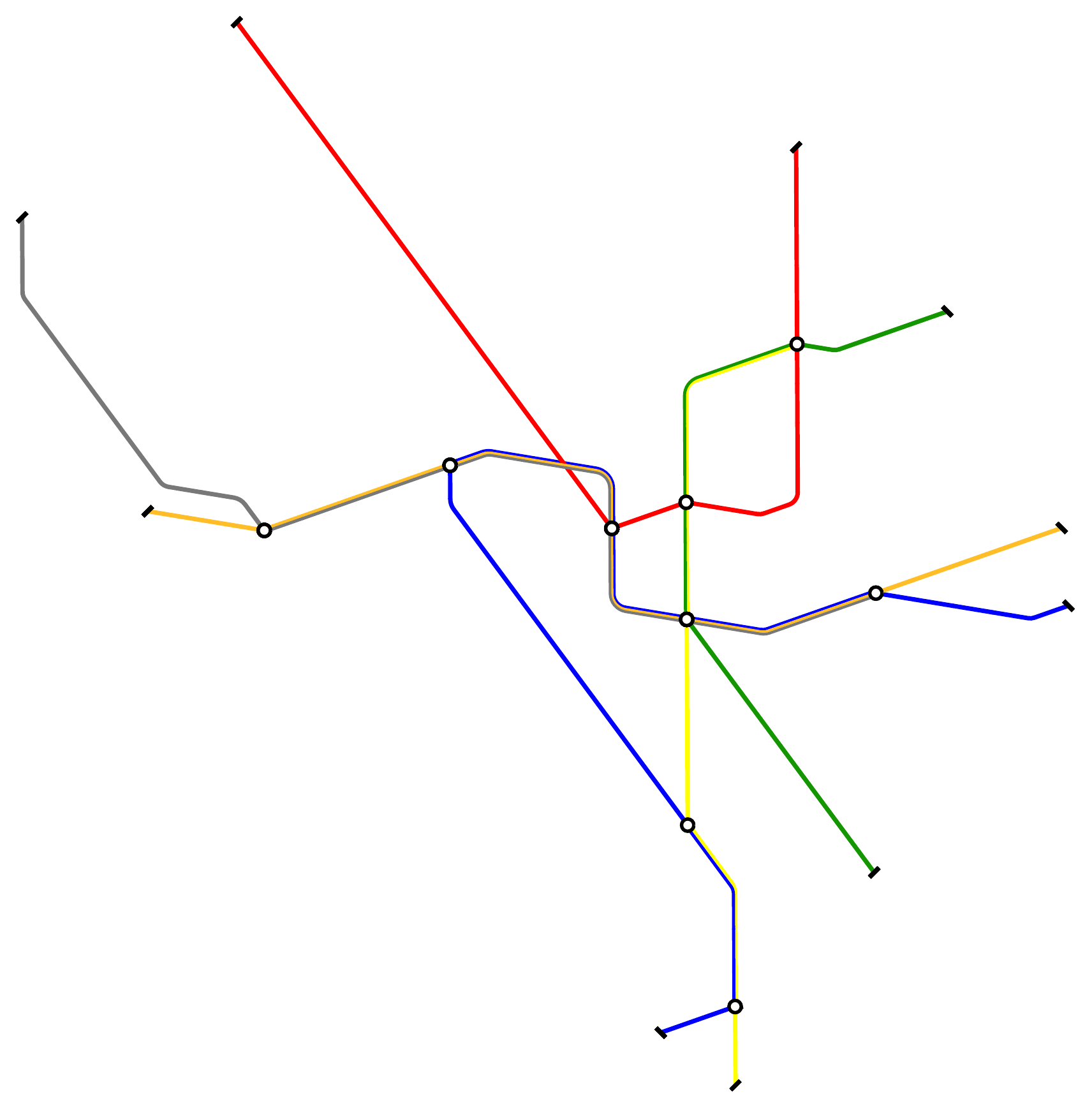} \\
		(d) 4-A & (e) 4-R & (f) 4-I \\[6pt]
		\includegraphics[scale=.25]{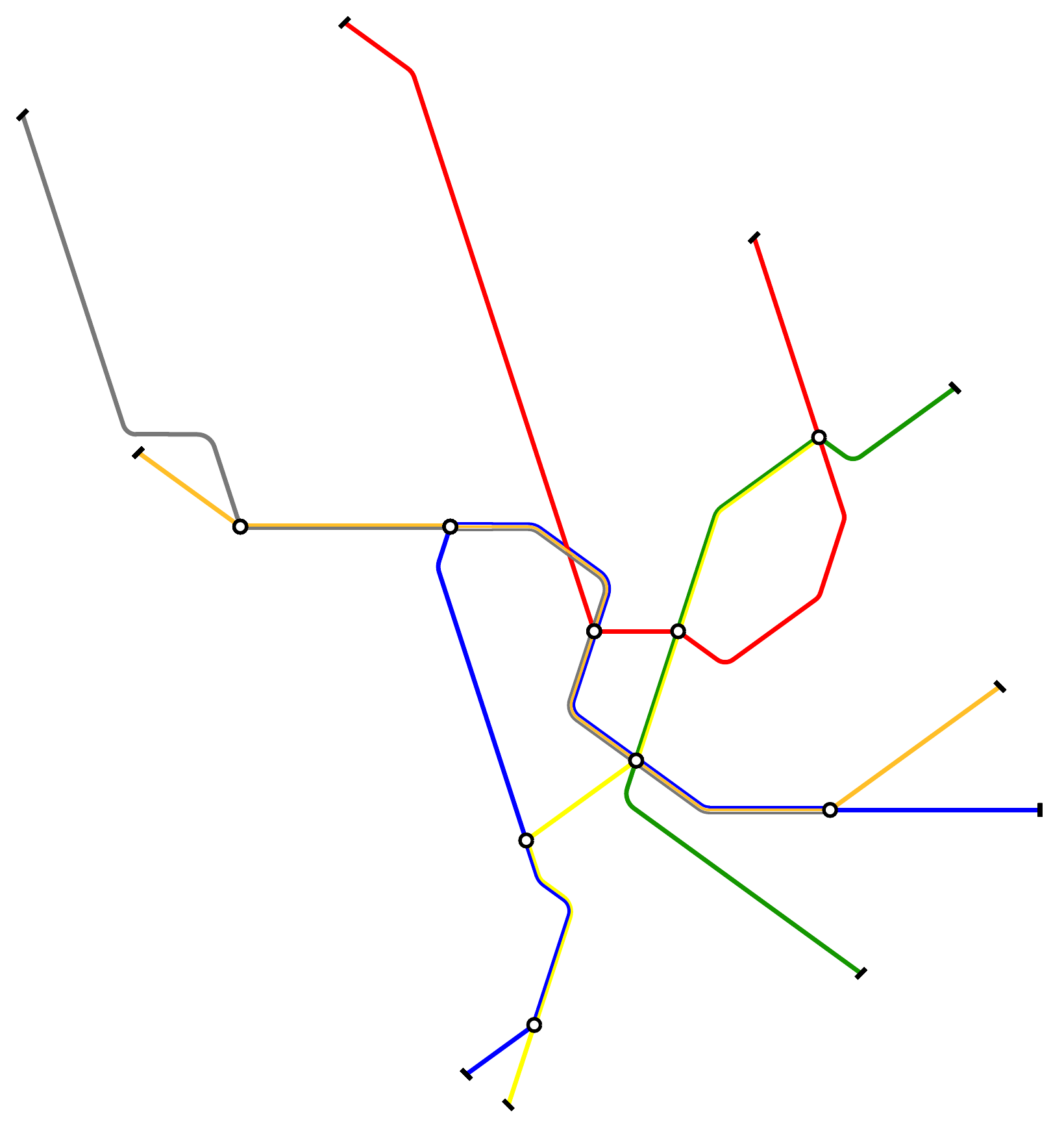} &   \includegraphics[scale=.25]{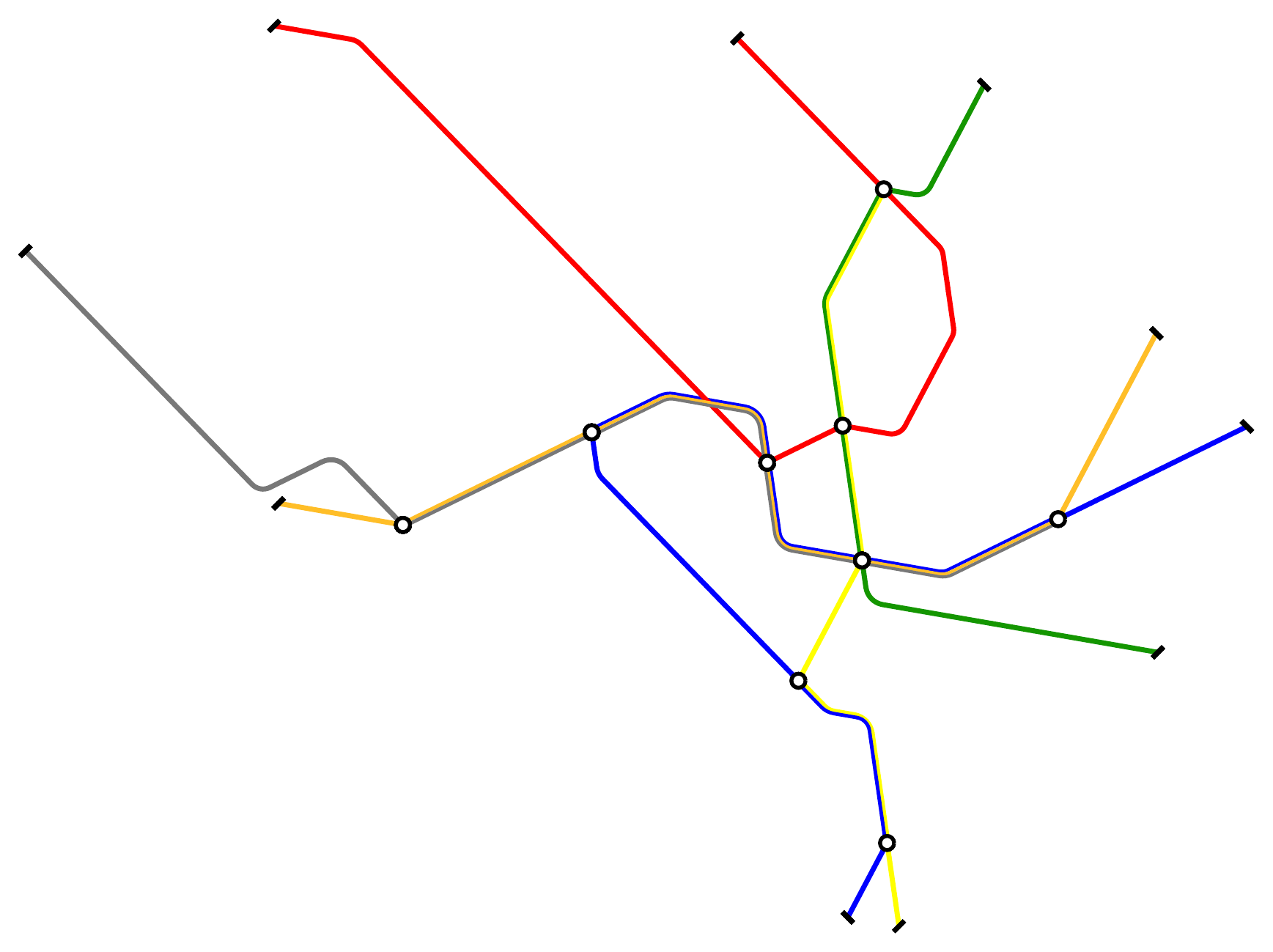} & \includegraphics[scale=.25]{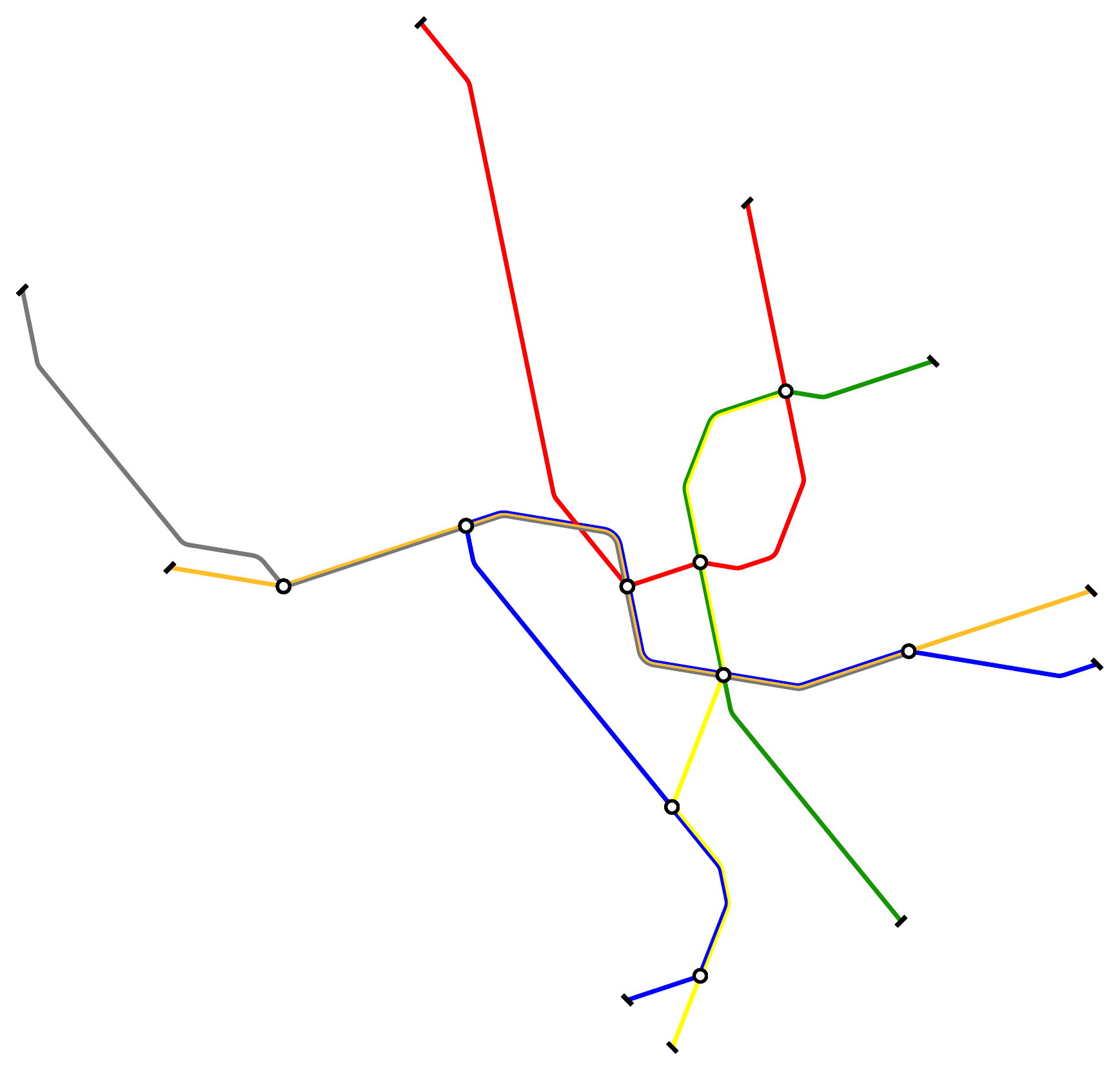} \\
		(g) -A5 & (h) 5-R & (i) 5-I \\[6pt]
	\end{tabular}
	\caption{Examples of Washington generated with objective function weights $(f_1, f_2, f_3) = (3, 2, 1)$. Rows are $k = 3, k=4, k=5$ from top to bottom, columns are aligned ($k$-A), regular ($k$-R) and irregular ($k$-I) orientation system from left to right.}\label{fig:ap_washington321}
\end{figure}
\begin{figure}[b!]
\centering
	\begin{tabular}{ccc}
		&\fbox{\includegraphics[scale=.25]{pictures/metros/input/washington_input.pdf}}&\\
		\includegraphics[scale=.25]{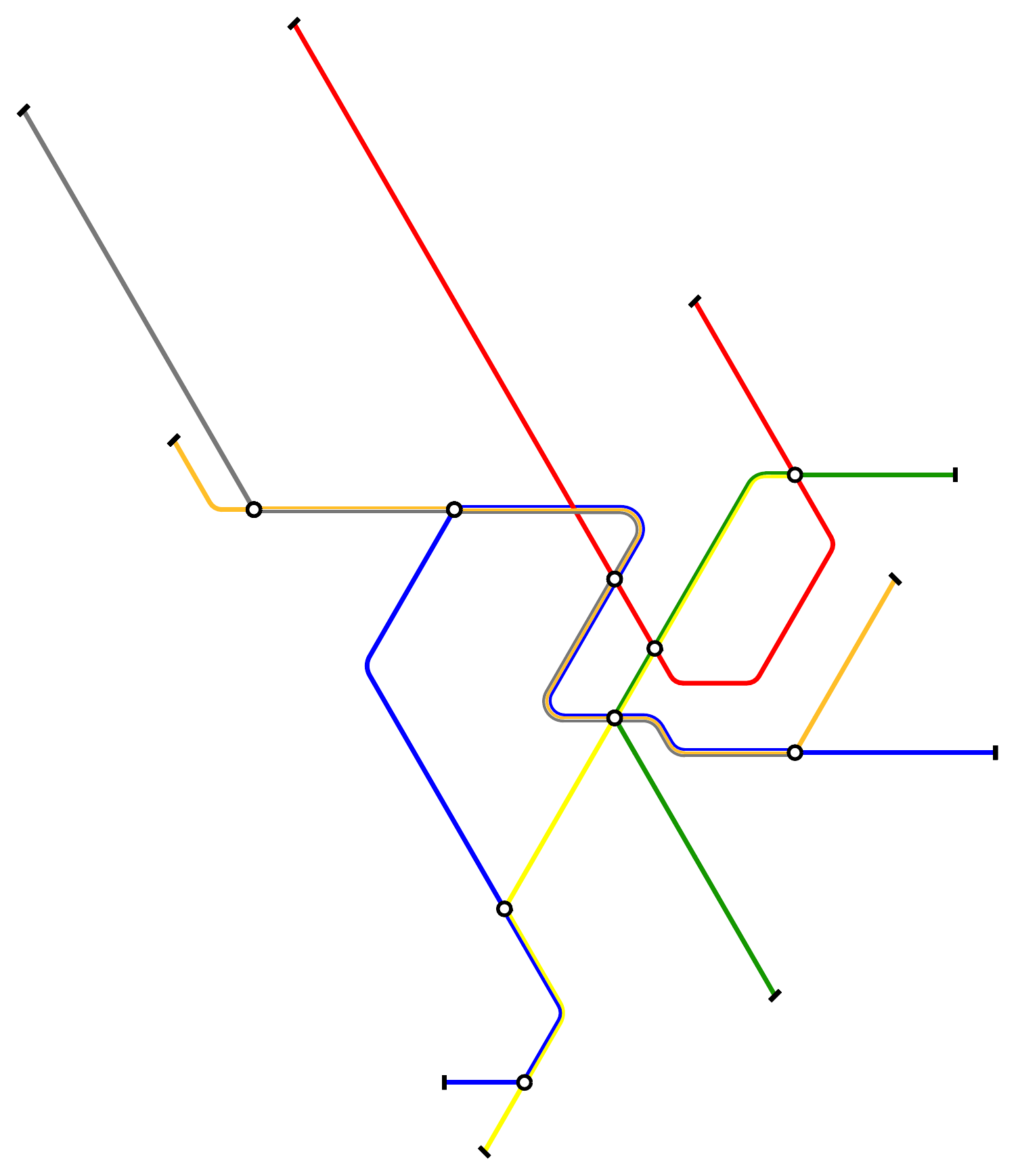} &   \includegraphics[scale=.25]{experiments/DIAGRAMS20-321/washington-3-CO.pdf} & \includegraphics[scale=.25]{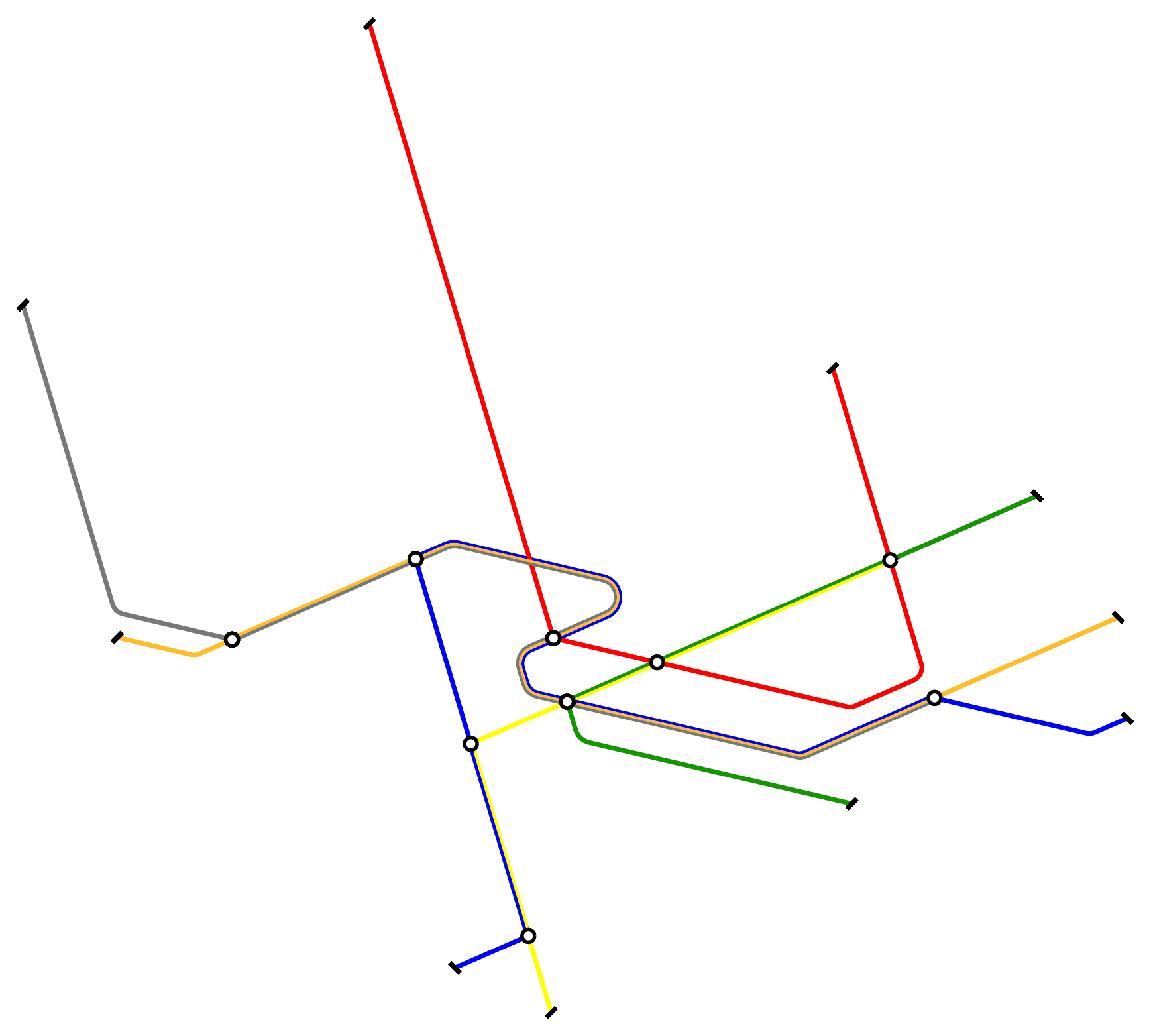} \\
		(a) 3-A & (b) 3-R & (c) 3-I \\[6pt]
		\includegraphics[scale=.25]{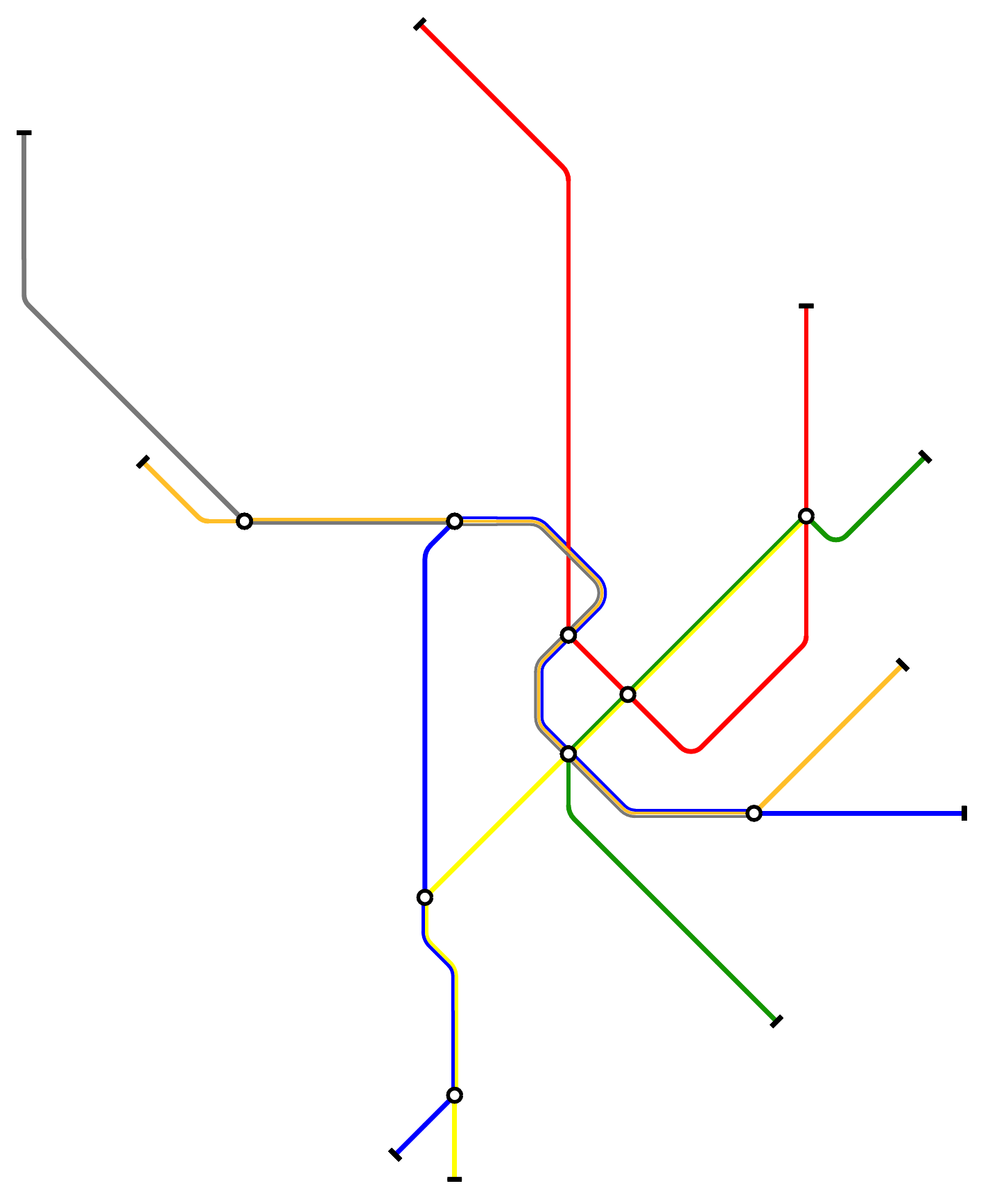} &   \includegraphics[scale=.25]{experiments/DIAGRAMS20-321/washington-4-CO.pdf} & \includegraphics[scale=.25]{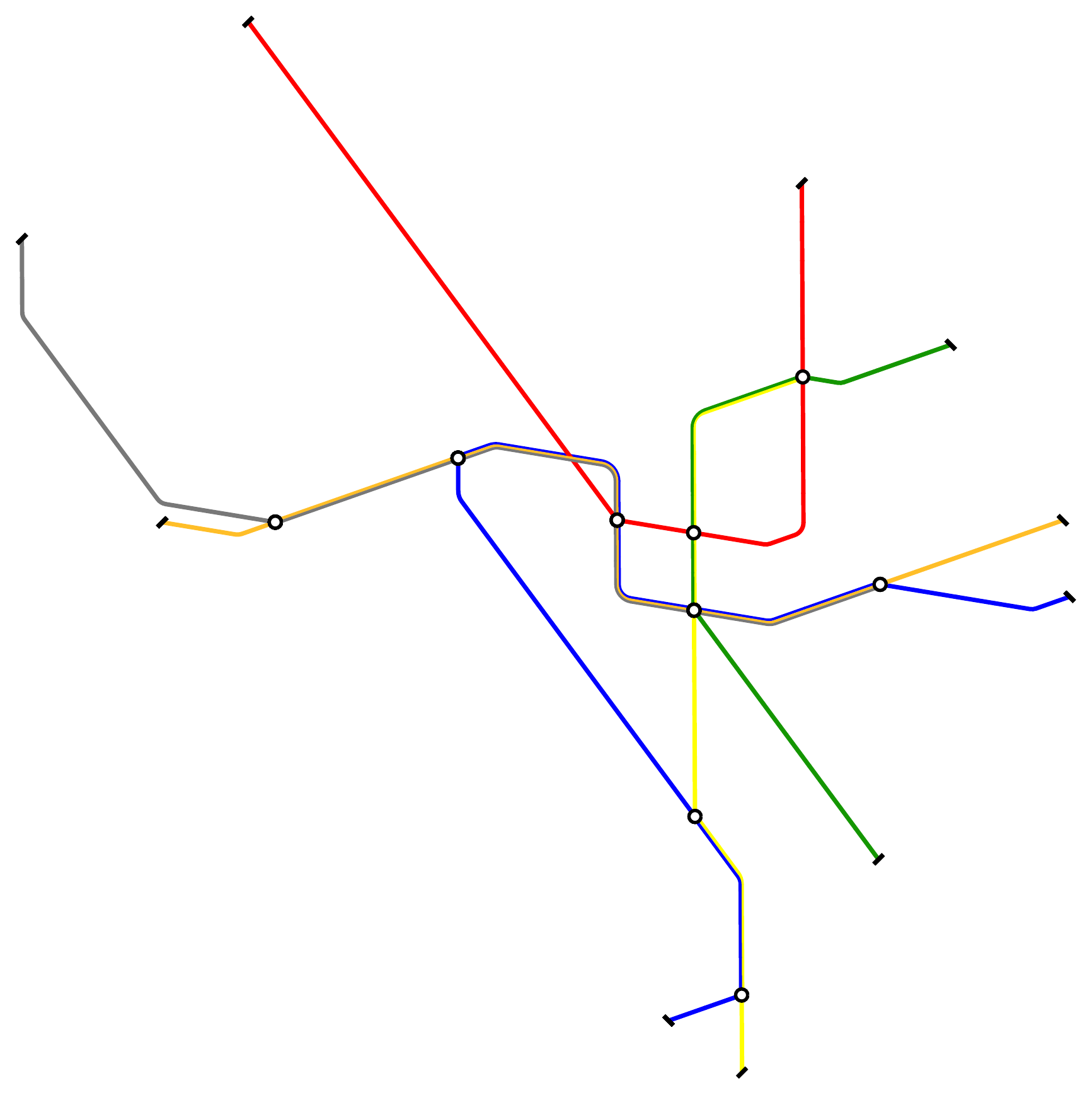} \\
		(d) 4-A & (e) 4-R & (f) 4-I \\[6pt]
		\includegraphics[scale=.25]{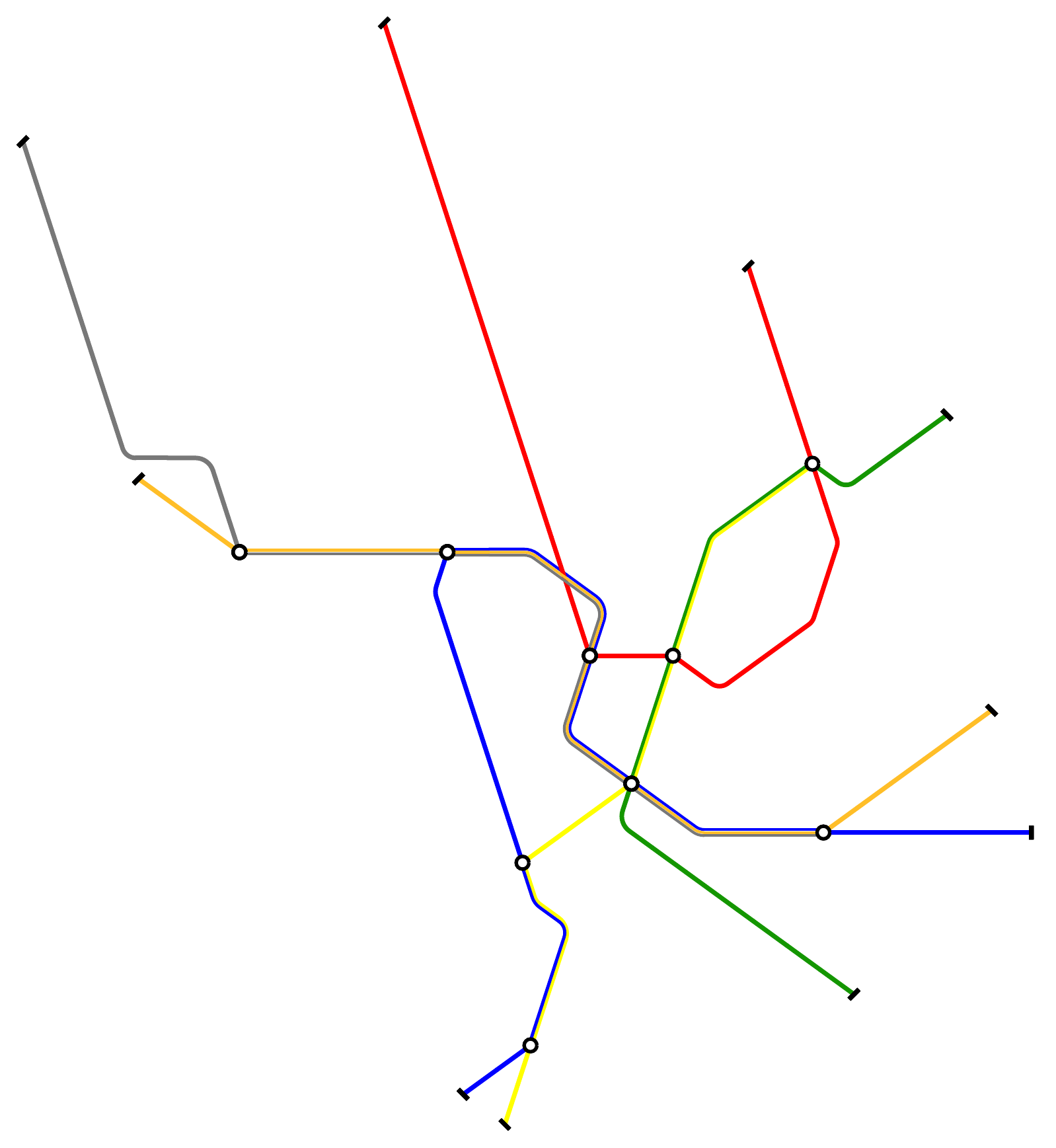} &   \includegraphics[scale=.25]{experiments/DIAGRAMS20-321/washington-5-CO.pdf} & \includegraphics[scale=.25]{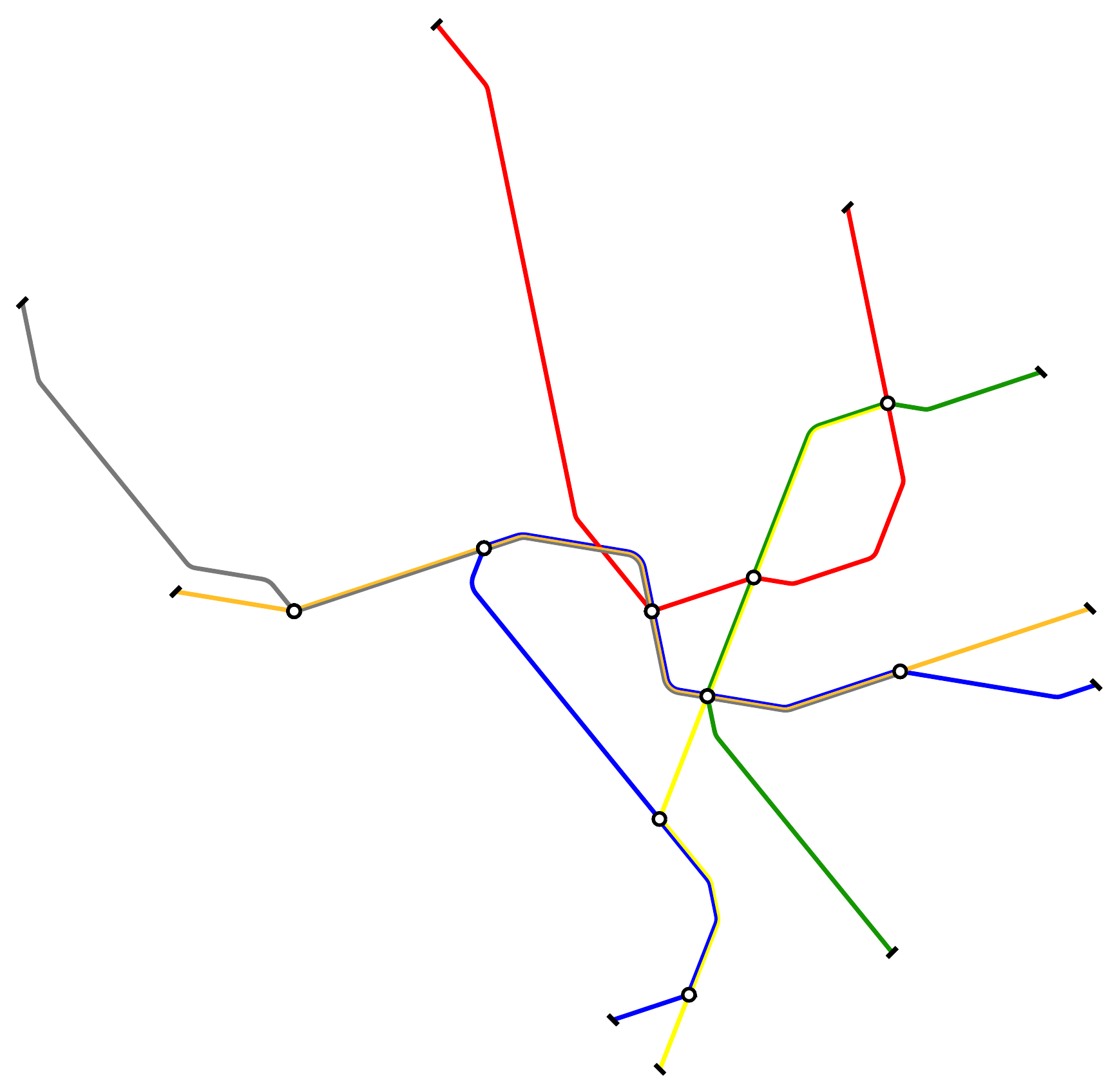} \\
		(g) 5-A & (h) 5-R & (i) 5-I \\[6pt]
	\end{tabular}
	\caption{Examples of Washington generated with objective function weights $(f_1, f_2, f_3) = (10, 5, 1)$. Rows are $k = 3, k=4, k=5$ from top to bottom, columns are aligned ($k$-A), regular ($k$-R) and irregular ($k$-I) orientation system from left to right.}\label{fig:ap_washington1051}
\end{figure}

\begin{figure}[b!]
\centering
	\begin{tabular}{ccc}
		&\fbox{\includegraphics[scale=.25]{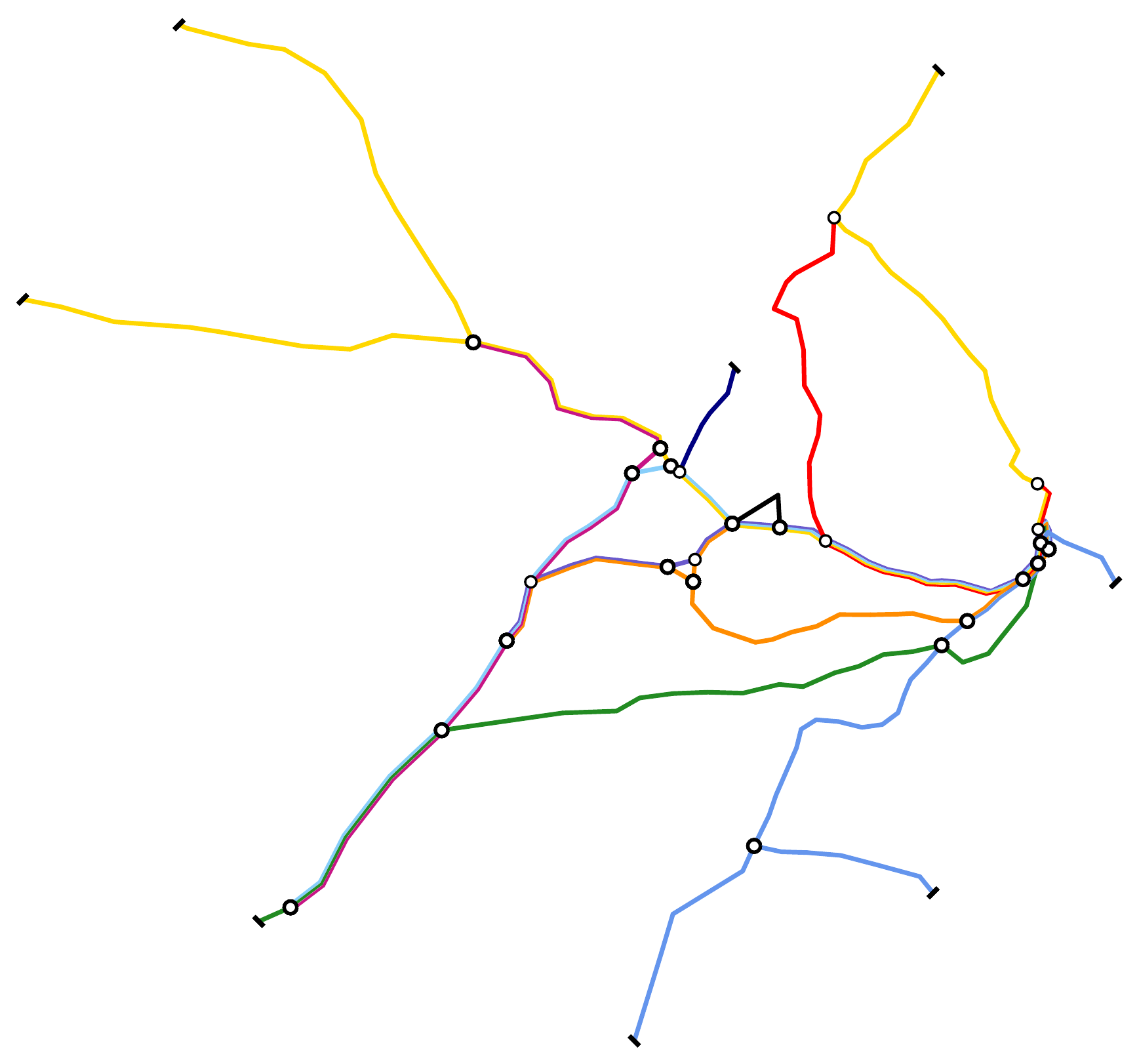}}&\\
		\includegraphics[scale=.25]{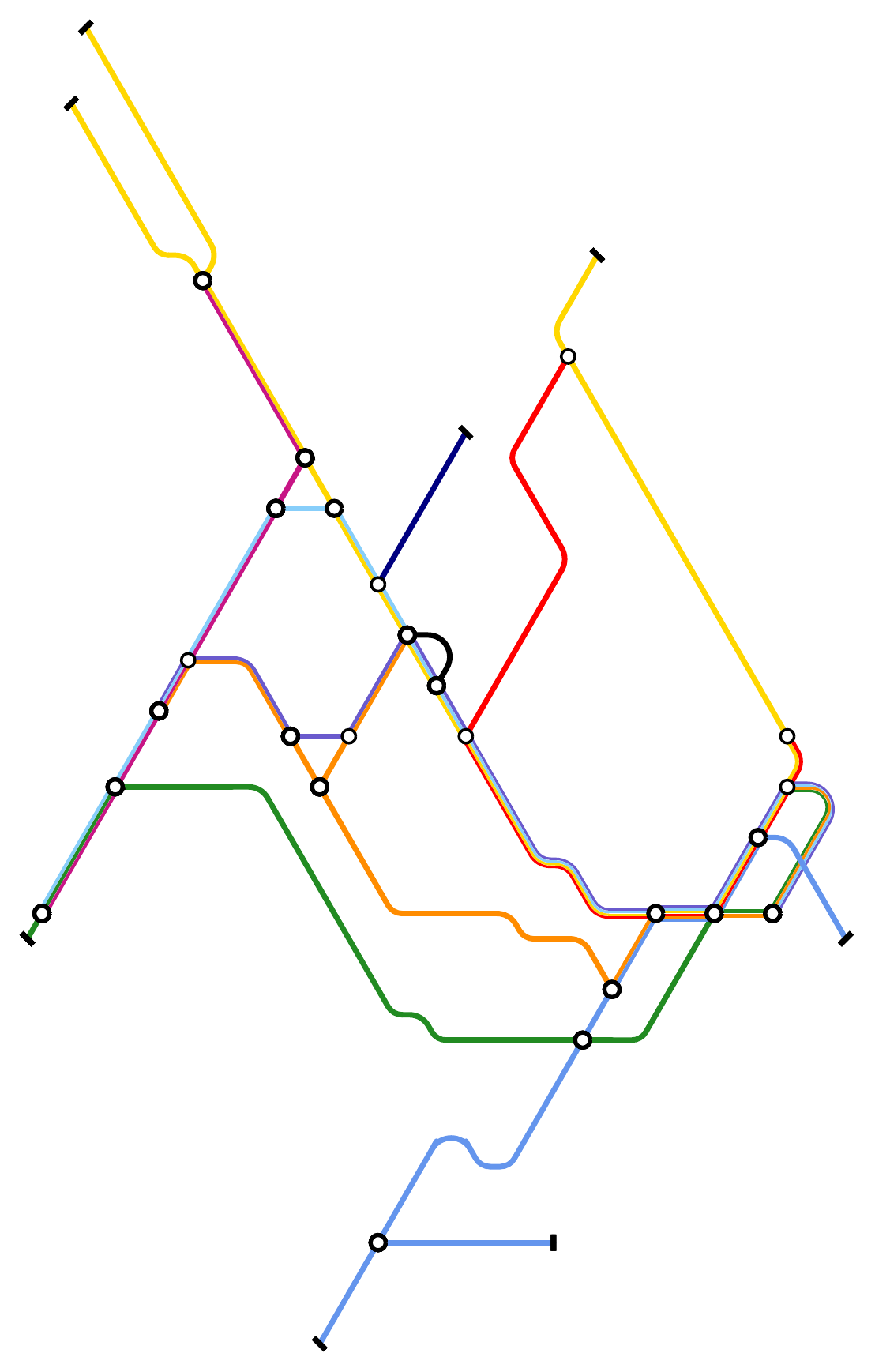} &   \includegraphics[scale=.25]{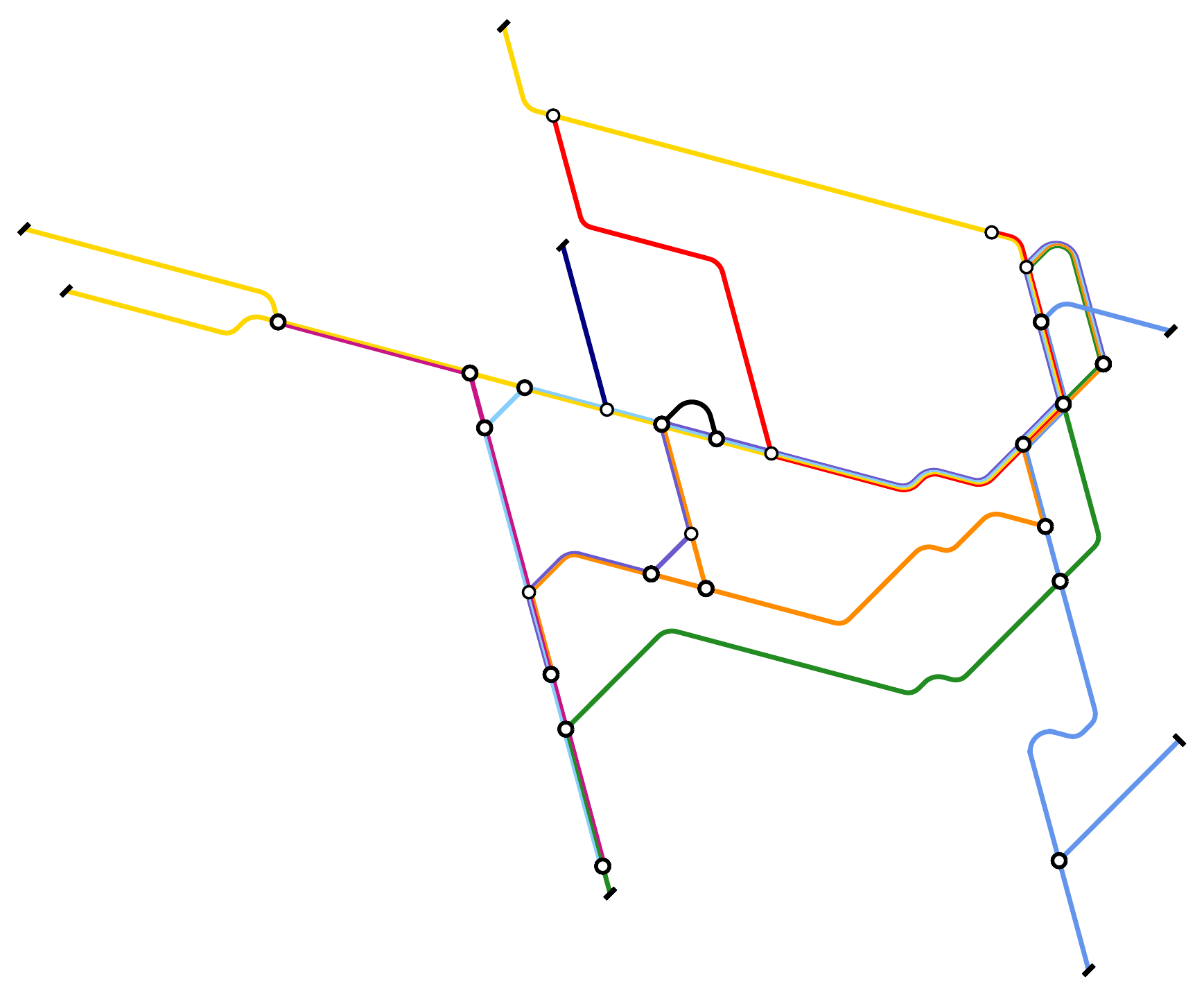} & \includegraphics[scale=.25]{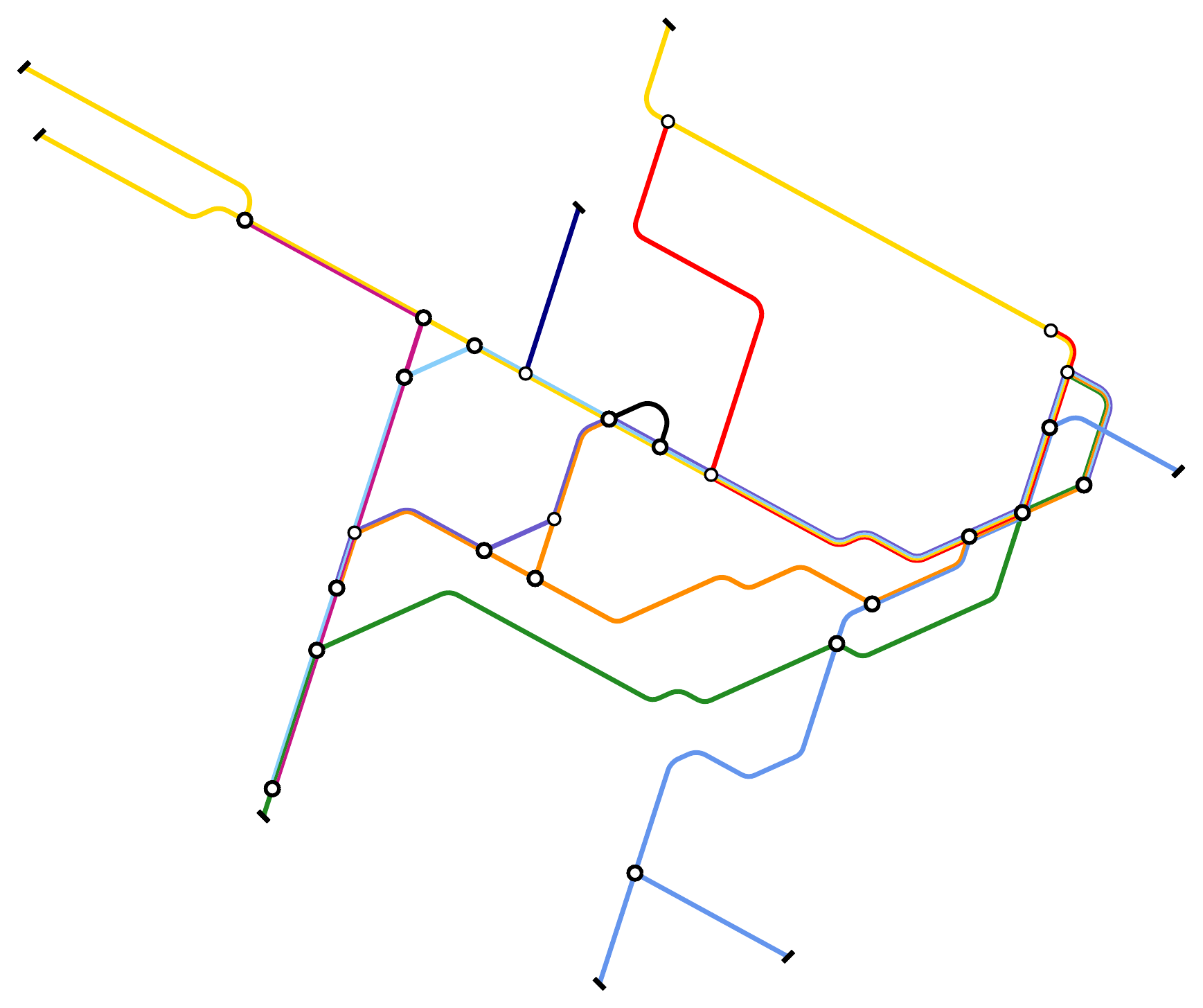} \\
		(a) 3-A & (b) 3-R & (c) 3-I \\[6pt]
		\includegraphics[scale=.25]{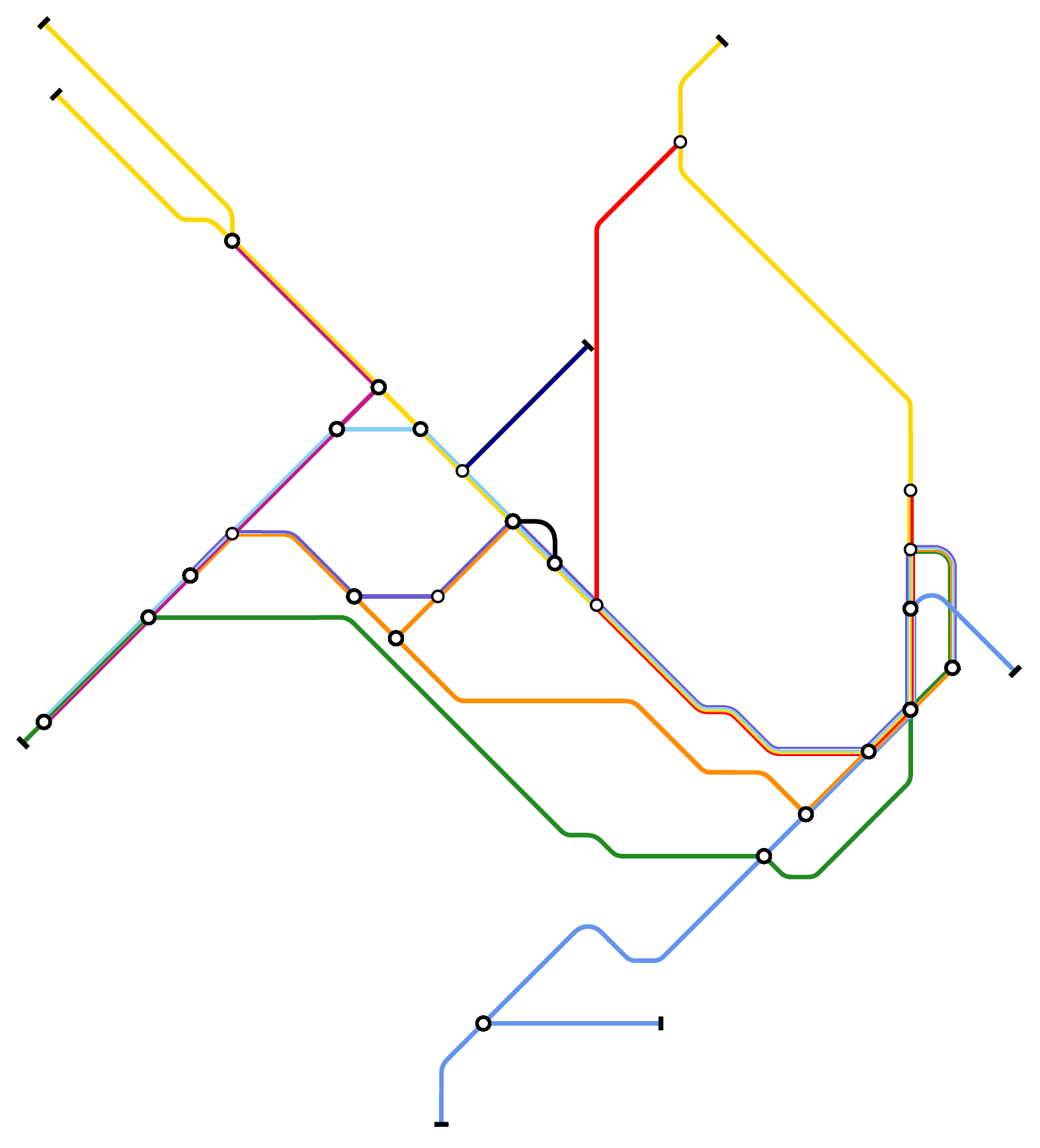} &   \includegraphics[scale=.25]{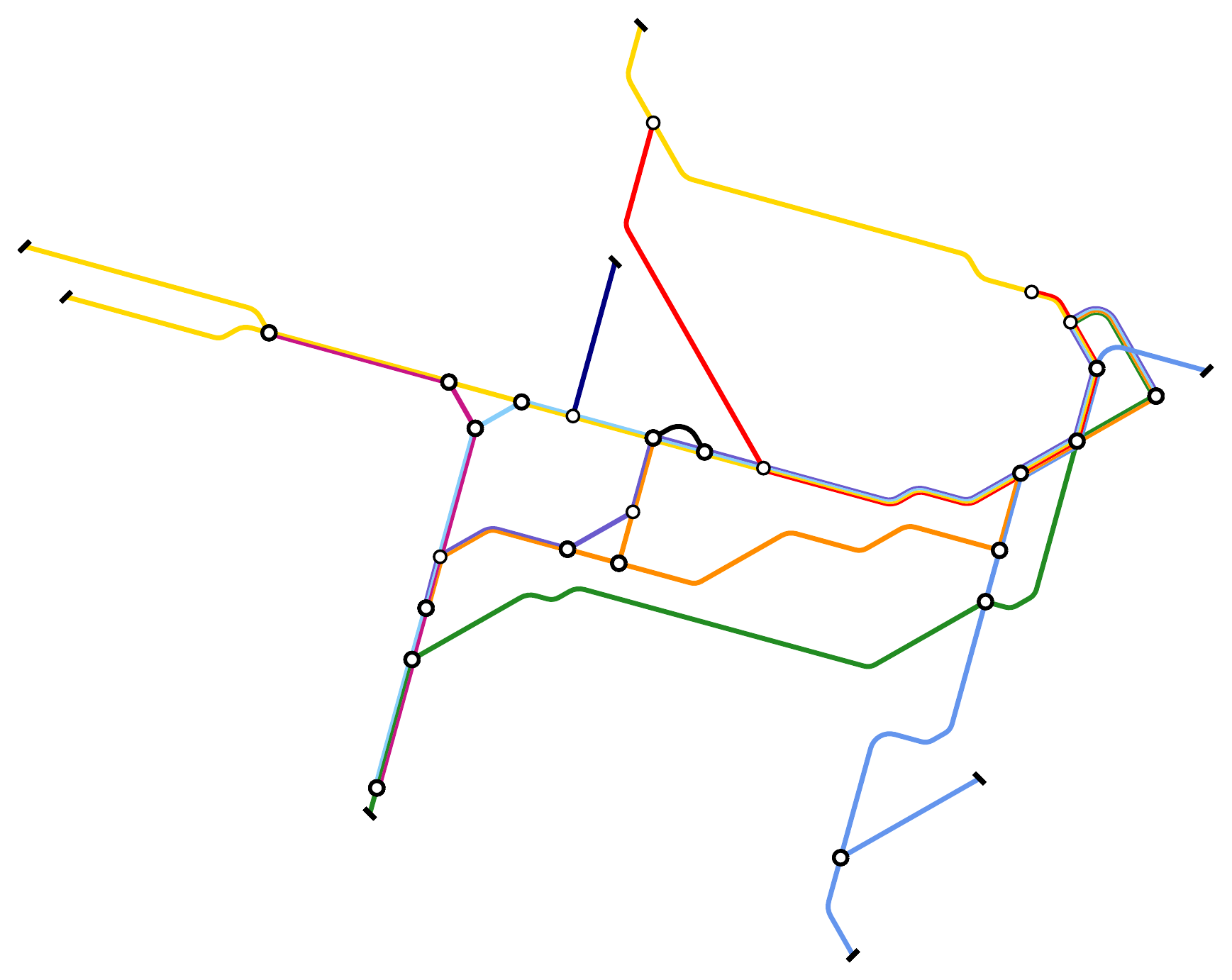} & \includegraphics[scale=.25]{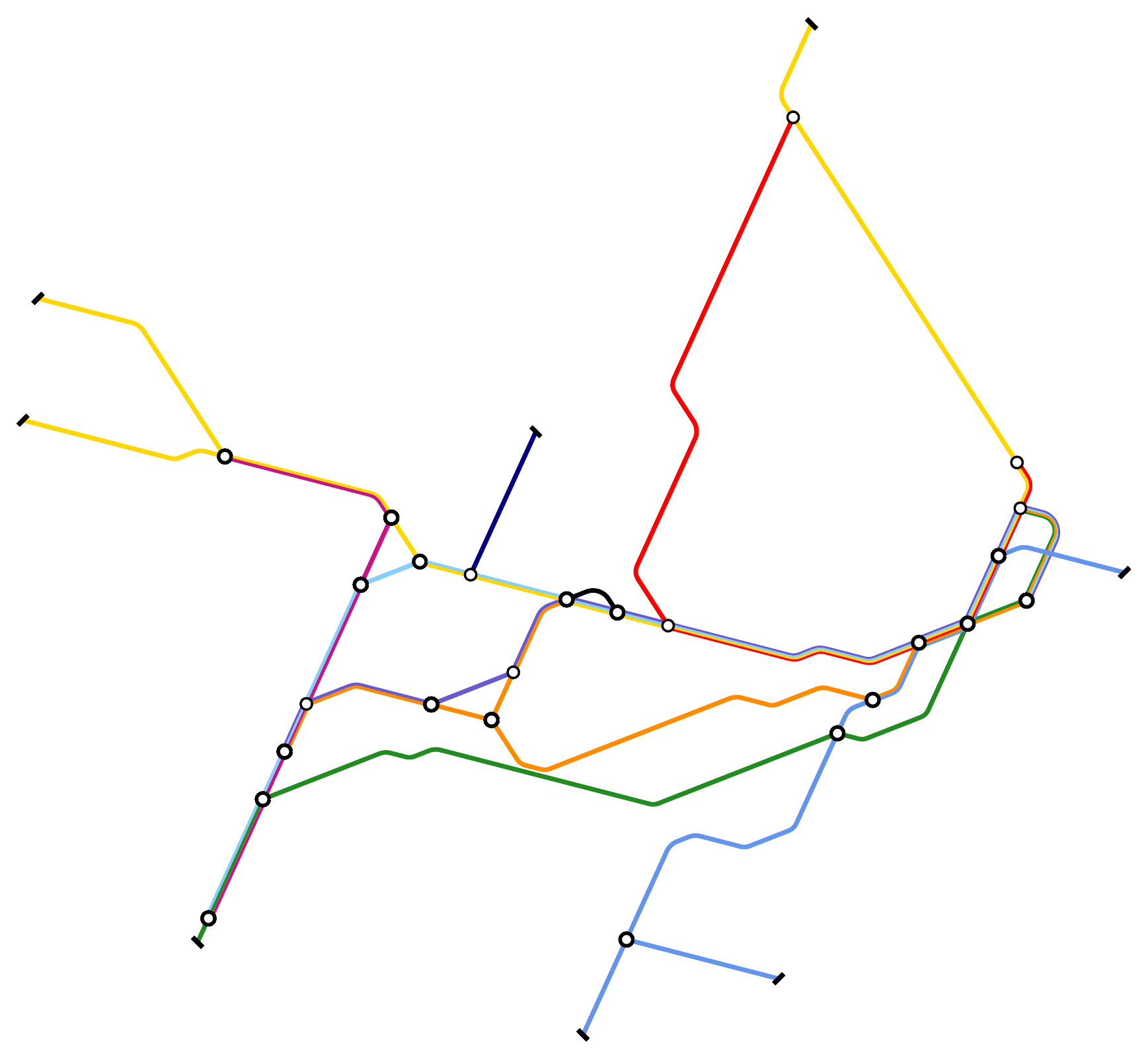} \\
		(d) 4-A & (e) 4-R & (f) 4-I \\[6pt]
		\includegraphics[scale=.25]{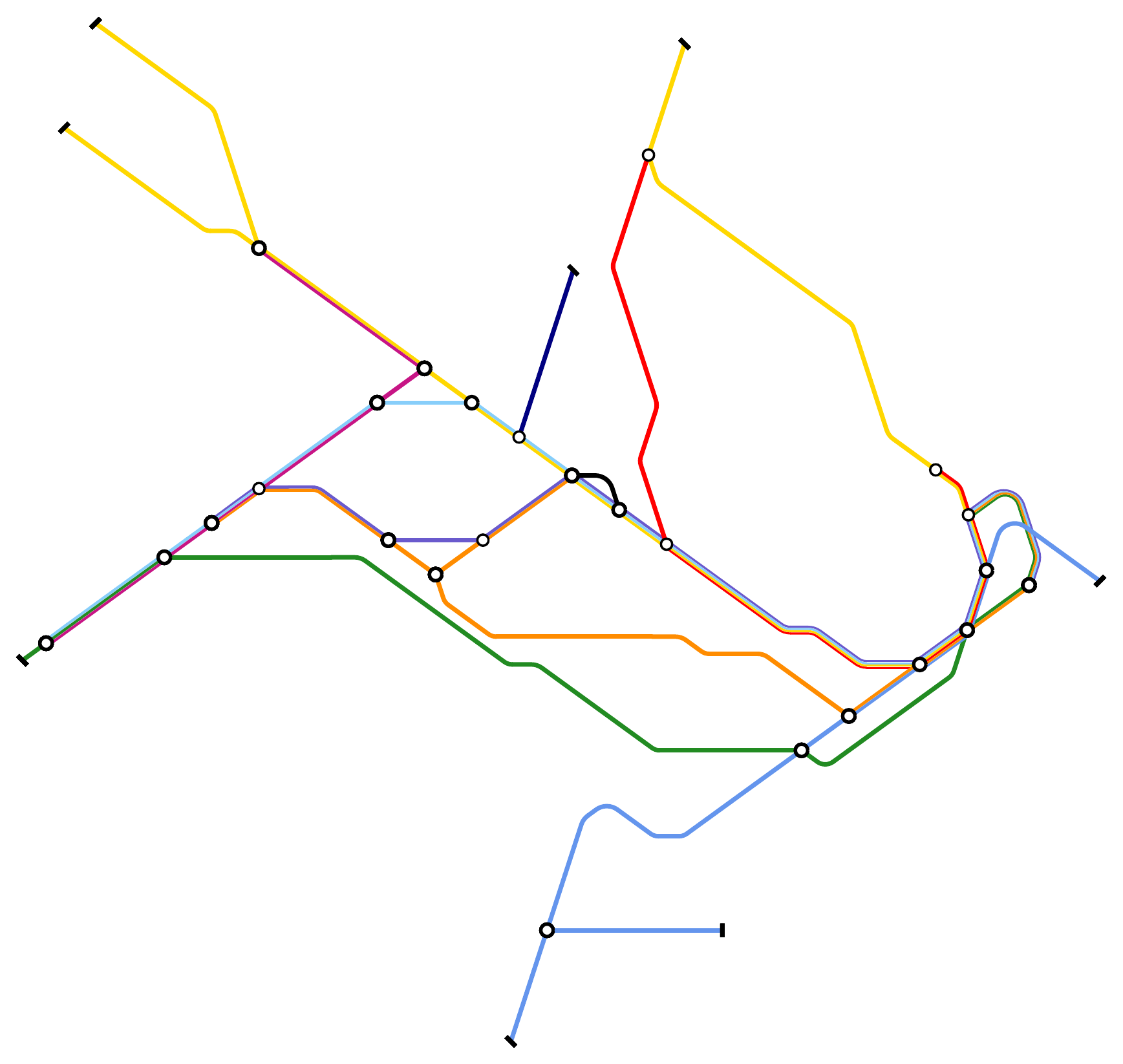} &   \includegraphics[scale=.25]{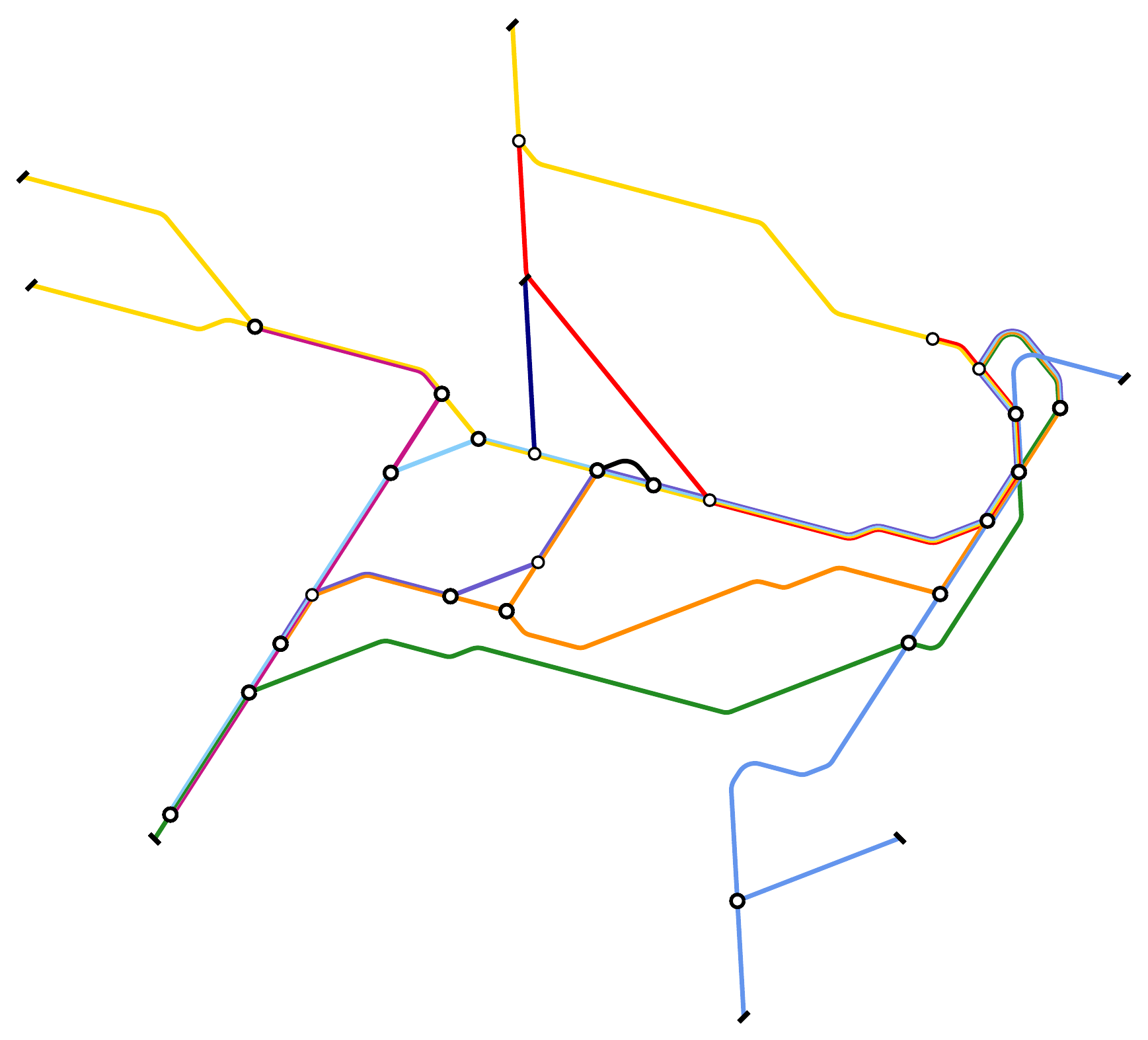} & \includegraphics[scale=.25]{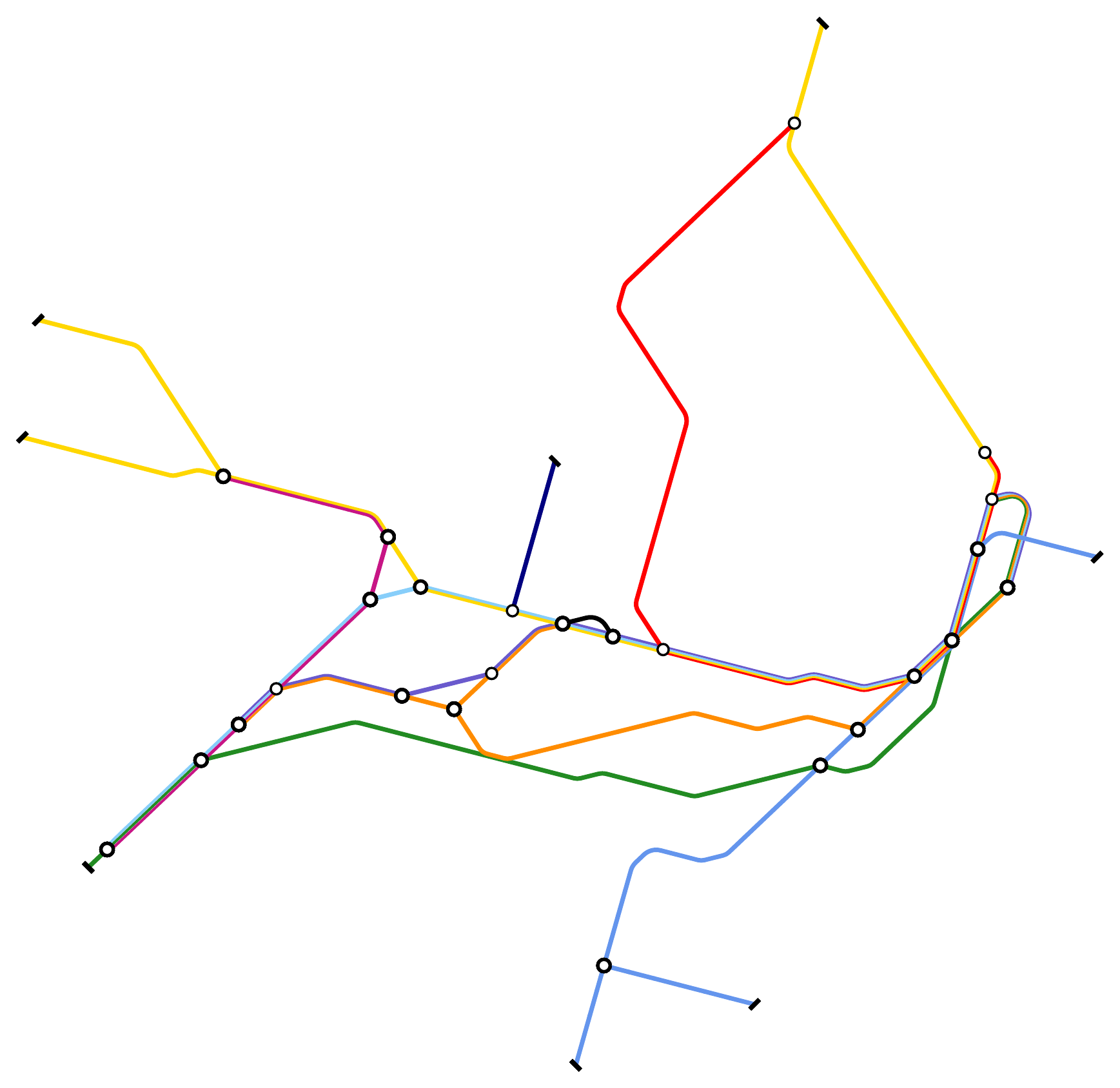} \\
		(g) 5-A & (h) 5-R & (i) 5-I \\[6pt]
	\end{tabular}
	\caption{Examples of Sydney generated with objective function weights $(f_1, f_2, f_3) = (3, 2, 1)$. Rows are $k = 3, k=4, k=5$ from top to bottom, columns are aligned ($k$-A), regular ($k$-R) and irregular ($k$-I) orientation system from left to right.}\label{fig:ap_sydney321}
\end{figure}
\begin{figure}[b!]
\centering
	\begin{tabular}{ccc}
		&\fbox{\includegraphics[scale=.25]{pictures/metros/input/sydney_input.pdf}}&\\
		\includegraphics[scale=.25]{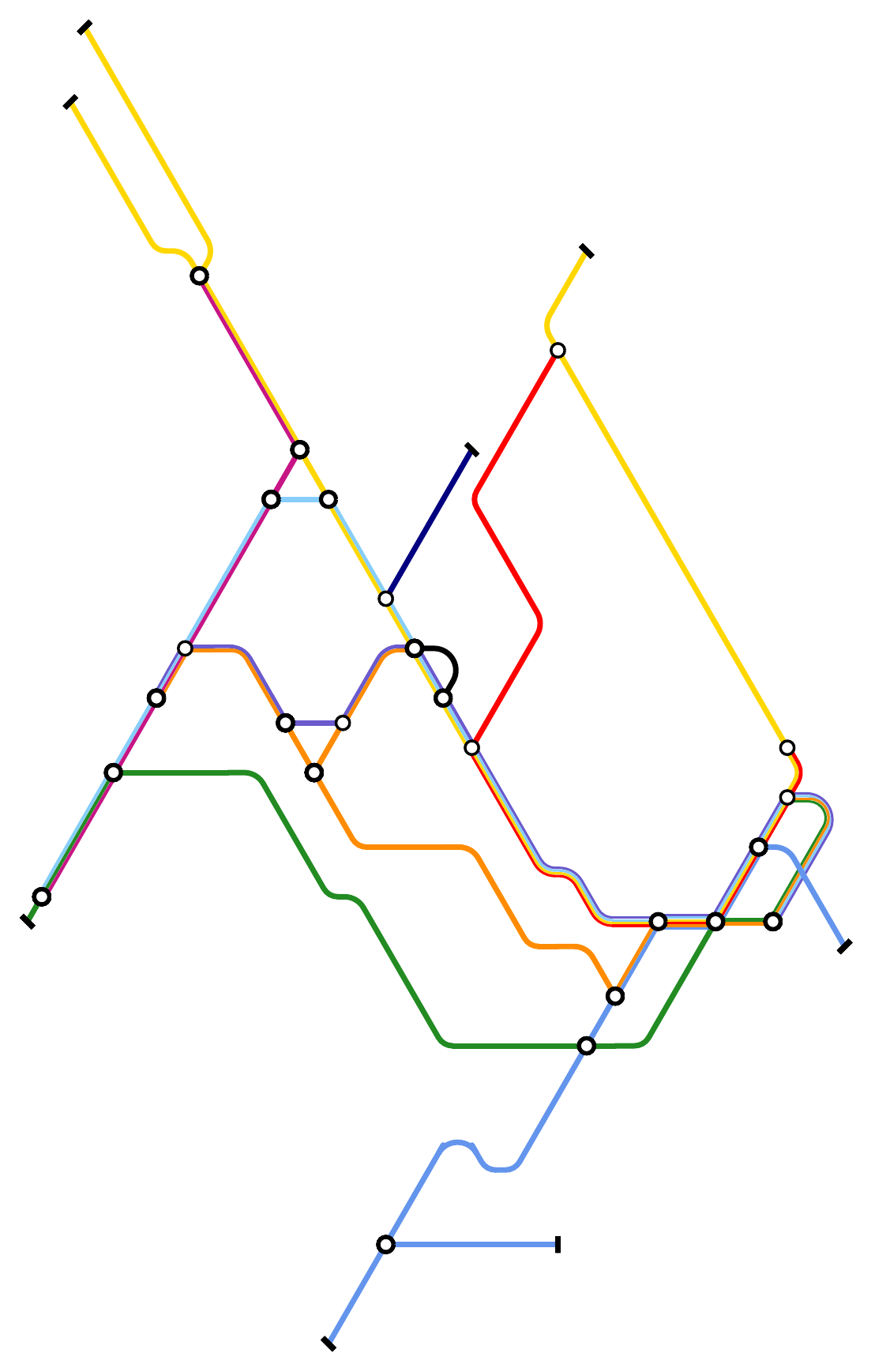} &   \includegraphics[scale=.25]{experiments/DIAGRAMS20-321/sydney-3-CO.pdf} & \includegraphics[scale=.25]{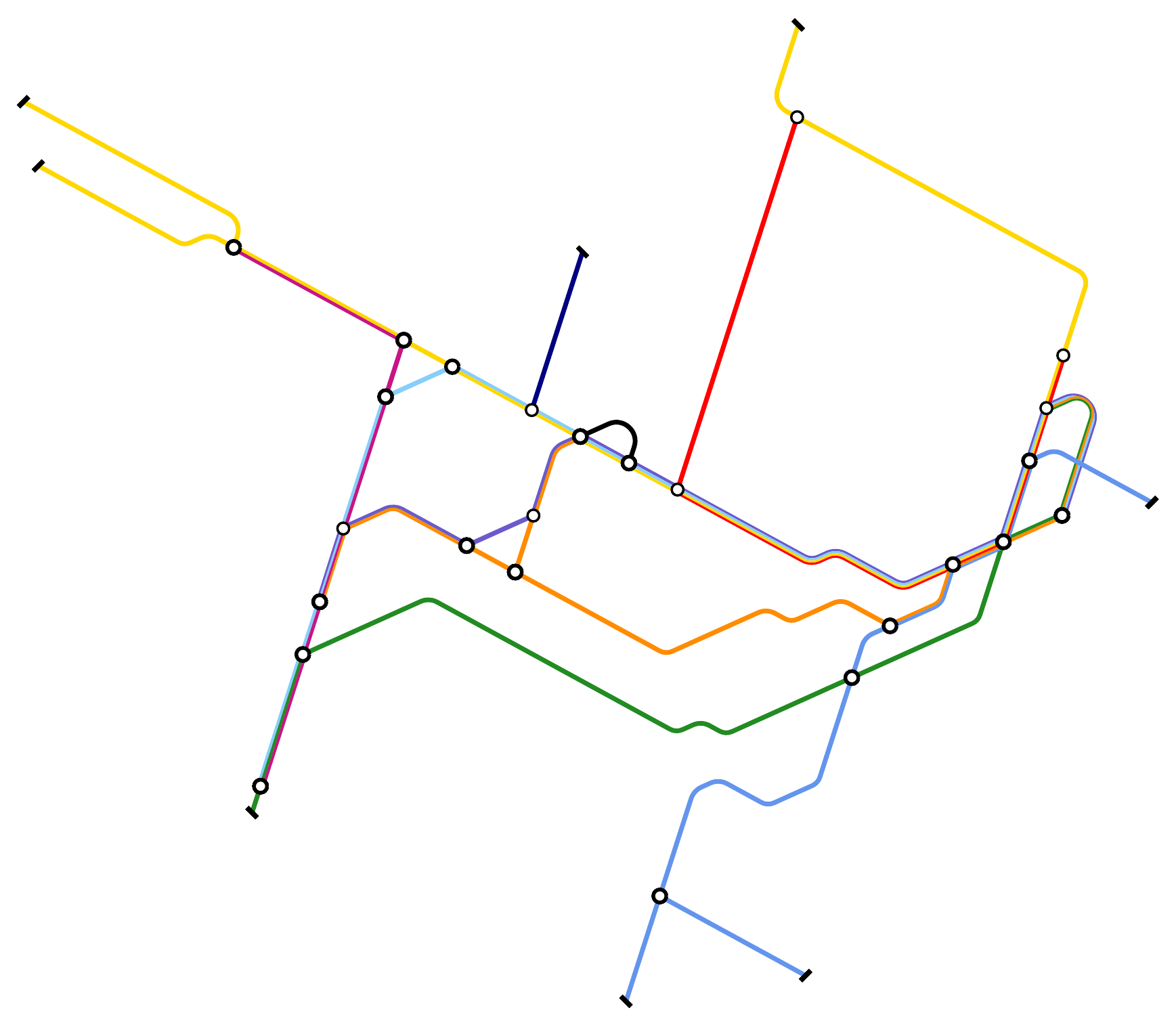} \\
		(a) 3-A & (b) 3-R & (c) 3-I \\[6pt]
		\includegraphics[scale=.25]{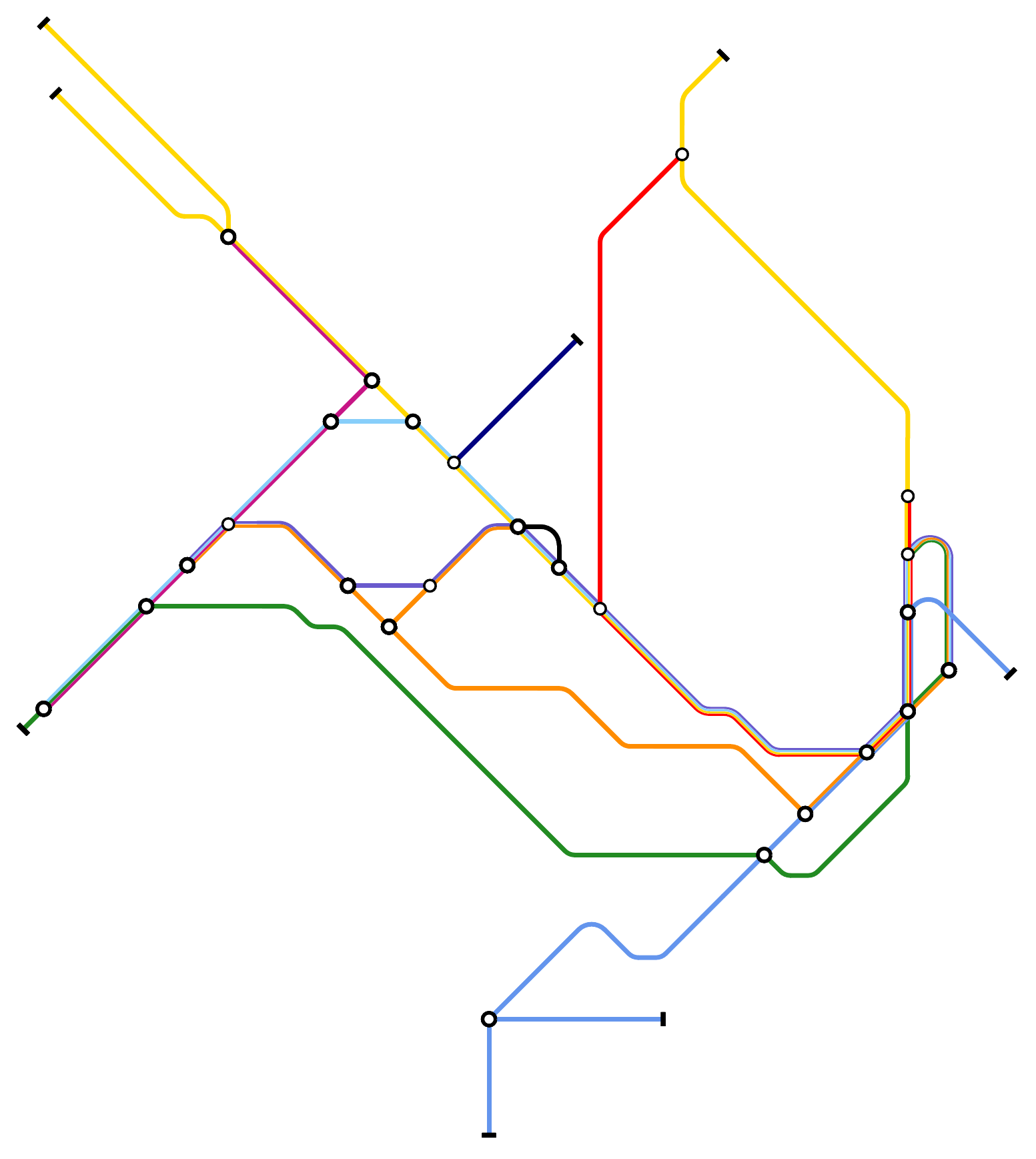} &   \includegraphics[scale=.25]{experiments/DIAGRAMS20-321/sydney-4-CO.pdf} & \includegraphics[scale=.25]{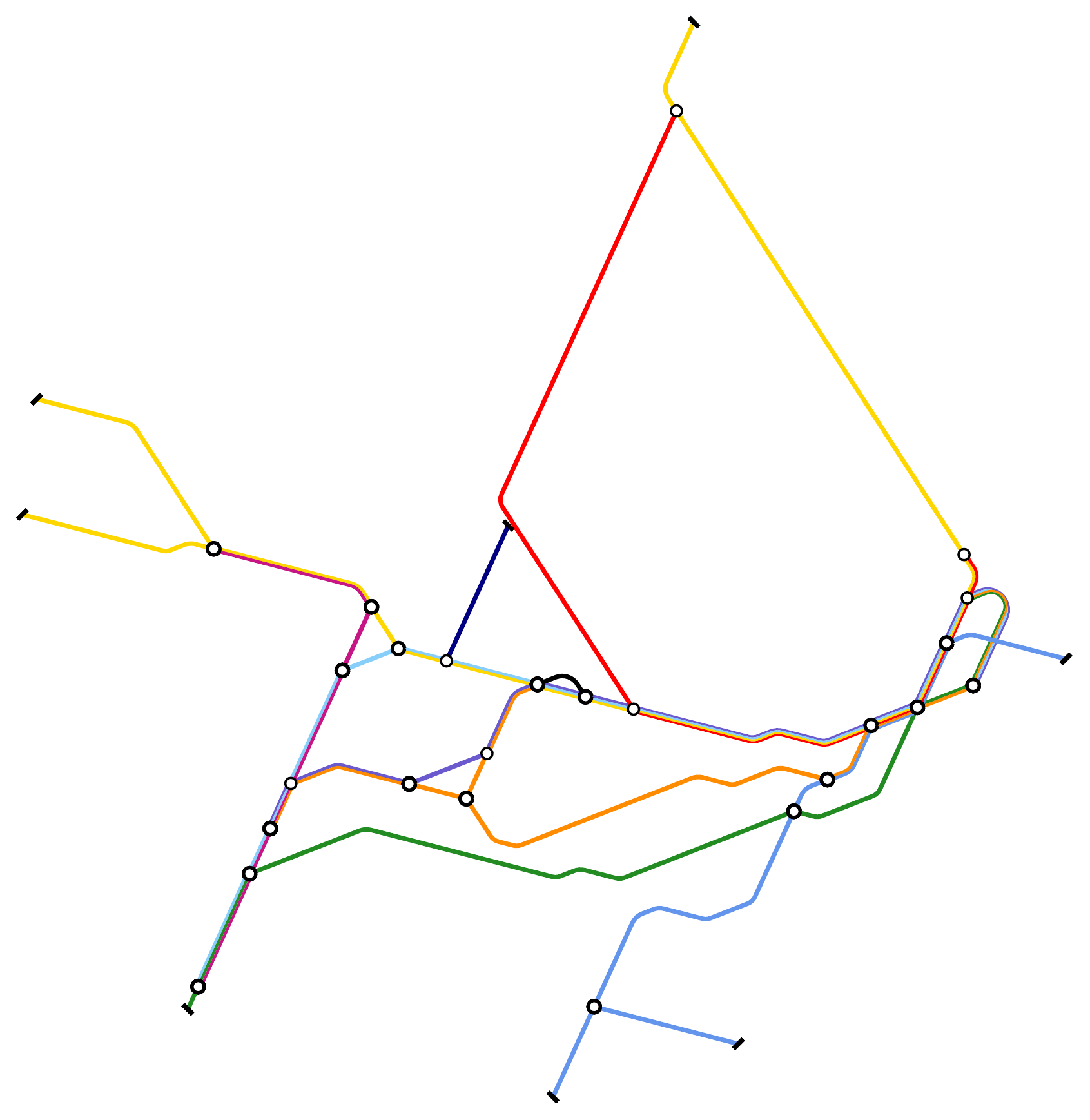} \\
		(d) 4-A & (e) 4-R & (f) 4-I \\[6pt]
		\includegraphics[scale=.25]{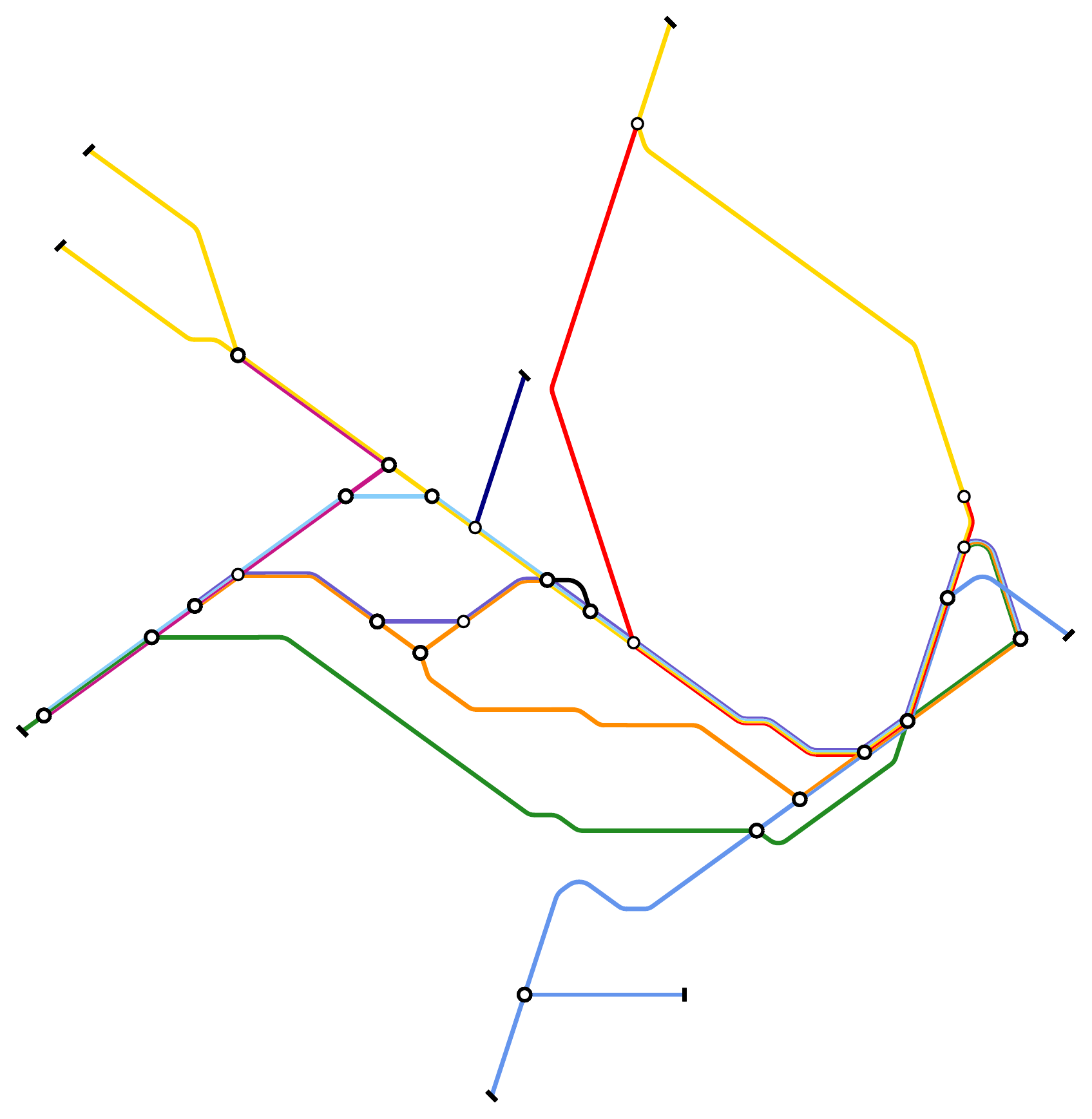} &   \includegraphics[scale=.25]{experiments/DIAGRAMS20-321/sydney-5-CO.pdf} & \includegraphics[scale=.25]{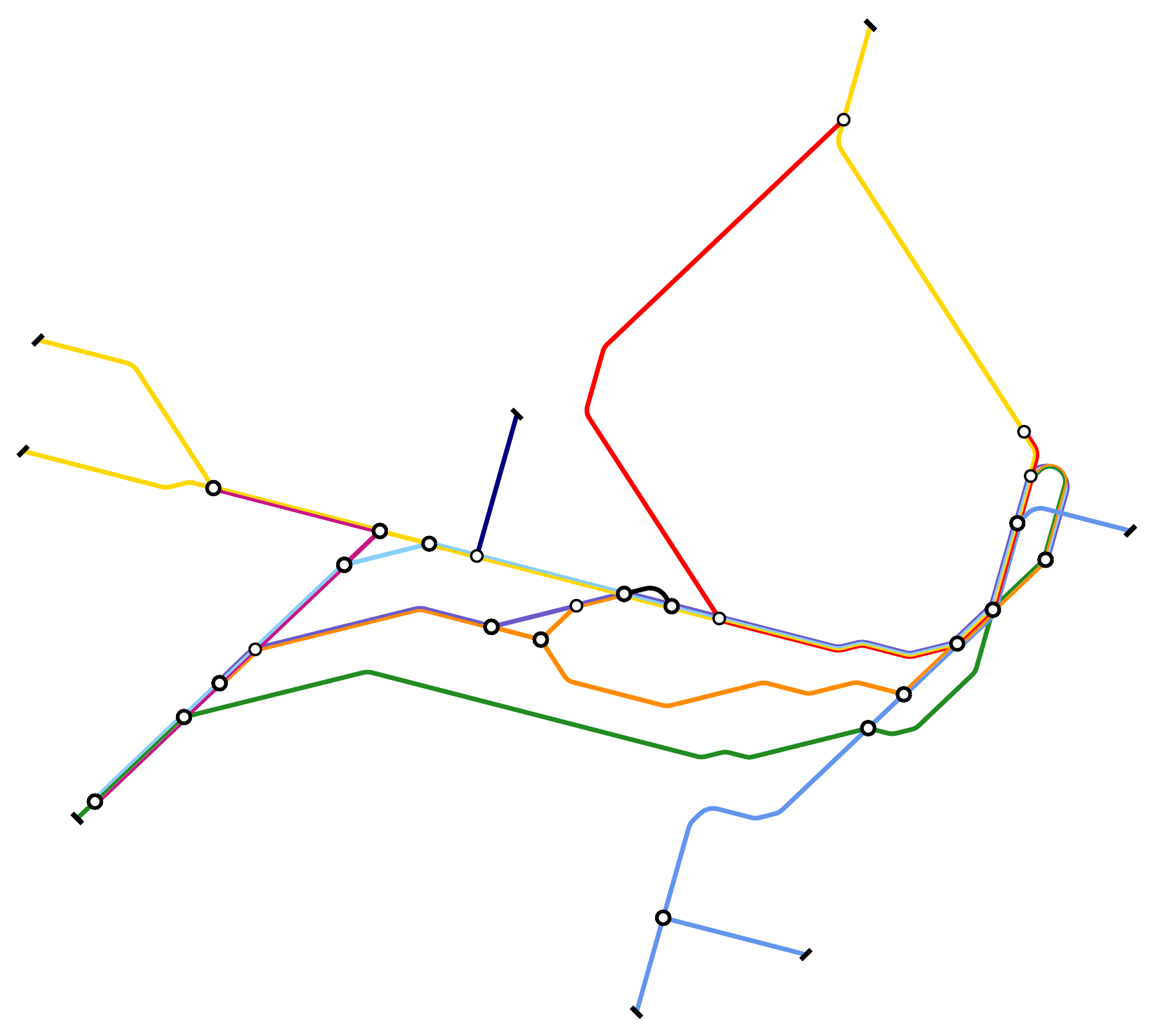} \\
		(g) 5-A & (h) 5-R & (i) 5-I \\[6pt]
	\end{tabular}
	\caption{Examples of Sydney generated with objective function weights $(f_1, f_2, f_3) = (10, 5, 1)$. Rows are $k = 3, k=4, k=5$ from top to bottom, columns are aligned ($k$-A), regular ($k$-R) and irregular ($k$-I) orientation system from left to right.}\label{fig:ap_sydney1051}
\end{figure}


\begin{thebibliography}{10}
	\providecommand{\url}[1]{\texttt{#1}}
	\providecommand{\urlprefix}{URL }
	\providecommand{\doi}[1]{https://doi.org/#1}
	
	\bibitem{bmrs-assps-16}
	Buchin, K., Meulemans, W., van Renssen, A., Speckmann, B.: Area-preserving
	simplification and schematization of polygonal subdivisions. ACM Trans.
	Spatial Algorithms and Systems  \textbf{2}(1),  2:1--2:36 (2016).
	\doi{10.1145/2818373}
	
	\bibitem{dgnpr-dsep-13}
	Delling, D., Gemsa, A., Nöllenburg, M., Pajor, T., Rutter, I.: On d-regular
	schematization of embedded paths. Comput. Geom. Theory Appl.
	\textbf{47}(3A),  381--406 (2014). \doi{10.1016/j.comgeo.2013.10.002}
	
	\bibitem{dg-popda-13}
	Duncan, C.A., Goodrich, M.T.: Planar orthogonal and polyline drawing
	algorithms. In: Tamassia, R. (ed.) Handbook of Graph Drawing and
	Visualization, chap.~7, pp. 223--246. CRC Press (2013)
	
	\bibitem{g-mbum-94}
	Garland, K.: Mr Beck's Underground Map. Capital Transport Publishing (1994)
	
	\bibitem{mg-psmml-07}
	Merrick, D., Gudmundsson, J.: Path simplification for metro map layout. In:
	Kaufmann, M., Wagner, D. (eds.) Graph Drawing (GD'06). LNCS, vol.~4372, pp.
	258--269. Springer (2007). \doi{10.1007/978-3-540-70904-6\_26}
	
	\bibitem{mels-lam-95}
	Misue, K., Eades, P., Lai, W., Sugiyama, K.: Layout adjustment and the mental
	map. J. Visual Languages and Computing  \textbf{6}(2),  183--210 (1995).
	\doi{10.1006/jvlc.1995.1010}
	
	\bibitem{n-lswro-99}
	Neyer, G.: Line simplification with restricted orientations. In: Dehne, F.K.,
	Gupta, A., Sack, J.R., Tamassia, R. (eds.) Algorithms and Data Structures
	(WADS'99). LNCS, vol.~1663, pp. 13--24. Springer (1999).
	\doi{10.1007/3-540-48447-7\_2}
	
	\bibitem{nh-aflnm-18}
	Niedermann, B., Haunert, J.H.: An algorithmic framework for labeling network
	maps. Algorithmica  \textbf{80},  1493--1533 (2018).
	\doi{10.1007/s00453-017-0350-0}
	
	\bibitem{nielsen2014optimal}
	Nielsen, F., Nock, R.: Optimal interval clustering: Application to bregman
	clustering and statistical mixture learning. IEEE Signal Processing Letters
	\textbf{21}(10),  1289--1292 (2014)
	
	\bibitem{n-samlm-14}
	Nöllenburg, M.: A survey on automated metro map layout methods. In: Schematic
	Mapping Workshop. Essex, UK (2014)
	
	\bibitem{nw-dlhqm-11}
	Nöllenburg, M., Wolff, A.: Drawing and labeling high-quality metro maps by
	mixed-integer programming. {IEEE} Trans. Visualization and Computer Graphics
	\textbf{17}(5),  626--641 (2011). \doi{10.1109/TVCG.2010.81}
	
	\bibitem{r-wytesd-14}
	Roberts, M.J.: What's your theory of effective schematic map design? In:
	Schematic Mapping Workshop. Essex, UK (2014)
	
	\bibitem{rgl-pvpidbomsrusmmitd-16}
	Roberts, M.J., Gray, H., Lesnik, J.: Preference versus performance:
	Investigating the dissociation between objective measures and subjective
	ratings of usability for schematic metro maps and intuitive theories of
	design. Int. J. Human-Computer Studies  (2016).
	\doi{10.1016/j.ijhcs.2016.06.003}
	
	\bibitem{rnlhh-ovsmpmuitovas-13}
	Roberts, M.J., Newton, E.J., Lagattolla, F.D., Hughes, S., Hasler, M.C.:
	Objective versus subjective measures of paris metro map usability:
	Investigating traditional octolinear versus all-curves schematics. Int. J.
	Human-Computer Studies  \textbf{71},  363--386 (2013).
	\doi{10.1016/j.ijhcs.2012.09.004}
	
	\bibitem{w-dsms-07}
	Wolff, A.: Drawing subway maps: A survey. Informatik -- Forschung und
	Entwicklung  \textbf{22}(1),  23--44 (2007). \doi{10.1007/s00450-007-0036-y}
	
	\bibitem{wntn-scsnmtci-19}
	Wu, H.Y., Niedermann, B., Takahashi, S., Nöllenburg, M.: A survey on computing
	schematic network maps: {T}he challenge to interactivity. In: 2nd Schematic
	Mapping Workshop. Vienna, Austria (2019)
	
	\bibitem{wntrn-stlfdmhp-20}
	Wu, H.Y., Niedermann, B., Takahashi, S., Roberts, M.J., Nöllenburg, M.: A
	survey on transit map layout -- from design, machine, and human perspectives.
	Computer Graphics Forum  \textbf{39}(3) (2020)
	
\end{thebibliography}
\end{document}